\documentclass[pdflatex,sn-apa]{sn-jnl}

\usepackage{amsthm, amsfonts, amsmath,amssymb,stmaryrd}
\usepackage{enumitem}
\usepackage{graphicx,caption,tikz,comment,subcaption}
\usepackage{multirow}%
\usepackage{mathrsfs}%
\usepackage[title]{appendix}%
\usepackage{xcolor}%
\usepackage{textcomp}%
\usepackage{manyfoot}%
\usepackage{booktabs}%
\usepackage{algorithm}%
\usepackage{algorithmicx}%
\usepackage{algpseudocode}%
\usepackage{listings}%
\usepackage{hyperref}

\usetikzlibrary{arrows.meta, positioning,calc, fit}

\theoremstyle{thmstyleone}%
\newtheorem{theorem}{Theorem}
\newtheorem{lemma}[theorem]{Lemma}%
\newtheorem*{claim*}{Claim}
\newtheorem*{uconj}{Conjecture}

\theoremstyle{thmstyletwo}%
\newtheorem{example}{Example}%
\newtheorem{remark}{Remark}%

\theoremstyle{thmstylethree}%
\newtheorem{definition}{Definition}%

\newenvironment{claimproof}{\begin{proof}[Proof of claim]}{\phantom\qedhere\hfill$\blacksquare$\end{proof}}

\newcommand{\CaseAState}[4]{%
  \node[world] (#1) at #2 {};
  \node[tlabel] at ($(#1)+(0,0.24)$) {$#3$};
  \node[vlabel] at ($(#1)+(0,0.66)$) {$#4$};
}

\newcommand{\CaseARoot}[4]{%
  \node[world] (#1) at #2 {};
  \node[below=1pt of #1] {$w$};
  \node[right=2pt of #1] {$#3$};
  \node[left=4pt of #1] {$#4$};
}

\title{Classification of $\sigma$-validity in iterated announcements}
\date{\today}
\author*[1]{\fnm{Eiji} \sur{Yamada}}\email{yamada.e.112d@m.isct.ac.jp}
\affil*[1]{\orgdiv{Department of Mathematical and Computing Science}, \orgname{Institute of Science Tokyo}, \state{Tokyo}, \country{Japan}}

\abstract{In their 2018 paper, \AA gotnes, van Ditmarsch, and Wang extended the notions of success and self-refutation in public announcements to true lies, impossible lies, and $\sigma$-validity in general. Here, $\sigma$ is a finite or infinite sequence of $0$s and $1$s. For example, successful formulas and self-refuting formulas are $11$-valid and $10$-valid, respectively. They then posed a conjecture on the classification of such sequences in terms of $\sigma$-validity. In this paper, we disprove the conjecture and give corrected classifications for multi-agent K45, single-agent KD45, multi-agent KD45 with more than one agent, and multi-agent S5 after reformulating the statement more explicitly. The results indicate that there is an asymmetry between truthful announcements and false announcements: the former are stable while the latter are fragile in general. In particular, all successful formulas remain true forever while some impossible lies can be true at some point when repeatedly announced. Also, although some self-refuting formulas can become true again after following the truth pattern $10$, all $100$-valid formulas are destructive in the sense that they remain false forever once they become false. On the other hand, some true lies are fragile in the sense that truths created by lying can become false again.}
\keywords{dynamic epistemic logic, public announcements, iterated announcements, $\sigma$-validity}
\begin{document}

\maketitle

\section{Introduction}
Dynamic epistemic logic is the study of information flow, knowledge acquisition, and belief revision. Public announcements are one of the main topics in the field. A public announcement of a proposition $\varphi$ means to announce that $\varphi$ holds to a group of agents. \emph{Public announcement logic} (\emph{PAL}) was proposed \citep{plaza2007} to express public announcements, which deletes all the states where $\varphi$ is false after a public announcement of $\varphi$. 

However, in public announcement logic, we can no longer consider the truth of formulas at states that have already been eliminated. That is, when we consider the truth of a formula $\varphi$ at the pointed model $M,w$, we can no longer consider the truth at $w$ in the updated model $M_\varphi$ whenever $M,w\models\lnot\varphi$, simply because $w\notin M_\varphi$.

On the other hand, \cite{gerbrandyGroeneveld1997} proposed the logic of \emph{conscious updates} also known as \emph{believed public announcement logic} (\emph{BPAL}), which deletes all arrows that go to states where $\varphi$ is false. That is, in believed public announcement logic, we define the accessibility relation $R|\varphi$ of the updated model $M|\varphi$ as $w(R|\varphi)v$ iff $wRv$ and $M,v\models\varphi$ where $M=(W,R,V)$ is a (initial) model. This allows us to consider not only truthful announcements but also false announcements. In particular, we can keep track of the truth of a formula $\varphi$ at $M,w$ that becomes false at some stage when $\varphi$ is announced repeatedly (\emph{iterated announcements} of $\varphi$).

One of the major topics in public announcements is the discussion of successful formulas. We say that a formula $\varphi$ is \emph{successful} iff whenever $\varphi$ is true, it remains true after a public announcement of $\varphi$. Formally, $\varphi$ is successful iff the formula $\varphi\to[\uparrow\varphi]\varphi$ is valid, where $[\uparrow\varphi]\psi$ means in BPAL that after a public announcement of $\varphi$, $\psi$ holds. This notion matters since ``successful'' guarantees that true information is shared with others without changing its truth, which is oftentimes the purpose of announcements.

In contrast, there are formulas that change their truth after an announcement. We say that $\varphi$ is \emph{self-refuting} iff whenever $\varphi$ is true, it becomes false after an announcement~\citep{hollidayIcard2010}, with the \emph{Moore sentence} $\varphi:=p\land\lnot\Box p$ being the most typical example~\citep{Moore1952-MOOART-2,hintikka1962}. \cite{agotnesVanDitmarschWang2018} proposed \emph{true lies} and \emph{impossible lies} as the remaining cases, where $\varphi$ is a true lie iff whenever $\varphi$ is false, it becomes true after an announcement, and $\varphi$ is an impossible lie iff whenever $\varphi$ is false, it remains false after an announcement (see Table~\ref{tab:four_notions}).

\begin{table}[htbp]
  \centering
  \begin{tabular}{|c|c|}
    \hline
    Successful & $\varphi\to [\uparrow\varphi]\varphi$\\\hline
    Self-refuting & $\varphi\to [\uparrow\varphi]\lnot\varphi$\\\hline
    True lie & $\lnot\varphi\to [\uparrow\varphi]\varphi$\\\hline
    Impossible lie & $\lnot\varphi\to [\uparrow\varphi]\lnot\varphi$\\\hline 
  \end{tabular}
  \caption{Four notions}
   \label{tab:four_notions}
\end{table}
In their analysis of true lies, \cite{agotnesVanDitmarschWang2018} also discussed iterated announcements in BPAL, and defined the notion of $\sigma$-\emph{validity} and $\sigma$-\emph{satisfiability} for a finite or infinite sequence $\sigma$ of $0$s and $1$s with length$\,\geq 2$. A formula $\varphi$ is $\sigma$-valid iff for all pointed models $M,w$, the truth values of $\varphi$ at $w$ under iterated announcements of $\varphi$ exactly follow $\sigma$ whenever the truth value of $\varphi$ at $M,w$ is the first digit of $\sigma$. For example, $11$-valid, $10$-valid, $01$-valid, and $00$-valid are exactly the same as successful, self-refuting, true lies, and impossible lies, respectively. So, the notion of $\sigma$-validity is a generalization of these four notions. We say that $\varphi$ is $\sigma$-satisfiable iff there is an initial pointed model $M,w$ that indeed realizes the truth pattern of $\sigma$ under iterated announcements of $\varphi$. This notion is used to exclude trivial cases. For example, the formula $\top$ is $01$-valid since $\lnot\top\to[\uparrow\top]\top$ is vacuously true, but not $01$-satisfiable since no pointed model $M,w$ realizes the truth pattern $01$. Furthermore, we say that $\varphi$ is \emph{non-trivially} $\sigma$-\emph{valid} iff $\varphi$ is both $\sigma$-valid and $\sigma$-satisfiable.

In \cite{agotnesVanDitmarschWang2018}, \AA gotnes, van Ditmarsch, and Wang posed the following conjecture: 
\begin{uconj}[{\citealp[Conjecture~13]{agotnesVanDitmarschWang2018}}]\label{Conjecture13}
    The following completion equivalence classes of $\sigma$-valid formulas are all non-empty and all different, and there are no other classes.
    \begin{equation*}
01^k \,(k>0) \qquad 10^k \,(k>0) \qquad 01^k0 \,(k\geq 0) \qquad 10^k1\, (k\geq 0)
    \end{equation*}

\end{uconj}
Here, the definition of \emph{completion equivalence} was given in the paper as follows:
\begin{quote}
  Given $\sigma\in \{0,1\}^\omega$ and prefix $\tau$ of $\sigma$, then $\sigma$ and $\tau$ (and $\tau$ and $\sigma$) are \emph{completion equivalent} iff for all intermediate strings $\tau'$ (for all prefixes $\tau'$ of $\sigma$ such that $\tau$ is a prefix of $\tau'$), a formula is $\sigma$-valid iff it is $\tau'$-valid. This relation is clearly an equivalence
relation. The shortest sequence of a completion equivalence class is the \emph{representative} and the longest (possibly infinite) sequence of this class is the \emph{completion}.
\end{quote}
However, this definition needs a slight modification since the relation is defined only between a fixed sequence and its prefixes, and the underlying set is unclear. Also, the meanings of ``non-empty'', ``different'', and ``there are no other classes'' in the conjecture statement are ambiguous. They further mentioned, ``we have not investigated systematically which of the conjectured $\sigma$-valid types are non-empty and for which classes of models.''\footnote{At the beginning of the section which contains Conjecture 13, they stated, ``In this section we present results and conjectures for $\sigma$-satisfiable and $\sigma$-valid formulas,
with respect to the classes K and KD45.'' However, the intended class for Conjecture 13 was not explicit.} 

In this paper, we show that the conjecture is false, and give a corrected version for multi-agent K45, single-agent KD45, multi-agent KD45 with more than one agent, and multi-agent S5 (Theorem~\ref{thm:classification theorem}) after making the relevant definitions and the conjecture statement more explicit (Definitions~\ref{def:sigma_satisfiable_and_valid} and \ref{def: equivalence_relation_on_sigma_validity}).The choice of the classes of frames K45, KD45, and S5 is to see the roles of belief consistency and factivity.

The structure of this paper is as follows.
Section~\ref{sec:Basic_definitions} gives definitions of formulas, models, and other basic concepts. Section~\ref{sec:Classification_of_sigma-validity} defines the notions of $\sigma$-satisfiability and $\sigma$-validity proposed in \cite{agotnesVanDitmarschWang2018}, and gives the corrected conjecture. Section~\ref{sec:Lemmas_for_the_classification_theorem} lists multiple lemmas for our theorem, categorizing them into collapse lemmas, nonexistence lemmas, and existence lemmas. Section~\ref{sec:Discussions} discusses the interpretation of the classification, as well as a comparison between PAL and BPAL, and possible future directions.  

A Lean formalization is available at \url{https://github.com/eiyamada/lean-paper-formalizations}.

\section{Basic definitions}\label{sec:Basic_definitions}
We first list several basic definitions. Let $\mathbf{Prop}$ be a countably infinite set of proposition letters.
\begin{definition}
  Define \textit{formulas} in multi-agent epistemic logic $\mathcal{L}$ by
  \begin{equation*}
    \varphi::=p\mid\lnot\varphi\mid\varphi\land\psi\mid\Box_i\varphi\quad(p\in \mathbf{Prop},\quad i\in G)
  \end{equation*}
\end{definition}
\begin{definition}[{\citealp[Definition~4.4]{vanDitmarschHoekKooi2007}}]
  Define \textit{formulas} in public announcement logic $\mathcal{L}_{PAL}$ by
  \begin{equation*}
    \varphi::=p\mid\lnot\varphi\mid \varphi\land\psi\mid \Box_i\varphi\mid [!\varphi]\psi\quad (p\in\mathbf{Prop},\quad i\in G)
  \end{equation*}
\end{definition}

\begin{definition}[{\citealp[Definition~1]{agotnesVanDitmarschWang2018}}]
  Define \textit{formulas} in believed public announcement logic $\mathcal{L}_{\text{BPAL}}$ by
  \begin{equation*}
    \varphi::=p\mid\lnot\varphi\mid \varphi\land\psi\mid \Box_i\varphi\mid [\uparrow\varphi]\psi\quad (p\in\mathbf{Prop},\quad i\in G)
  \end{equation*}
\end{definition}
It is known that the expressive powers of the three languages above are the same (for $\mathcal{L}$ and $\mathcal{L}_{PAL}$, see Chapter 8 of \cite{vanDitmarschHoekKooi2007}, and for $\mathcal{L}$ and $\mathcal{L}_{BPAL}$, see section 2 of \cite{agotnesVanDitmarschWang2018}). Hereafter when we simply say ``a formula'', we usually refer to a formula in $\mathcal{L}$.
\begin{definition}
  A \textit{model} is a tuple $M=(W,\{R_i\}_{i\in G},V)$ where $W$ is a non-empty set of \textit{states}, for each $i\in G$, $R_i$ is a binary relation on $W$ (\textit{accessibility relation}), and $V\colon \mathbf{Prop}\to 2^W$ is a \textit{valuation function}.
\end{definition}

Let $R^*$ be the reflexive transitive closure of a binary relation $R$.
\begin{definition}
  A model $M'=(W',\{R_i'\}_{i\in G},V')$ is a \textit{submodel} of $M=(W,\{R_i\}_{i\in G},V)$ (written $M'\subseteq M$) iff $W'\subseteq W$, $R_i'=R_i\cap (W'\times W')$ for all $i\in G$, and $V'(p)=V(p)\cap W'$ for all $p\in\mathbf{Prop}$. The \textit{generated submodel} of $M$  at $w$ is the submodel $M_w=(W_w,\{R_{iw}\}_{i\in G},V_w)$ with $W_w=\{v\in W\colon w\left(\bigcup_{i\in G}R_i\right)^*v\}$.
\end{definition}

For a formula $\varphi$ and a pointed model $M,w$, let $\llbracket\varphi\rrbracket_M:=\{w\in W\colon M,w\models\varphi\}$ be the set of states in which $\varphi$ holds.
\begin{definition}\label{def:relativized_model}
  Let $M=(W,\{R_i\}_{i\in G},V)$ be a model and $\varphi$ be a formula.
  \begin{itemize}
    \item The \emph{relativization} of $M$ to $\varphi$ under public announcement is the model $M_{|\varphi}=(W_{|\varphi}, \{R_{i|\varphi}\}_{i\in G}, V_{|\varphi})$ where $W_{|\varphi}=\llbracket\varphi\rrbracket_M$, $R_{i|\varphi}=R_i\cap (W_{|\varphi}\times W_{|\varphi})$, and $V_{|\varphi}(p)=V(p)\cap W_{|\varphi}$.
    \item The \emph{relativization} of $M$ to $\varphi$ under believed public announcement is the model $M|\varphi=(W, \{R_i|\varphi\}_{i\in G}, V)$ with $R_i|\varphi=R_i\cap (W\times \llbracket \varphi\rrbracket_M)$.
  \end{itemize}
\end{definition}
\begin{definition}
  Let $M=(W,\{R_i\}_{i\in G},V)$ be a model. Define truth as follows.
  \begin{enumerate}
    \item $M,w\models p\iff w\in V(p)$.
    \item $M,w\models\lnot\varphi\iff M,w\not\models\varphi$.
    \item $M,w\models\varphi\land\psi\iff M,w\models\varphi\text{ and }M,w\models\psi$.
    \item $M,w\models\Box_i\varphi\iff M,v\models\varphi$ for all $v$ with $wR_iv$.
    \item $M,w\models[!\varphi]\psi\iff M,w\models\varphi$ implies $M_{|\varphi}, w\models\psi$.
    \item $M,w\models[\uparrow\varphi]\psi\iff M|\varphi, w\models\psi$.
  \end{enumerate}
\end{definition}
We now define four notions in public announcements~\citep{agotnesVanDitmarschWang2018}.
\begin{definition}Let $\varphi$ be a formula.
  \begin{itemize}
    \item $\varphi$ is \textit{successful} iff $\varphi\to[\uparrow\varphi]\varphi$ is valid.
    \item $\varphi$ is \textit{self-refuting} iff $\varphi\to[\uparrow\varphi]\lnot\varphi$ is valid.
    \item $\varphi$ is a \textit{true lie} iff $\lnot\varphi\to[\uparrow\varphi]\varphi$ is valid.
    \item $\varphi$ is an \textit{impossible lie} iff $\lnot\varphi\to[\uparrow\varphi]\lnot\varphi$ is valid.
  \end{itemize}
\end{definition}

The following lemma states that in K45, the truth of ``modal atoms'' does not change before and after moving between two states.
\begin{lemma}\label{lem:modal agreement lemma}
  Let $M=(W,\{R_i\}_{i\in G},V)$ be a K45 model. If $wR_iv$, then for any formula $\varphi$ of the form $\Box_i\psi$ or $\Diamond_i\psi$, we have
  \begin{equation*}
    M,w\models\varphi\Longleftrightarrow M,v\models\varphi
  \end{equation*}
\end{lemma}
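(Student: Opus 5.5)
The plan is to show that $w$ and $v$ see exactly the same $R_i$-successors, and then read off the truth of $\Box_i\psi$ and $\Diamond_i\psi$ directly from the truth clause. Nothing about $\psi$ is needed, so no induction on formulas is required.

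The key step is the set equality $R_i(w)=R_i(v)$, where $R_i(x):=\{u\in W\colon xR_iu\}$. Throughout, we assume $wR_iv$.
\begin{itemize}
\item For $R_i(v)\subseteq R_i(w)$: take $u$ with $vR_iu$. Transitivity applied to $wR_iv$ and $vR_iu$ gives $wR_iu$.
\item For $R_i(w)\subseteq R_i(v)$: take $u$ with $wR_iu$. Euclideanness applied to $wR_iv$ and $wR_iu$ gives $vR_iu$.
\end{itemize}
These two properties are precisely the frame conditions characterizing K45 frames: 4 corresponds to transitivity and 5 to Euclideanness. Hence both inclusions hold for every agent $i\in G$, since in a multi-agent K45 model each $R_i$ is transitive and Euclidean.

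Now the statement follows. By the truth clause, $M,w\models\Box_i\psi$ iff $M,u\models\psi$ for all $u\in R_i(w)$. Since $R_i(w)=R_i(v)$, this holds iff $M,u\models\psi$ for all $u\in R_i(v)$, which is exactly $M,v\models\Box_i\psi$. For $\Diamond_i\psi$, read as $\lnot\Box_i\lnot\psi$, apply the $\Box_i$ case to $\lnot\psi$ and negate both sides.

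There is no real obstacle here. The only point needing care is to state explicitly that ``K45 model'' means every $R_i$ is transitive and Euclidean. Note also that seriality is not used, so the lemma holds for K45 and therefore also for KD45 and S5.
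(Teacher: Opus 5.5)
Your proof is correct and is essentially the paper's argument: the paper uses transitivity for $(\Rightarrow)$ and Euclideanness for $(\Leftarrow)$ in the $\Box_i$ case, with the roles reversed for $\Diamond_i$, which is exactly your two inclusions $R_i(v)\subseteq R_i(w)$ and $R_i(w)\subseteq R_i(v)$. Packaging the argument as the set equality $R_i(w)=R_i(v)$ is harmless, and it in fact matches how the lemma is later invoked in the proof of Lemma~\ref{lem:01k-valid_but_not_01kplus1-valid}.
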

\begin{proof}
  For $\varphi=\Box_i\psi$, $(\Rightarrow)$ follows from transitivity and $(\Leftarrow)$ follows from Euclideanness. For $\varphi=\Diamond_i\psi$, the order of the properties is reversed.
\end{proof}

\section{Classification of $\sigma$-validity}\label{sec:Classification_of_sigma-validity}
For a finite set $\Sigma$, let $\Sigma^*=\{a_1\cdots a_n\colon n\geq 0\text{ and } a_i\in \Sigma\text{ for all }1\leq i\leq n\}$ be the set of sequences of finite length from $\Sigma$, and $\Sigma^\omega=\{a_1a_2\cdots\colon a_i\in\Sigma\text{ for all }i\geq 1\}$ be the set of sequences of length $\omega$ from $\Sigma$. Furthermore, for each $n\geq 0$, let $\Sigma_{\geq n}=\{\sigma\in\Sigma^*\colon |\sigma|\geq n\}\cup \Sigma^\omega$ be the set of finite or infinite sequences of length $\geq n$ from $\Sigma$.

Given $\sigma\in\{0,1\}^*$ and $1\leq k\leq |\sigma|$, $\sigma_k$ denotes the $k$th digit of $\sigma$ and $\sigma|k$ the prefix consisting of the first $k$ elements of $\sigma$. We abuse notation and also view $\sigma_k$ as a function $\sigma_k\colon \mathcal{L}\to\mathcal{L}$ on formulas such that
\begin{equation*}
  \sigma_k(\varphi)=
  \begin{cases}
    \varphi      & \text{if $\sigma_k=1$} \\
    \lnot\varphi & \text{if $\sigma_k=0$}
  \end{cases}
\end{equation*}
We state Definition~4 of \cite{agotnesVanDitmarschWang2018} (in a slightly simpler but equivalent way).
\begin{definition}\label{def:sigma_satisfiable_and_valid}
  Let $\sigma\in\{0,1\}^*$ with $n=|\sigma|\geq 2$ and $\varphi$ be a formula. $\varphi$ is $\sigma$-\emph{satisfiable} iff the formula
  \begin{equation*}
    \sigma_1(\varphi)\land\bigwedge_{k=2}^n[\uparrow\varphi]^{k-1}\sigma_k(\varphi)
  \end{equation*}
  is satisfiable where $[\uparrow\varphi]^n$ is $[\uparrow\varphi]$ repeated $n$ times. $\varphi$ is $\sigma$-\emph{valid} iff
  \begin{equation*}
    \sigma_1(\varphi)\to\bigwedge_{k=2}^n[\uparrow\varphi]^{k-1}\sigma_k(\varphi)
    \end{equation*}
  is valid.
  For $\sigma\in\{0,1\}^\omega$, $\varphi$ is $\sigma$-\textit{satisfiable} iff there is a model $M,w$ such that for all $n\geq 0$, $M|^n\varphi,w\models\sigma_{n+1}(\varphi)$.
  $\varphi$ is $\sigma$-valid iff $\varphi$ is $\sigma|k$-valid for all $k\geq 2$. $\varphi$ is \textit{non-trivially} $\sigma$-\textit{valid} iff $\varphi$ is both $\sigma$-valid and $\sigma$-satisfiable.
\end{definition}
\begin{example}\label{ex:sigma-validity}
  If $|\sigma|=2$, $\varphi$ is $\sigma$-valid iff $\sigma_1(\varphi)\to[\uparrow\varphi]\sigma_2(\varphi)$ is valid. In particular, $\varphi$ is 01-valid iff $\lnot\varphi\to[\uparrow\varphi]\varphi$ is valid (i.e., $\varphi$ is a true lie).

  The formula $p\lor\Box p$ is $01^\omega$-valid in K45: Suppose that $M,w\models\lnot(p\lor\Box p)$. Then, we have $M,w\models \lnot p\land\lnot\Box p$. Take any $v\in R|(p\lor\Box p)(w)$. Then, $M,v\models p\lor\Box p$ but by Lemma~\ref{lem:modal agreement lemma}, we also have $M,v\models\lnot\Box p$. Thus, we must have $M,v\models p$, so $M|(p\lor\Box p),w\models\Box p$ and hence $M|^k(p\lor\Box p),w\models(p\lor\Box p)$ for all $k\geq 1$.
Also, $p\lor\Box p$ is $01^\omega$-satisfiable in K45 since for the S5 (hence K45) model $M=(\{w,v\}, W\times W, V)$ with $V(p)=\{v\}$, we have $M,w\models\lnot (p\lor\Box p)$ and $M|^k(p\lor\Box p),w\models p\lor\Box p$ for all $k\geq 1$. Therefore, $p\lor\Box p$ is non-trivially $01^\omega$-valid in K45\footnote{$p\lor\Box p$ (``$p$ or an agent believes $p$'') is considered the most fundamental example of a true lie much the same as the Moore sentence $p\land\lnot\Box p$ (``$p$ but an agent does not believe $p$'') is the most fundamental example of a self-refuting formula.}.

  If $|\sigma|=3$, $\varphi$ is $\sigma$-valid iff $\sigma_1(\varphi)\to([\uparrow\varphi]\sigma_2(\varphi)\land[\uparrow\varphi][\uparrow\varphi]\sigma_3(\varphi))$ is valid. In particular, $\varphi$ is $011$-valid iff $\lnot\varphi\to([\uparrow\varphi]\varphi\land[\uparrow\varphi][\uparrow\varphi]\varphi)$ is valid.
\end{example}
Now, let
\begin{equation*}
  S=\{\sigma\in\{0,1\}_{\geq 2}\colon \text{There is a non-trivially }\sigma\text{-valid formula}\}.
\end{equation*}
We modify the notion of completion equivalence as we explained in Introduction.
\begin{definition}\label{def: equivalence_relation_on_sigma_validity}
  For $\sigma\in S$, let $\text{Val}(\sigma)=\{\varphi\in\mathcal{L}\colon \varphi\text{ is }\sigma\text{-valid}\}$. Define the equivalence relation $\sim$ on $S$ by
  \begin{equation*}
    \sigma\sim\tau\iff \text{Val}(\sigma)=\text{Val}(\tau).
  \end{equation*}
\end{definition}
We now state our classification theorem, which is a corrected version of Conjecture 13 in \cite{agotnesVanDitmarschWang2018}.
\begin{theorem}[Classification theorem]\label{thm:classification theorem}\hfill
  \begin{enumerate}[label=\textup{(\arabic*)}]
    \item\label{itm:classification_k45} In multi-agent K45:
          \begin{equation*}
            S=[0^\omega]\sqcup \bigsqcup_{k\geq 1}[01^k]\sqcup[01^\omega]\sqcup[10^\omega]\sqcup[1^\omega].
          \end{equation*}
    \item\label{itm:classification_single_kd45} In single-agent KD45:
          \begin{equation*}
            S=[0^\omega]\sqcup \bigsqcup_{k\geq 1}[01^k]\sqcup[01^\omega]\sqcup[10]\sqcup[10^\omega]\sqcup[101^\omega ]\sqcup[1^\omega].
          \end{equation*}
          \item\label{itm:classification_multi_kd45} In multi-agent KD45 with $|G|\geq 2$ and multi-agent S5:
\begin{equation*}
            S=[0^\omega]\sqcup\bigsqcup_{k\geq 2}[0^k]\sqcup\bigsqcup_{k\geq 1}[01^k]\sqcup[01^\omega]\sqcup[10]\sqcup[10^\omega]\sqcup[101^\omega]\sqcup[1^\omega].
          \end{equation*}
        \end{enumerate}
\end{theorem}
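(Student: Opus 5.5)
The proof splits into three kinds of statements, which I will establish as separate lemmas and then assemble.

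\emph{Collapse lemmas} show that certain sequences are $\sim$-equivalent. For example, in K45 every $\sigma$ beginning with $1$ that lies in $S$ has $\text{Val}(\sigma)=\text{Val}(1^\omega)$ or $\text{Val}(10^\omega)$.
\emph{Nonexistence lemmas} show that certain sequences are not in $S$, that is, that no formula is non-trivially $\sigma$-valid. For instance, $11$-validity forces $1^\omega$, and in K45 there is no non-trivially $10$-valid formula that is not $10^\omega$-valid.
\emph{Existence lemmas} give, for each listed class, a witness formula that is non-trivially valid for that class but not for any other listed class. This shows the classes are non-empty and pairwise distinct.

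The basic tool throughout is Lemma~\ref{lem:modal agreement lemma}. In K45, after an announcement the set $R_i|\varphi(w)$ consists of states whose modal atoms agree with those of $w$ in the old model. I will exploit this by normalizing $\varphi$ into a disjunction of conjunctions of literals and $i$-modal atoms. The truth of $\varphi$ at $w$ after $n$ announcements then depends only on the propositional valuation at $w$ and on which successor states survive. Since successors of successors are again successors (transitivity and Euclideanness), the surviving set $R_i|^n\varphi(w)$ is a decreasing chain of subsets of $R_i(w)$. Moreover, the evaluation of $\varphi$ at those successors is governed by the same modal atoms.

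The key steps, in order, are as follows.

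(a) \textbf{Truthful announcements are stable.} If $\varphi$ is true at $w$ at stage $n$ and also at stage $n+1$, I will show that it is true forever. The reason is that after a truthful repetition the surviving successor sets reach a fixed point, by the monotonicity argument on $R_i|^n\varphi(w)$. This collapses every sequence containing $11$ onto the tail $1^\omega$, giving $[1^\omega]$ and $[01^\omega]$, and shows that $10\cdots$ followed by a $1$ cannot recur to $0$.

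(b) \textbf{Destructiveness.} I will show that $100$-validity forces $10^\omega$. After two consecutive falsities the successor sets are already stable, so $\varphi$ stays false.

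(c) \textbf{Patterns after an initial $0$.} I will analyze sequences starting with $0$. A lie can produce a finite run of $1$s before returning to $0$, which gives the classes $[01^k]$: $01^k$-valid but not extendable, with the next digit free. Runs of $0$s longer than one cannot be forced to be followed by a change. This yields $[0^\omega]$ in K45 and in single-agent KD45, because there $\Box\bot$-like dead ends or the single relation let the falsity stabilize.

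(d) \textbf{Differences between the frame classes.} Seriality (D) gives $[10]$ and $[101^\omega]$. In KD45 an announcement that would empty $R(w)$ leaves $w$ with no successors, and $\Diamond\top$-type subformulas flip, so the Moore-like formula can come back true. In K45 these are absorbed into $[10^\omega]$, because a formula must already account for dead-end states. With two agents, or in S5, one agent's relation can count down one step per announcement through the other agent's structure. This gives the classes $[0^k]$ for $k\ge 2$, which are $0^k$-valid but not $0^{k+1}$-valid.

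For each class I will give a witness and a model realizing it. Examples are $p\lor\Box p$ for $01^\omega$, suitable nested formulas for $01^k$, the Moore sentence variants for $10^\omega$ and $10$, and, for $0^k$, a formula built as a depth-$k$ alternation $\Box_a\Box_b\cdots$ evaluated on a chain model.

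The main obstacle will be the nonexistence and collapse direction: proving that no other sequences occur. This requires a uniform argument about how the surviving successor sets evolve under repeated announcement of an arbitrary formula. I will first try to prove in single-agent K45/KD45 that the sequence of truth values is eventually periodic with period $1$ after at most three steps, using the normal form. I will then extend this to multiple agents, where nesting depth bounds the length of the transient $0^k$ behaviour but does not rule it out.
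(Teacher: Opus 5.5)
Your proposal has genuine gaps, and one step contradicts the theorem. In (a), truth of $\varphi$ at $w$ at stages $n$ and $n+1$ does not decide whether the successors of $w$ survive. That depends on the truth of $\varphi$ at the successors themselves. The witness of Lemma~\ref{lem:01k-valid_but_not_01kplus1-valid} uses a single modality. On the S5 model with $R=W\times W$ given there, its truth value at $w_r$ runs $01^k01^\omega$, so for $k\geq 2$ a $11$ is followed by a $0$. Your conclusion that every sequence containing $11$ collapses to a $1^\omega$ tail would make every $011$-valid formula $01^\omega$-valid. That would merge all $[01^k]$ with $k\geq 2$ into $[01^\omega]$, against the statement you are proving. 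The same example refutes your planned main lemma that single-agent K45/KD45 truth sequences are constant after at most three steps. The transient at $w_r$ has length $k+2$, unbounded in $k$, and any uniform bound would again collapse $[01^k]$ for large $k$. The justification in (b) fails for the same reason: two consecutive falsities at a point do not freeze the successor sets, since the $0^k$ witnesses produce $0^k1^\omega$ at a point. The collapses hold only because validity is a global hypothesis. $11$-validity gives $\llbracket\varphi\rrbracket_M\subseteq\llbracket\varphi\rrbracket_{M|\varphi}$ in every model, hence $M|^2\varphi=M|\varphi$. $10$-validity gives $\llbracket\varphi\rrbracket_M\cap\llbracket\varphi\rrbracket_{M|\varphi}=\varnothing$, so every arrow is deleted after two updates; this yields $100\Rightarrow10^\omega$ and $101\Rightarrow101^\omega$. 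In K45, which is closed under updates, reapplying $10$-validity (resp.\ $00$-validity) to the updated models gives $10\Rightarrow10^\omega$ (resp.\ $00\Rightarrow0^\omega$).

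The exclusion direction, which is the substance of ``there are no other classes'', is not argued. Saying in (c) that after $01^k$ ``the next digit is free'' and that runs of $0$s ``cannot be forced to be followed by a change'' only restates the claim. The claim is that no formula is non-trivially $01^k0$-valid ($k\geq 1$) or $0^k1$-valid ($k\geq 2$). The missing idea is to combine the finite model property with a second application of the validity hypothesis. You apply it at a point where $\varphi$ has just become false, inside a model still in the frame class. In K45, a $01^k0$-valid $\varphi$ would then produce $0(1^k0)^\omega$ at $w$ in a finite model, where truth values must stabilize. In single-agent KD45 you first show, using $01^k0$-validity itself, that seriality survives the updates. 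In S5 updates destroy reflexivity, so you pass to $M\restriction A_n$ with $A_n=\bigcap_{j<n}\llbracket\varphi\rrbracket_{M|^j\varphi}$. This submodel is S5, and its $m$-th update agrees with $M|^{n+m}\varphi$ on $A_n$. Reapplying validity at a point of $A_n\setminus A_{n+1}$ gives infinitely many strict decreases in a finite model. In multi-agent KD45 with $|G|\geq 2$, seriality genuinely fails after updates, and the $[0^k]$ witness exploits exactly this. There the paper needs three ingredients: a finite unravelling of depth $md(\varphi)$; a point of maximal rank (resp.\ maximal rank and failure time) whose generated submodel stays KD45; and a reduction of the top rank to single-agent KD45 via $\varphi^\Delta$. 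Your remark that nesting depth bounds the transient supplies none of this. Finally, you leave the witnesses for $[01^k]$ and $[0^k]$ unspecified. The hard part there is proving $01^k$- and $0^k$-validity over all models, not exhibiting one chain model. Also, $[0^k]$ must exist in single-agent S5, where no $a/b$ alternation is available; the paper's witness there uses a single modality and relies on reflexivity.
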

\begin{remark}
  The single-agent KD45 case differs from the multi-agent K45 case only in that the former has the classes $[10]$ and $[101^\omega]$. The multi-agent KD45 with $|G|\geq 2$ and multi-agent S5 cases differ from the single-agent KD45 case only in that the former have the classes $\bigsqcup_{k\geq 2}[0^k]$. 
  
  We check which of the equivalence classes each sequence of length 2 belongs to. First, $00\in [0^\omega]$ in multi-agent K45 and single-agent KD45 by Lemma~\ref{lem:00-validity_collapse} while $00\in [00]$ in multi-agent KD45 with $|G|\geq 2$ and multi-agent S5. Clearly, $01\in [01]$ in all the classes of frames. Also, $10\in [10^\omega]$ in multi-agent K45 while $10\in [10]$ in multi-agent KD45 and S5 by Lemma~\ref{lem:10k-validity_collapse}. Finally, $11\in [1^\omega]$ by Lemma~\ref{lem:11-validity_collapse}.  
\end{remark}
\begin{proof}
  \begin{enumerate}[label=(\arabic*)]
    \item
          \smallskip
          \noindent\textbf{Well-definedness} We first show that the displayed equivalence classes are well-defined. For $0^\omega$, $\varphi:=\bot$ is $0^\omega$-valid and $0^\omega$-satisfiable, so $0^\omega$ is indeed in $S$. For $1^\omega$, take $\varphi:=\top$. For $10^\omega$, take $\varphi:=p\land\lnot\Box_i p$. This is 10-valid hence $10^\omega$-valid by Lemma~\ref{lem:10k-validity_collapse} and $10^\omega$-satisfiable by the S5 (hence K45) model $M=(W,\{R_i\}_{i\in G},V)$ where $W=\{w,v\}$, $R_i=W\times W$ for all $i\in G$, and $V(p)=\{w\}$. 
          
          For $01^k\,(k\geq 1)$, follows from Lemma~\ref{lem:01k-valid_but_not_01kplus1-valid}. For $01^\omega$, $\varphi:=p\lor\Box_i p$ is non-trivially $01^\omega$-valid by Example~\ref{ex:sigma-validity}.

          \smallskip
          \noindent
          \textbf{Equality} Next, we show the equality. $(\supseteq)$ is clear.
          $(\subseteq)$: Take any $\sigma\in S$. Note that $\sigma$ has length $\geq 2$.

          If $\sigma$ starts with 00, $\sigma\in [0^\omega]$: In fact, let $\varphi$ be any formula.  If $\varphi$ is $\sigma$-valid, then it is $00$-valid hence $0^\omega$-valid by Lemma~\ref{lem:00-validity_collapse}. Conversely, suppose that $\varphi$ is $0^\omega$-valid. Then, $\sigma$ cannot contain a 1, since otherwise contradicts $\sigma$-satisfiability.  Thus, we have $\sigma\in [0^\omega]$.

          If $\sigma$ starts with $11$, $\sigma\in [1^\omega]$ by the same reasoning as above.

          If $\sigma$ starts with 10, we have $\sigma\in [10^\omega]$ by Lemma~\ref{lem:10k-validity_collapse}.

          If $\sigma$ starts with 01, $\sigma$ is either of the form $01^k$ for some $k\geq 1$, or $01^\omega$, or $\sigma$ has a prefix of the form $01^k0$ for some $k\geq 1$. The last case is impossible by Lemma~\ref{lem:nonexistence_of_01k0-validity}. Thus, we have either $\sigma\in [01^k]$ for some $k\geq 1$ or $\sigma\in [01^\omega]$. 

          \smallskip
          \noindent
          \textbf{Disjointness} Finally, we show that the equivalence classes are disjoint by a pair-wise check. For $[0^\omega]$ and $[1^\omega]$, $\Box_i\bot$ is $1^\omega$-valid but not $0^\omega$-valid. For $[0^\omega]$ and $[10^\omega]$, $p$ is $0^\omega$-valid but not $10^\omega$-valid. Disjointness between $[0^\omega]$ and $[01^k]$, $[0^\omega]$ and $[01^\omega]$, and $[1^\omega]$ and $[10^\omega]$ are clear. For $[1^\omega]$ and $[01^k]$, and $[1^\omega]$ and $[01^\omega]$, $p$ is $1^\omega$-valid but not $01^k$-valid nor $01^\omega$-valid. For $[10^\omega]$ and $[01^k]$, and $[10^\omega]$ and $[01^\omega]$, $p\lor\Box_i p$ is $01^k$-valid and $01^\omega$-valid but not $10^\omega$-valid since $p\lor\Box_i p$ is true forever at the K45 model $M=(W,\{R_i\}_{i\in G}, V),w$ where $W=\{w\}$, $R_i=\{(w,w)\}$ for all $i\in G$, and $V(p)=W$. For $[01^k]$ and $[01^\omega]$, follows from Lemma~\ref{lem:01k-valid_but_not_01kplus1-valid}.

    \item
          \smallskip
          \noindent\textbf{Well-definedness} For $[10]$, $\varphi:=p\land\lnot\Box_i p$ is non-trivially $10$-valid. For $[101^\omega]$, follows from Lemmas~\ref{lem:101-validity} and \ref{lem:101-validity_collapse}. The arguments are the same for the other equivalence classes.

          \smallskip
          \noindent
          \textbf{Equality}
          Take any $\sigma\in S$.

          If $\sigma$ starts with 00, 11, or 01, the same argument as \ref{itm:classification_k45} works.

          If $\sigma$ starts with 10, there are three possibilities. If $\sigma$ is 10, $\sigma\in [10]$. If $\sigma$ starts with 100, $\sigma\in[10^\omega]$ by Lemma~\ref{lem:10k-validity_collapse} so $\sigma\in [10^\omega]$. If $\sigma$ starts with 101, then $\sigma\in [101^\omega]$ by Lemma~\ref{lem:101-validity_collapse}.

          \smallskip
          \noindent
          \textbf{Disjointness} We separate the displayed equivalence classes into
          \begin{equation*}
            A=\{[0^\omega], \bigsqcup_{k\geq 1}[01^k], [01^\omega], [1^\omega]\}
          \end{equation*}
          and
          \begin{equation*}
            B=\{[10], [10^\omega], [101^\omega]\}.
          \end{equation*}

          For the disjointness in $A$, we can use the same witness formulas as in \ref{itm:classification_k45}.

          For the disjointness in $B$: For $[10]$ and $[10^\omega]$, there is a 10-valid formula that is not 100-valid by Lemma~\ref{lem:10k-validity_collapse}. For $[10]$, $[101^\omega]$, $p\land\lnot\Box p$ is 10-valid but not $101^\omega$-valid so they are disjoint. For $[10^\omega]$ and $[101^\omega]$, $p\land\lnot\Box p$ is $10^\omega$-valid but not $101^\omega$-valid.

          For the disjointness between $A$ and $B$: For the elements $[10],[10^\omega],[101^\omega]$ in $B$ and the elements $[0^\omega],[1^\omega]$ in $A$, $p$ is both $0^\omega$-valid and $1^\omega$-valid but not $10$-valid. For the elements $[10],[10^\omega],[101^\omega]$ in $B$ and $[01^k],[01^\omega]$ in $A$, $p\lor\Box p$ is both $01^k$-valid and $01^\omega$-valid but not $10$-valid. Therefore, the elements in $A$ and the elements in $B$ are disjoint.

    \item \smallskip\noindent
          \textbf{Well-definedness} For $[0^k]\,(k\geq 2)$, it follows from Lemmas~\ref{lem:0k_but_not_0kplus1_multi_kd45} and \ref{lem:0k-valid_but_not_0_plus1-validity} .  The arguments are the same for the other equivalence classes.

          \smallskip\noindent\textbf{Equality} Take any $\sigma\in S$.

          If $\sigma$ starts with 0, $\sigma$ is in one of the following equivalence classes:
          \begin{equation*}
            [0^k]\;(k\ge 2),\qquad [0^\omega],\qquad [01^k]\;(k\ge 1),\qquad [01^\omega].
          \end{equation*}
          In fact, if $\sigma$ starts with $01$, it is either in $[01^k]$ for some $k\geq 1$ or $[01^\omega]$ since there is no non-trivially $01^k0$-valid formula for each $k\geq 1$ by Lemmas~\ref{lem:nonexistence_of_01k0-validity} and \ref{lem:nonexistence_of_01k0-validity_multi_kd45}. Also, if $\sigma$ starts with $00$, it is either in $[0^k]$ for some $k\geq 2$ or $[0^\omega]$ since there is no non-trivially $0^k1$-valid formula for each $k\geq 2$ by Lemmas~\ref{lem:nonexistence_of_0k1-validity} and \ref{lem:nonexistence_of_0k1-validity_multi_kd45}.

          If $\sigma$ starts with 1, $\sigma$ is in one of the following equivalence classes:
          \begin{equation*}
            [10],\qquad [10^\omega],\qquad [101^\omega], \qquad [1^\omega].
          \end{equation*}
          In fact, if $\sigma$ starts with 11, $\sigma\in [1^\omega]$ by Lemma~\ref{lem:11-validity_collapse}. If $\sigma$ is $10$, $\sigma\in [10]$. If $\sigma$ starts with 101, $\sigma\in [101^\omega]$ by Lemma~\ref{lem:101-validity_collapse}. If $\sigma$ starts with 100, $\sigma\in [10^\omega]$ by Lemma~\ref{lem:10k-validity_collapse}.

          \smallskip\noindent\textbf{Disjointness} We separate the displayed equivalence classes into
          \begin{equation*}
            A=\{[0^\omega], \bigsqcup_{k\geq 1}[01^k], [01^\omega], [10], [10^\omega], [101^\omega], [1^\omega]\}
          \end{equation*}
          and
          \begin{equation*}
            B=\{\bigsqcup_{k\geq 2}[0^k]\}.
          \end{equation*}
          For the disjointness in $A$, we can use the same witnesses as \ref{itm:classification_single_kd45}. For the disjointness in $B$, use Lemmas~\ref{lem:0k_but_not_0kplus1_multi_kd45} and \ref{lem:0k-valid_but_not_0_plus1-validity}.

          It remains to compare the classes in $A$ with those in $B$.
          Fix $k\geq 2$.
          First, $[0^k]\neq [0^\omega]$ by Lemma~\ref{lem:0k_but_not_0kplus1_multi_kd45} and \ref{lem:0k-valid_but_not_0_plus1-validity}. Second, $[0^k]\neq [1^\omega]$ since $\Box_i\bot$ is $1^\omega$-valid but not $0^k$-valid. In fact, after the announcement of $\Box_i\bot$, all $i$-arrows are deleted, making $\Box_i\bot$ true. Finally, $p$ separates $[0^k]$ from all the remaining classes in $A$.
  \end{enumerate}
\end{proof}

\section{Lemmas for the classification theorem}\label{sec:Lemmas_for_the_classification_theorem}
In this section, we prove various lemmas for the classification theorem. 
\subsection{Collapse lemmas}
\begin{lemma}[00-validity collapse]\label{lem:00-validity_collapse}

  In multi-agent K45 and single-agent KD45: Every 00-valid formula is $0^\omega$-valid.
\end{lemma}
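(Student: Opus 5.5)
The plan is to iterate the single-step hypothesis. Suppose $\varphi$ is $00$-valid and $M,w\models\lnot\varphi$. Applying $00$-validity to $M,w$ gives $M|\varphi,w\models\lnot\varphi$. To repeat this step we need $M|\varphi$, or at least a part of it that determines truth at $w$, to lie again in the frame class. So the proof reduces to a closure property of the class under believed announcements, followed by an induction on $n$ showing $M|^n\varphi,w\models\lnot\varphi$ for all $n\geq 0$.

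\textbf{Multi-agent K45.} Here the class is closed under arbitrary arrow deletions of the form $R_i|\varphi=R_i\cap(W\times X)$ with $X=\llbracket\varphi\rrbracket_M$.
\begin{itemize}
\item Transitivity: if $w(R_i|\varphi)v$ and $v(R_i|\varphi)u$, then $wR_iu$ and $u\in X$, so $w(R_i|\varphi)u$.
\item Euclideanness: if $w(R_i|\varphi)v$ and $w(R_i|\varphi)u$, then $vR_iu$ and $u\in X$, so $v(R_i|\varphi)u$.
\end{itemize}
Hence every $M|^n\varphi$ is K45. The induction step applies $00$-validity to the K45 pointed model $M|^n\varphi,w$, using the inductive hypothesis $M|^n\varphi,w\models\lnot\varphi$.

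\textbf{Single-agent KD45.} This is the main obstacle, because seriality can be destroyed: $R|\varphi(w)$ is empty when $R(w)\cap\llbracket\varphi\rrbracket_M=\emptyset$. I would use two facts.
\begin{itemize}
\item Truth is invariant under generated submodels.
\item Generated submodels commute with the update, in the sense that $(M|\varphi)_w$ and $(M_w)|\varphi$ agree on the states reachable from $w$ and on the truth of $\varphi$ there.
\end{itemize}
With one agent in KD45, $C:=R(w)$ is a nonempty cluster: every $v\in C$ satisfies $R(v)=C$ by Lemma~\ref{lem:modal agreement lemma}-style reasoning (transitivity plus Euclideanness). The states reachable from $w$ are $\{w\}\cup C$.

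I would then do the following case analysis at each stage.
\begin{itemize}
\item \emph{Case $C\cap\llbracket\varphi\rrbracket\neq\emptyset$.} In the generated submodel of $M|\varphi$ at $w$, both $w$ and every point of $C$ see the nonempty set $C\cap\llbracket\varphi\rrbracket$. So that generated submodel is serial. It is also transitive and Euclidean by the K45 argument above, hence KD45, and $00$-validity applies to it.
\item \emph{Case $C\cap\llbracket\varphi\rrbracket=\emptyset$.} Every state reachable from $w$ in $M$ has no $\varphi$-successors. Thus in $M|\varphi$ the generated submodel at $w$ has no arrows at all.
\end{itemize}
Once the generated submodel at $w$ is arrowless, every further announcement deletes nothing. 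So $M|^m\varphi$ and $M|\varphi$ have the same generated submodel at $w$ for all $m\geq 1$, and $\lnot\varphi$, already established at stage $1$ by $00$-validity applied to the KD45 model $M$, persists forever.

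\textbf{Induction.} The invariant I would carry is: the generated submodel of $M|^n\varphi$ at $w$ is either KD45 or arrowless, and in either case $\lnot\varphi$ holds at $w$. Both cases propagate to stage $n+1$ as shown above. The delicate point is verifying that the arrowless case is absorbing and that the generated-submodel reduction commutes with iterated updates. I expect both to be routine once it is noted that $\llbracket\varphi\rrbracket$ restricted to the generated part is computed inside that part.
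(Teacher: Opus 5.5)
Your proposal is correct and follows essentially the same route as the paper: closure of K45 under arrow deletion in the multi-agent case, and for single-agent KD45 a split on whether $R|\varphi(w)$ is empty (the generated submodel at $w$ freezes) or nonempty (the generated submodel at $w$ is again KD45, since all reachable points share the successor set). Your explicit ``KD45 or arrowless'' invariant and the observation that generated submodels commute with updates just spell out the iteration that the paper's proof leaves implicit.
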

\begin{proof}
  Let $\varphi$ be any 00-valid formula and $M,w$ be any model. If $M,w\models\lnot\varphi$, we have $M|\varphi,w\models\lnot\varphi$ by 00-validity.

  \smallskip\noindent
  \textbf{Multi-agent K45}
  Since $M|\varphi$ is again a K45 model, we must have $M|^2\varphi,w\models\lnot\varphi$. Continuing this reasoning, we have that $\varphi$ is $0^\omega$-valid.

  \smallskip\noindent
  \textbf{Single-agent KD45}
  If $R|\varphi(w)=\varnothing$, the model never changes so $\varphi$ is $0^\omega$-valid. If $R|\varphi(w)\neq\varnothing$, the generated submodel $M_w$ is again KD45 since $R|\varphi(w)=R|\varphi(v)$ for all $v\in M_w$. Thus, again we have $M|^n\varphi,w\models\lnot\varphi$ for all $n\geq 2$ hence $\varphi$ is $0^\omega$-valid.
\end{proof}

\begin{lemma}[11-validity collapse]\label{lem:11-validity_collapse}

  In multi-agent K45, KD45, and S5: Every 11-valid formula is $1^\omega$-valid.
\end{lemma}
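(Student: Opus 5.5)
The plan is to show that, for an 11-valid formula, the model stabilises after a single announcement: $M|^2\varphi = M|\varphi$. With this there is no need to worry about whether the class of frames is closed under the update $M\mapsto M|\varphi$. That worry is real in KD45 and S5, where deleting arrows can destroy seriality or reflexivity, so a naive induction that reapplies 11-validity to $M|\varphi$ would fail there.

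\textbf{Step 1.} Fix a model $M=(W,\{R_i\}_{i\in G},V)$ in the relevant class (K45, KD45, or S5) and an 11-valid formula $\varphi$. Since $M$ itself belongs to the class, 11-validity applies to every state of $M$. For each $v\in W$ it gives that $M,v\models\varphi$ implies $M|\varphi,v\models\varphi$. In set terms,
\begin{equation*}
  \llbracket\varphi\rrbracket_M\subseteq\llbracket\varphi\rrbracket_{M|\varphi}.
\end{equation*}

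\textbf{Step 2.} Compute the second update from Definition~\ref{def:relativized_model}:
\begin{equation*}
  R_i|\varphi|\varphi = R_i\cap\bigl(W\times\llbracket\varphi\rrbracket_M\bigr)\cap\bigl(W\times\llbracket\varphi\rrbracket_{M|\varphi}\bigr)
  = R_i\cap\bigl(W\times(\llbracket\varphi\rrbracket_M\cap\llbracket\varphi\rrbracket_{M|\varphi})\bigr).
\end{equation*}
By Step 1 the intersection $\llbracket\varphi\rrbracket_M\cap\llbracket\varphi\rrbracket_{M|\varphi}$ is just $\llbracket\varphi\rrbracket_M$. Hence $R_i|\varphi|\varphi=R_i|\varphi$ for every $i\in G$. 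The domain and valuation never change under believed announcements, so $M|^2\varphi=M|\varphi$. By induction, $M|^n\varphi=M|\varphi$ for all $n\geq 1$.

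\textbf{Step 3.} Now suppose $M,w\models\varphi$. By 11-validity, $M|\varphi,w\models\varphi$. By Step 2, $M|^n\varphi,w\models\varphi$ for every $n\geq 1$. Thus $\varphi$ is $1^k$-valid for every $k\geq 2$, and therefore $1^\omega$-valid by Definition~\ref{def:sigma_satisfiable_and_valid}.

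The main point to check is Step 1. It uses 11-validity only on the original model $M$, never on $M|\varphi$, and this is exactly what avoids the closure problem for KD45 and S5. The argument in fact uses nothing about the frame class beyond $M$ belonging to it, so it covers multi-agent K45, KD45, and S5 uniformly.
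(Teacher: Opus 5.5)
Your proof is correct and follows essentially the same route as the paper's. Applying 11-validity only to the original model gives $\llbracket\varphi\rrbracket_M\subseteq\llbracket\varphi\rrbracket_{M|\varphi}$, so $R_i|^2\varphi=R_i|\varphi$ and the model is fixed after one update; this is why the paper's argument also needs no closure of KD45 or S5 under updates.
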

\begin{proof}
  Suppose that $\varphi$ is $11$-valid in multi-agent K45, KD45, or S5 and let $T_n=\llbracket\varphi\rrbracket_{M|^n\varphi}$. Then, $T_0\subseteq T_1$, so for any K45, KD45, or S5 model $M=(W,\{R_i\}_{i\in G},V)$ and $i\in G$, we have
  \begin{equation*}
    R_i|^2\varphi = R_i\cap(W\times T_0)\cap(W\times T_1) = R_i\cap(W\times T_0) = R_i|\varphi.
  \end{equation*}
  Thus, $M|^2\varphi=M|\varphi$. Now, if $M,w\models\varphi$, then $M|\varphi,w\models\varphi$ by 11-validity so the truth of $\varphi$ at $w$ remains 1 forever. Thus $\varphi$ is $1^\omega$-valid.
\end{proof}

\begin{lemma}[$10^k$-validity collapse]\label{lem:10k-validity_collapse}
  In multi-agent K45: For each $k\geq 1$, every $10^k$-valid formula is $10^{k+1}$-valid. In multi-agent KD45 and S5:
  \begin{itemize}
    \item There is a non-trivially 10-valid formula that is not 100-valid.
    \item For $k\geq 2$, every $10^k$-valid formula is $10^{k+1}$-valid.
  \end{itemize}
\end{lemma}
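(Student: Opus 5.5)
The plan is to treat the three claims separately: a \emph{shift argument} for the two collapse statements, and an explicit witness built from the Moore sentence for the KD45/S5 counterexample. The shift idea is as follows. Fix $\varphi$ and a pointed model $M,w$, and write $T_n=\llbracket\varphi\rrbracket_{M|^n\varphi}$. Exactly as in the proof of Lemma~\ref{lem:11-validity_collapse}, we have
\begin{equation*}
R_i|^{n}\varphi=R_i\cap(W\times (T_0\cap\cdots\cap T_{n-1})).
\end{equation*}
So iterated announcement only ever shrinks the set of admissible targets. To prove the collapse by contraposition, assume $M,w$ realises $10^k1$. I would then try to build a new pointed model $M',w'$ that realises $10^{k-1}1$, or $10^{j}1$ for some $j<k$, contradicting $10^k$-validity. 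Equivalently, I would show that the first $0$ carries no information that is not already present at the second $0$.

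For multi-agent K45 I would argue as follows. By Lemma~\ref{lem:modal agreement lemma}, whether $\varphi$ holds at $w$ is determined by the valuation at $w$ together with the truth of modal atoms $\Box_i\psi$ at $w$. Those atoms depend only on the cluster $R_i(w)$, and every point of that cluster sees the same set. The drop $1\to0$ at stage $1$ means that deleting arrows into $W\setminus T_0$ changed some modal atoms at $w$. I would build $M'$ by taking $M|\varphi$ and adding a fresh root $w'$ that copies the valuation of $w$. For each agent, the root gets arrows into a suitable cluster, chosen so that $\varphi$ is true at $w'$ while the clusters below $w'$ evolve exactly like those of $w$ from stage $1$ onward. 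Euclideanness and transitivity only constrain the root's arrows to enter already-closed clusters, so $M'$ is again K45. Absence of seriality is what makes this free choice possible: in particular a root with no $i$-successors is allowed. Checking that $M'|^{n}\varphi$ agrees with $M|^{n+1}\varphi$ on the generated submodel below $w'$ then gives the shifted truth pattern.

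For KD45 and S5 with $k\geq 2$ I would run the same shift, but start it from the \emph{second} $0$ rather than the first. The reason the case $k=1$ fails is seriality. After the first announcement, some $R_i|\varphi(w)$ may become empty, so the model leaves the class, and $\varphi$ can then flip back to true at the next stage. Once two consecutive $0$s have occurred, I would argue (this is the step to verify carefully) that the only degenerate event, namely an $i$-cluster becoming empty, has already either happened or cannot happen any more. Concretely, at stages $1$ and $2$ the set $T_1\cap R_i(w)$ is determined already, so the model from stage $1$ onward behaves like a K45 model, and the K45 shift applies.

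For the first bullet I would look for a witness of the form $(p\land\lnot\Box_i p)\lor\theta$, where $\theta$ becomes true exactly when seriality is destroyed, for instance $\theta=\Box_i\bot$, or $\Box_j\bot$ using a second agent. Such a $\theta$ is false in every KD45 or S5 model, so $10$-validity reduces to the Moore sentence. I would then realise $101$ on the two-point S5 model already used in the proof of Theorem~\ref{thm:classification theorem}: the second announcement deletes all arrows at $w$, which makes $\theta$ true. The main obstacle I expect is the K45 shift construction itself. The difficulty is choosing the root's clusters so that $\varphi$ is true at $w'$ without disturbing the later evolution, and checking this uniformly for arbitrary $\varphi$. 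If the direct construction fails, I would instead try an induction on the modal depth of $\varphi$, using Lemma~\ref{lem:modal agreement lemma} to reduce to the Boolean combinations of $p$'s and depth-one modal atoms that occur at $w$.
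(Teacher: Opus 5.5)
There are two genuine gaps: the collapse argument is never actually carried out, and the proposed witness for the first bullet is not $10$-valid in KD45.

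\textbf{The collapse argument.} Your shift construction is left unspecified at the one point that matters. Suppose the fresh root $w'$ copies the valuation of $w$ and points into the clusters $R_i|\varphi(w)$. Then $w'$ is bisimilar to $w$ at stage $1$, so $\varphi$ is false at $w'$. You give no other choice of clusters that makes $\varphi$ true at $w'$ while still reproducing $w$'s later evolution. This is the obstacle you flag yourself, and the modal-depth fallback is not developed.

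For KD45 and S5, the step ``the K45 shift applies'' has a further problem. $\varphi$ is only assumed $10^k$-valid over \emph{serial} models, but $M|\varphi$ and your $M'$ need not be serial, so you cannot invoke validity on them.

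The missing idea is to apply $10$-validity at every state of $M$, not only at $w$:
\begin{itemize}
\item Each $v\in T_0$ satisfies $\lnot\varphi$ in $M|\varphi$, so $T_0\cap T_1=\varnothing$.
\item Your own formula then gives $R_i|^2\varphi=R_i\cap(W\times(T_0\cap T_1))=\varnothing$ for every $i$.
\item Hence $M|^n\varphi=M|^2\varphi$ for all $n\geq 2$, and the truth value at $w$ is frozen from stage $2$ on.
\item In K45, the arrowless model $M|^2\varphi$ is itself K45. If $\varphi$ held at $w$ there, $10$-validity would force it to fail after an announcement that changes nothing. So $\varphi$ is false from stage $2$ on.
\item In KD45 and S5 with $k\geq 2$, $100$-validity applied to the original serial $M,w$ gives falsity at stage $2$ directly.
\end{itemize}

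\textbf{The witness.} You infer that since $\theta$ is false in every KD45 model, $10$-validity reduces to the Moore sentence. This overlooks that the second digit is evaluated in $M|\varphi$, which need not be serial.

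Take $W=\{w,v\}$, $R_c=W\times\{v\}$ for every agent $c$, and $V(p)=\{w\}$; this model is KD45. The formula $(p\land\lnot\Box_i p)\lor\Box_i\bot$ holds at $w$ and fails at $v$. So the first announcement deletes the only $i$-arrow from $w$, $\Box_i\bot$ becomes true, and the pattern at $w$ begins $11$. The same model defeats $\theta=\Box_j\bot$.

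Your formula does work in S5, because reflexivity keeps $w$ itself as a surviving successor. For KD45 you need a conjunct that guarantees a surviving successor, such as $\Diamond_a p$. Consider $(p\land\Diamond_a p\land\Diamond_a\lnot p)\lor\Box_a\bot$:
\begin{itemize}
\item By modal agreement, the $p$-successors of $w$ satisfy the first disjunct, so they survive the first announcement.
\item Hence $\Box_a\bot$ stays false at stage $1$.
\item It becomes true only at stage $2$, when $T_0\cap T_1=\varnothing$ removes all arrows.
\end{itemize}
Your two-point S5 model then realises $101$ for this formula as well.
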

\begin{proof}
  \smallskip\noindent\textbf{Multi-agent K45}

  Suppose that $\varphi$ is $10^k$-valid for some $k\geq 1$. Suppose that $M,w\models\varphi$ for some K45 model $M,w$ and for each $n\geq 0$, let $T_n:=\llbracket \varphi\rrbracket_{M|^n\varphi}$. Since $\varphi$ is $10$-valid, we have $M|\varphi,w\models\lnot\varphi$, so $T_0\cap T_1=\varnothing$. Thus,
  \begin{equation*}
    R_i|^2\varphi=R_i\cap (W\times T_0)\cap (W\times T_1)=R_i\cap (W\times (T_0\cap T_1))=\varnothing.
  \end{equation*}
  This means $M|^2\varphi=M|^3\varphi=\cdots$ by the monotonicity of arrow-elimination, so $M|^n\varphi\models\lnot\varphi$ for all $n\geq 2$. Furthermore, we claim that $M|^2\varphi,w\models\lnot\varphi$; otherwise, $M|^2\varphi,w\models\varphi$ and since $M|^2\varphi$ is again a K45 model (this reasoning fails for KD45), we would have $M|^3\varphi,w\models\lnot\varphi$, contradicting $M|^2\varphi=M|^3\varphi$. Thus, $\varphi$ is $10^\omega$-valid.

  \smallskip\noindent\textbf{Multi-agent KD45 and S5} The proof that $10^k$-validity implies $10^{k+1}$-validity for $k\geq 2$ is the same as above except that we use 100-validity to claim $M|^2\varphi,w\models\lnot\varphi$.

  For  the existence of a formula that is non-trivially 10-valid but not 100-valid, we show that $\varphi:= (p\wedge\Diamond_a p\wedge\Diamond_a\neg p) \vee \Box_a\bot$ is such an example.\footnote{$\varphi$ is actually non-trivially $101$-valid. In fact, for $101$-validity, take any $u\in R_a|\varphi(w)$. Then, we have $M|\varphi,u\not\models\Diamond_a\lnot p$ and $M|\varphi,u\not\models\Box_a\bot$ since $M|\varphi$ is K45, so $M|\varphi,u\not\models\varphi$. Thus, $M|^2\varphi,w\models\Box_a\bot$ hence $M|^2\varphi,w\models\varphi$. For $101$-satisfiability, use the same model in the proof. This is exactly the statement of Lemma~\ref{lem:101-validity} in a later subsection.}

  To show that $\varphi$ is 10-valid, let $M,w$ be any KD45 model and suppose that $M,w\models\varphi$. Then $M,w\models p\land\Diamond_a p\land\Diamond_a\lnot p$ since $\Box_a\bot$ is false. Take any $u\in W$ with $wR_a u$. Then, $M,u\models\Diamond_a p\land\Diamond_a\lnot p$ by Lemma~\ref{lem:modal agreement lemma}, so $u\in M|\varphi$ iff $M,u\models\varphi$ iff $M,u\models p$. Thus, $M|\varphi,w\models\Box_a p$ so $M|\varphi,w\not\models\Diamond_a\lnot p$. Also, $M|\varphi,w\not\models\Box_a\bot$ by $M,w\models\Diamond_a p$ and $M,u\models\Diamond_a p\land\Diamond_a \lnot p$ for all $u\in W$ with $wR_au$. Thus, we have $M|\varphi,w\not\models\varphi$ hence $\varphi$ is 10-valid.

  To show that $\varphi$ is not 100-valid, define $M=(W, \{R_i\}_{i\in G}, V)$ by $W=\{w,u,v\}$, $R_i=W\times W$ for all $i\in G$, and $V(p)=\{w,u\}$.

  \begin{figure}[htbp]\label{fig: 10_k-validity_collapse}
    \centering
    \begin{tikzpicture}[
        >=Stealth,
        world/.style={circle, fill=black, inner sep=1.6pt},
        acc/.style={->, thick, shorten >=2pt, shorten <=2pt},
        lab/.style={font=\small},
        every loop/.style={looseness=7, min distance=8mm},scale=0.75, transform shape
      ]

      \node[world,label={[lab]left:$u$}] (u0) at (0,2) {};
      \node[world,label={[lab]right:$v$}] (v0) at (2.4,2) {};
      \node[world,label={[lab]below left:$w$}] (w0) at (1.2,0) {};

      \path[acc]
      (u0) edge[loop above] (u0)
      (v0) edge[loop above] (v0)
      (w0) edge[loop below] (w0);

      \draw[<->, thick, shorten >=2pt, shorten <=2pt] (u0) -- (v0);
      \draw[<->, thick, shorten >=2pt, shorten <=2pt] (u0) -- (w0);
      \draw[<->, thick, shorten >=2pt, shorten <=2pt] (v0) -- (w0);

      \node[lab] at (-0.30,2.42) {$\varphi$};
      \node[lab] at (0.34,2.20) {$p$};

      \node[lab] at (2.00,2.42) {$\neg\varphi$};
      \node[lab] at (2.88,2.20) {$\neg p$};

      \node[lab] at (1.56,0.28) {$p$};
      \node[lab] at (1.95,-0.08) {$\varphi$};

      \node[font=\large] at (1.2,-1.05) {$M$};

      \node[font=\Large] at (3.55,1.1) {$\overset{\varphi}{\Longrightarrow}$};

      \node[world,label={[lab]left:$u$}] (u1) at (5.0,2) {};
      \node[world,label={[lab]right:$v$}] (v1) at (7.4,2) {};
      \node[world,label={[lab]below left:$w$}] (w1) at (6.2,0) {};

      \path[acc]
      (u1) edge[loop above] (u1)
      (w1) edge[loop below] (w1)
      (v1) edge (u1)
      (v1) edge (w1);

      \draw[<->, thick, shorten >=2pt, shorten <=2pt] (u1) -- (w1);

      \node[lab] at (4.66,2.42) {$\neg\varphi$};
      \node[lab] at (5.42,2.20) {$p$};

      \node[lab] at (6.60,0.28) {$p$};
      \node[lab] at (7.02,-0.08) {$\neg\varphi$};

      \node[font=\large] at (6.2,-1.05) {$M\mid\varphi$};

      \node[font=\Large] at (8.55,1.1) {$\overset{\varphi}{\Longrightarrow}$};

      \node[world,label={[lab]left:$u$}] (u2) at (10.0,2) {};
      \node[world,label={[lab]right:$v$}] (v2) at (12.4,2) {};
      \node[world,label={[lab]below left:$w$}] (w2) at (11.2,0) {};
      \node[lab] at (11.6,0) {$\varphi$};
      \node[font=\large] at (11.2,-1.05) {$M\mid^{2}\varphi$};

    \end{tikzpicture}
  \end{figure}

  Then, we have $M,w\models\varphi$, $M|\varphi,w\models\lnot\varphi$ but $M|^2\varphi,w\models\varphi$, as can be seen from the figure. Thus, $\varphi$ is non-trivially 10-valid but not 100-valid.
\end{proof}

\begin{lemma}[101-validity collapse]\label{lem:101-validity_collapse}
  In multi-agent KD45 and S5: Every 101-valid formula is $101^\omega$-valid.
\end{lemma}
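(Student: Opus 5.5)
The plan is to reuse the arrow-elimination argument from the proof of Lemma~\ref{lem:10k-validity_collapse}. That argument shows that after two announcements of a $10$-valid formula, every accessibility relation becomes empty, so the model stops changing. Fix a formula $\varphi$ that is $101$-valid in multi-agent KD45 (resp.\ S5), a KD45 (resp.\ S5) model $M=(W,\{R_i\}_{i\in G},V)$, and a state $w$ with $M,w\models\varphi$. Put $T_n:=\llbracket\varphi\rrbracket_{M|^n\varphi}$. Since $\varphi$ is $\sigma$-valid for an infinite $\sigma$ exactly when it is valid for every finite prefix, it suffices to show $M|^n\varphi,w\models\varphi$ for all $n\geq 2$; the first two digits $1,0$ are given directly by $101$-validity.

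\textbf{Step 1: $T_0\cap T_1=\varnothing$.} $101$-validity implies $10$-validity. We apply it not only at $w$ but at every state $v$ of the same model $M$: the formula $\varphi\to[\uparrow\varphi]\lnot\varphi$ is valid on all KD45 (resp.\ S5) pointed models, and $M,v$ is one of them. Hence $v\in T_0$ implies $v\notin T_1$.

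\textbf{Step 2: the relations vanish from stage $2$ on.} By induction on $n$, using Definition~\ref{def:relativized_model}, we get $R_i|^n\varphi=R_i\cap\bigl(W\times(T_0\cap\cdots\cap T_{n-1})\bigr)$. For $n\geq 2$ this set is contained in $R_i\cap(W\times(T_0\cap T_1))=\varnothing$ for every $i\in G$. So all the models $M|^n\varphi$ with $n\geq 2$ have the same domain, the same valuation and empty relations. They are therefore identical, and $T_n=T_2$ for all $n\geq 2$.

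\textbf{Step 3: conclusion.} By $101$-validity at $M,w$ we have $w\in T_2$, hence $w\in T_n$ for all $n\geq 2$. Together with $w\in T_0$ and $w\notin T_1$, this shows that $\varphi$ is $101^k$-valid for every $k$, and thus $101^\omega$-valid.

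The only point needing care is the one flagged in Lemma~\ref{lem:10k-validity_collapse}: the updated models $M|^n\varphi$ are generally no longer KD45. This is harmless here, because we never invoke validity at an updated model. Validity is used only at the original model $M$, once per state in Step~1 and once at $w$ in Step~3. Everything afterwards is the purely syntactic stabilization of the relations in Step~2. I expect Step~1 to be the only place where one might slip, by applying $10$-validity only at $w$ instead of uniformly over all states of $M$.
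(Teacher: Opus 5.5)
Your proposal is correct and takes essentially the same approach as the paper. You apply $10$-validity uniformly over the states of $M$ to get $T_0\cap T_1=\varnothing$, which empties every relation from $M|^2\varphi$ on, so the model stabilizes and the value $1$ at stage $2$ supplied by $101$-validity persists. You also spell out two points the paper's terse proof leaves implicit: that $T_0\cap T_1=\varnothing$ requires validity at every state of $M$, not just at $w$, and that validity is never invoked at the possibly non-KD45 updated models.
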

\begin{proof}
  If $\varphi$ is $101$-valid, then $T_0\cap T_1=\varnothing$. Hence, for all $i\in G$, $R_i|^2\varphi = R_i\cap(W\times T_0)\cap(W\times T_1) = \varnothing$, so we have $M|^2\varphi=M|^3\varphi=\cdots$. Thus, $\varphi$ is $101^\omega$-valid.
\end{proof}

\subsection{Nonexistence lemmas}
\subsubsection{Nonexistence in multi-agent K45, single-agent KD45, and multi-agent S5}
\begin{lemma}[Nonexistence of non-trivially $0^k1$-validity]\label{lem:nonexistence_of_0k1-validity}
  In multi-agent S5: For all $k\geq 2$, there is no non-trivially $0^k1$-valid formula.
\end{lemma}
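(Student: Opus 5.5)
The plan is to argue by contradiction. Assume $\varphi$ is both $0^k1$-valid and $0^k1$-satisfiable in multi-agent S5 for some $k\geq 2$, witnessed by an S5 pointed model $M,w$ with $M|^n\varphi,w\models\lnot\varphi$ for $n<k$ and $M|^k\varphi,w\models\varphi$. The aim is to build another S5 pointed model $N,w$ in which $\varphi$ is initially false but whose truth pattern is a \emph{shift} of the one at $M,w$. Such a model would then satisfy, say, $0^{k-1}1$ rather than $0^k1$, contradicting $0^k1$-validity.

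Write $T_n=\llbracket\varphi\rrbracket_{M|^n\varphi}$ and $F_n=W\setminus T_n$. Validity gives global information first: every state in $F_0$ follows exactly the pattern $0^k1$. Hence $F_0\subseteq F_1\cap\cdots\cap F_{k-1}$, and
\begin{equation*}
R_i|^n\varphi=R_i\cap\bigl(W\times (T_0\cap\cdots\cap T_{n-1})\bigr)\quad\text{for }n\geq 1.
\end{equation*}
In particular, from stage $1$ on no arrow enters $F_0$, and $w$ itself only has outgoing arrows into $T_0$. So the part of $M|\varphi$ relevant at $w$ is the point $w$ together with the states of $T_0$, with the original S5 relations among them. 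This is where reflexivity of S5 matters: it is what forces $w$ to lose its self-loop, so that $w$ can see only into $T_0$.

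The candidate is the submodel $N$ of $M$ on $W'=\{w\}\cup T_0$, or alternatively on $\{w\}\cup(T_0\cap T_1)$, which is S5 because any restriction of an S5 model is S5. I would then compare the iterated updates $N|^n\varphi$ with $M|^{n+1}\varphi$ restricted to $W'$, proving by induction on $n$ that they agree on the truth of $\varphi$ at $w$ and at all states reachable from $w$. The base case compares the arrows of $N|\varphi$ out of $w$ with those of $M|\varphi$ out of $w$ (both land in $T_0$). The inductive step uses the displayed formula for $R_i|^n\varphi$. If this works, the truth sequence of $\varphi$ at $N,w$ is $0^{k-1}1\cdots$, which is not a continuation of $0^k$, so $\varphi$ is not $0^k1$-valid. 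The hypothesis $k\geq 2$ is used exactly here: it guarantees that the shifted pattern still begins with $0$, so the $0^k1$-validity hypothesis applies to $N,w$.

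The main obstacle is the shifting step. The truth of $\varphi$ at $N,w$ depends on the accessibility of $w$ inside $N$, where $w$ still has its reflexive loop and $\{w\}$ may not coincide with any stage of $M$. Proving $N,w\models\lnot\varphi$ and the stage-by-stage agreement therefore needs care. I would first try to exploit that $w\in F_1$ and that $\varphi$ is false at $w$ both with and without the self-loop. If the direct submodel does not work, the fallback is to replace $w$ by a fresh copy $w'$ in $N$, placed in the same equivalence classes as the $T_0$ states $w$ sees at stage $1$. This realises $M|\varphi$ at $w$ up to bisimulation by an S5 model, and then Lemma~\ref{lem:modal agreement lemma} would be used to control the modal atoms at $w'$.
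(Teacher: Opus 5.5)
There is a genuine gap, and it is the step you yourself flag: the shift cannot be carried out at the point $w$. Since $w\in F_0$, we have $w\notin A_1=T_0$, and at every stage $n\geq 1$ all arrows go into $A_n=T_0\cap\cdots\cap T_{n-1}$. So after the first update $w$ has lost its reflexive loop, and in general $M|\varphi,w$ is not realizable by any S5 pointed model. For instance, if $w\models p$ and all $T_0$-successors of $w$ satisfy $\lnot p$, then $M|\varphi,w\models p\land\Box_i\lnot p$, and nothing in your construction uses validity to rule this out.

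Your fresh copy $w'$ does not repair this. In an S5 model $w'$ must see itself, so every $\Box_i\psi$ at $w'$ also depends on $\psi$ at $w'$. Hence $w'$ is not bisimilar to $M|\varphi,w$, and Lemma~\ref{lem:modal agreement lemma} cannot control that. The submodel on $\{w\}\cup T_0$ (or $\{w\}\cup(T_0\cap T_1)$) is even further off. Not only does $w$ keep its loop, but $T_0$-states that see $w$ keep those arrows, whereas in $M|\varphi$ they are deleted. So even the truth values at $T_0$-states are not those of stage $1$, and "$N,w\models\lnot\varphi$" has no justification. The restrictions that do simulate a later stage faithfully are the restrictions to $A_m$. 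For $v\in A_m$ the successors are $R_i(v)\cap A_m$ both in $M\restriction A_m$ and in $M|^m\varphi$, so $(M\restriction A_m)|^j\varphi,v$ and $M|^{m+j}\varphi,v$ agree on all formulas. These restrictions never contain $w$, so you must change the evaluation point.

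Even after that change, a single shift yields no contradiction. The natural point is some $u\in A_m\setminus A_{m+1}$, which is false at stage $m$. Applying $0^k1$-validity to the S5 model $M\restriction A_m$ at $u$ only says that $u$ follows $0^k1$ from stage $m$ on, which is perfectly consistent. So you cannot expect to produce a point with pattern $0^{k-1}1$ this way. What it does give is a new strict decrease $A_{m+k-1}\supsetneq A_{m+k}$.

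The paper's argument runs as follows. The change at $w$ forces $A_{k-1}\supsetneq A_k$, and $k\geq 2$ ensures $m+k-1>m$. Iterating therefore gives infinitely many strict decreases of $\{A_n\}$. This contradicts finiteness, since by the finite model property one may take $M$ finite; equivalently, pick the last strict decrease and derive a later one. Your proposal lacks both ingredients: moving to a state inside $A_m$, and the finiteness (well-foundedness) argument. Accordingly, $k\geq 2$ is needed to make the index advance, not merely so that a shifted pattern still begins with $0$.
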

\begin{proof}
  Suppose toward a contradiction that there is a non-trivially $0^k1$-valid formula $\varphi$ for some $k\geq 2$. Then there is a finite S5 model $M,w$ such that $M,w\models\lnot\varphi,M|\varphi,w\models\lnot\varphi,\ldots,M|^{k-1}\varphi,w\models\lnot\varphi,M|^k\varphi,w\models\varphi$. Let $A_0=W$ and $A_k=\bigcap_{j<k}T_j$ for $k\geq 1$ where $T_j=\llbracket\varphi\rrbracket_{M|^j\varphi}$. Note that $\{A_k\}_{k=0}^\infty$ is a decreasing sequence and $R_i|^k\varphi=R_i\cap (W\times A_k)$ holds. Since the truth of $\varphi$ at $w$ changes when shifting from $M|^{k-1}\varphi$ to $M|^k\varphi$, we must have $A_{k-1}\supsetneq A_k$.

  Take a $u\in A_{k-1}\backslash A_k$. Then, $M|^{k-1}\varphi,u\models\lnot\varphi$ by $u\notin T_{k-1}$. Now consider the submodel $M\restriction A_{k-1}$. Since $M$ is S5, $M\restriction A_{k-1}$ is also S5, so we have 
  \begin{align*}
    (M\restriction A_{k-1}),u&\models\lnot\varphi, \,(M\restriction A_{k-1})|\varphi,u\models\lnot\varphi, \ldots,\, (M\restriction A_{k-1})|^{k-1}\varphi,u\models\lnot\varphi,\\ (M\restriction A_{k-1})|^k\varphi,u&\models\varphi 
  \end{align*}
    again by $0^k1$-validity of $\varphi$. Note that in general, the accessibility relations of $M\restriction A_n$ and $M|^n\varphi$ for each $n\geq 1$ are given by $R_i\cap (A_n\times A_n)$ and $R_i\cap(W\times A_n)$, respectively, so that \begin{equation*}
      (M\restriction A_n)|^m,v\models\psi\iff M|^{n+m}\varphi,v\models\psi
    \end{equation*}
     holds for all formulas $\psi$, $m\geq 0$, and $v\in A_n$. Thus, from $(M\restriction A_{k-1})|^{k-1}\varphi,u\models\lnot\varphi$ and $(M\restriction A_{k-1})|^k\varphi,u\models\varphi$, we get $M|^{2k-2}\varphi,u\models\lnot\varphi$ and $M|^{2k-1}\varphi,u\models\varphi$. This implies $A_{2k-2}\supsetneq A_{2k-1}$ so we can choose a $u_1\in A_{2k-2}\backslash A_{2k-1}$.

  Repeating this reasoning gives infinitely many strict decreases, contradicting that $M$ is finite. Therefore, there is no non-trivially $0^k1$-valid formula for all $k\geq 2$.
\end{proof}

\begin{lemma}[Nonexistence of non-trivially $01^k0$-validity]\label{lem:nonexistence_of_01k0-validity}
  In multi-agent K45, single-agent KD45, and multi-agent S5: For each $k\geq 1$, there is no non-trivially $01^k0$-valid formula.
\end{lemma}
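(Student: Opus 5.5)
The plan is to adapt the finiteness argument of Lemma~\ref{lem:nonexistence_of_0k1-validity}. Suppose toward a contradiction that $\varphi$ is non-trivially $01^k0$-valid for some $k\geq 1$. The satisfiability formula of Definition~\ref{def:sigma_satisfiable_and_valid} is a BPAL formula, and BPAL reduces to $\mathcal{L}$. Since K45, KD45 and S5 have the finite model property, we may take a \emph{finite} witness $M,w$. At this witness $\varphi$ is false at stage $0$, true at stages $1,\dots,k$, and false at stage $k+1$.

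As before, set $T_j=\llbracket\varphi\rrbracket_{M|^j\varphi}$, $A_0=W$ and $A_n=\bigcap_{j<n}T_j$. Then $\{A_n\}$ is decreasing and $R_i|^n\varphi=R_i\cap(W\times A_n)$. The truth value of $\varphi$ at $w$ changes between stages $k$ and $k+1$, so the models $M|^k\varphi$ and $M|^{k+1}\varphi$ must differ. Hence $A_k\supsetneq A_{k+1}$, and we can pick $u\in A_k\setminus T_k$. This point satisfies $M|^k\varphi,u\models\lnot\varphi$.

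The key step is to restart the pattern at $u$ from stage $k$. We apply $01^k0$-validity to the model obtained after $k$ announcements, viewed as a model in the relevant class.
\begin{itemize}
\item In multi-agent K45, $M|^k\varphi$ is itself K45, so we use it directly.
\item In S5, we use the submodel $M\restriction A_k$, exactly as in Lemma~\ref{lem:nonexistence_of_0k1-validity}. It is S5, and it satisfies $(M\restriction A_k)|^m\varphi,v\models\psi$ iff $M|^{k+m}\varphi,v\models\psi$ for $v\in A_k$.
\end{itemize}
Validity at $u$ then yields that $\varphi$ is false at stage $k$, true at stages $k+1,\dots,2k$, and false at stage $2k+1$. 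In particular the truth value at $u$ changes between stages $2k$ and $2k+1$. So $A_{2k}\supsetneq A_{2k+1}$, and we get $u_1\in A_{2k}\setminus T_{2k}$ with $\varphi$ false there at stage $2k$. Iterating produces infinitely many strict decreases of the sequence $A_n$ of subsets of the finite set $W$, which is a contradiction.

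The main obstacle is the single-agent KD45 case, because $M|^k\varphi$ need not be serial. I would argue as in Lemma~\ref{lem:00-validity_collapse}. If $R|^k\varphi(u)\neq\varnothing$, the generated submodel of $M|^k\varphi$ at $u$ is KD45: in a single-agent K45 model every state reachable from $u$ has the same successor set as $R|^k\varphi(u)$, by Euclideanness and transitivity. Validity therefore applies there, and truth is preserved because generated submodels preserve truth through iterated announcements.

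If instead $R|^k\varphi(u)=\varnothing$, then $M|^k\varphi,u\models\Box\bot$ and nothing changes at $u$ afterwards. This case does not by itself give a strict decrease, so I would not use $u$. Instead I would try to choose $u\in A_k\setminus T_k$ that is reachable from $w$ in $M|^k\varphi$. The change at $w$ between stages $k$ and $k+1$ is caused by removing arrows into such points. Such a $u$ lies in $R|^k\varphi(w)$, and its successor set coincides with $R|^k\varphi(w)$, which is nonempty. This should restore seriality and let the iteration proceed.
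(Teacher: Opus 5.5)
Your proof is correct. For S5 it is exactly the paper's restriction argument. The genuine difference is in single-agent KD45; for K45 the difference is trivial.

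\textbf{The paper's route.} It does not restart at a fresh point. It first proves that the class is closed under the updates. If some point $x$ had $R|\varphi(x)=\varnothing$, pick $v\in R(x)$. Then $M,v\models\lnot\varphi$ and $R|\varphi(v)=\varnothing$, so validity makes $\varphi$ true at $v$ after one update. After that $v$ is frozen, which contradicts the final $0$. The paper then applies validity repeatedly at $w$ itself to get the pattern $0(1^k0)^\omega$, which is impossible in a finite model.

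\textbf{Your route.} You secure only local seriality, by choosing the restart point inside $R|^k\varphi(w)$. The step you hedge on does go through, but you should say why:
\begin{itemize}
\item In a single-agent K45 model, every point of the generated submodel of $M|^k\varphi$ at $w$ has successor set $R(w)\cap A_k$.
\item So if $R(w)\cap A_k=R(w)\cap A_{k+1}$, that generated submodel would be identical at stages $k$ and $k+1$, and the truth value at $w$ could not change.
\item Hence some $u\in R|^k\varphi(w)\setminus T_k$ exists, and $R|^k\varphi(u)=R|^k\varphi(w)\ni u$ is nonempty.
\end{itemize}
With this choice, your case split on whether $R|^k\varphi(u)$ is empty is unnecessary. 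In every later round you pick the next point in the successor set of the current one, by the same argument.

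\textbf{What each buys.} The paper's route gives a stronger structural fact: iterated updates of single-agent KD45 models by a $01^k0$-valid formula stay KD45. It also lets K45 and KD45 be handled together by watching $w$ alone. Your route never uses validity to obtain seriality, and it gives a single restart-and-shrink template parallel to the S5 case. Both arguments are inherently single-agent. In yours, the restart point inherits a nonempty successor set only for the agent along whose arrow it was reached. That is why multi-agent KD45 needs the separate unravelling argument.
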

\begin{proof}
  If there were a non trivially $01^k0$-valid formula $\varphi$, there would be a finite K45, KD45, or S5 model $M,w$ such that $M,w\models\lnot\varphi$ by the $01^k0$-satisfiability of $\varphi$ and the finite model property.

  \smallskip\noindent\textbf{Multi-agent K45 and single-agent KD45} If $M$ is K45, $M|^n\varphi$ is also K45 for all $n\geq 1$. We show that this also holds for single-agent KD45. Suppose that $M$ is single-agent KD45. Suppose for contradiction that $M|\varphi$ is not serial. Then, $R|\varphi(w)=\varnothing$ for some $w\in M|\varphi$. Since $M$ is serial, choose $v$ with $wRv$. Then, by $R(w)=R(v)$, we have $R|\varphi(v)=R|\varphi(w)=\varnothing$. Moreover, $M,v\models\lnot\varphi$ since $wRv$ and $R|\varphi(w)=\varnothing$, so we have $M|\varphi,v\models\varphi$ by $01^k0$-validity of $\varphi$. However, $R|\varphi(v)=\varnothing$ implies that the truth of $\varphi$ remains true after that (this part does not apply to multi-agent KD45), contradicting $01^k0$-validity. Thus, $M|\varphi$ is also single-agent KD45, and more generally, $M|^n\varphi$ is single-agent KD45 for all $n\geq 1$.

  Therefore, both in multi-agent K45 and single-agent KD45, we would have the truth pattern $0(1^k0)^\omega$, contradicting the fact that the truth of $\varphi$ at $w$ must stabilize since $M$ is finite.

  \smallskip\noindent\textbf{Multi-agent S5} Suppose that $M$ is S5. By $01^k0$-validity, we have $M,w\models\lnot\varphi,M|\varphi,w\models\varphi,\ldots,M|^k\varphi,w\models\varphi, M|^{k+1}\varphi,w\models\lnot\varphi$. Let $A_0=W$ and $A_k=\bigcap_{j<k}T_j$ for $k\geq 1$ where $T_j=\llbracket\varphi\rrbracket_{M|^j\varphi}$. Then, $R_i|^k\varphi=R_i\cap (W\times A_k)$. Since the truth of $\varphi$ at $w$ changes when shifting from $M|^{k}\varphi$ to $M|^{k+1}\varphi$, we must have $A_{k}\supsetneq A_{k+1}$. Take any $u\in A_{k}\backslash A_{k+1}$. Then, $M|^{k}\varphi,u\models\lnot\varphi$ hence $M\restriction A_{k},u\models\lnot\varphi$. Applying $01^k0$-validity to the S5 model $M\restriction A_{k}$ yields $A_{2k}\supsetneq A_{2k+1}$ by the same argument as Lemma~\ref{lem:nonexistence_of_0k1-validity}. Repeating this contradicts the fact that $M$ is finite. Thus, there is no non-trivially $01^k0$-valid formula.
\end{proof}

\subsubsection{Nonexistence in multi-agent KD45 with $|G|\geq 2$}
\begin{definition}\label{def:m_w_and_n_r_unravelling}
  Let $\varphi$ be a formula with $d:=md(\varphi)\geq 1$, and let $I$ be the set of agents occurring in $\varphi$. Suppose that a KD45 model $M=(W,\{R_i\}_{i\in I}, V)$ and $w\in M$ satisfies $M,w\models\lnot\varphi$. From $M,w$, We construct a KD45 model $N=(W^N, \{R_i^N\}_{i\in I}, V^N)$ and $r\in N$ as follows (see Figure~\ref{fig:example_of_mw_and_nr} for an example).
\begin{itemize}
  \item We define $W^N$ as follows. First by recursion on $1\leq h\leq d$, define the set $\mathcal{A}_h$, the family of sets $\{D_{(x,i)}\}_{(x,i)\in \mathcal{A}_h}$, and a function $\ell$ as follows.
  
  \smallskip\noindent\textbf{
  Base case} Take a new state $r\notin W$ and let $\ell(r):=w$. Next, let $\mathcal{A}_1:=\{(r,i)\colon i\in I\}$. For each $(r,i)\in\mathcal{A}_1$, let $D_{(r,i)}:=\{((r,i),v)\colon v\in R_i(w)\}$ and define $\ell\colon D_{(r,i)}\to W$ by $\ell((r,i),v):=v$.
  
  \smallskip\noindent\textbf{Recursive case} Assume that $\mathcal{A}_h$ and $D_{(x,i)}$ are all defined for $1\leq h< d$ and $(x,i)\in \mathcal{A}_h$. Then, let 
  \begin{equation*}
  \mathcal{A}_{h+1}:=\{(y,j)\colon \exists (x,i)\in \mathcal{A}_h[y\in D_{(x,i)}\text{ and }j\neq i]\}.
  \end{equation*}
  For each $(x,i)\in \mathcal{A}_{h+1}$, let
  \begin{equation*}
    D_{(x,i)}:=\{((x,i),v)\colon v\in R_i(\ell(x))\}
  \end{equation*}
   and define $\ell\colon D_{(x,i)}\to W$ by $\ell((x,i),v):=v$.

   Finally, let $W_0^N:=\{r\}$ and $W_h^N:=\bigsqcup_{(x,i)\in \mathcal{A}_h}D_{(x,i)}\,(h\geq 1)$. Define the set of states by
   \begin{equation*}
    W^N=\bigsqcup_{h=0}^d W_h^N.
   \end{equation*}
   So, the model $N$ consists of the root $r$ and the states in the cells $D_{(x,i)}$ for all $1\leq h\leq d$ and $(x,i)\in\mathcal{A}_h$. The function $\ell\colon W^N\to W$ assigns to $((x,i),v)\in W^N$ the corresponding state $v\in W$ in the original model $M$. $W_N$ is indeed finite since $M$ is finite. 
   \item Define the \emph{rank} of $x\in W^N$ by $\text{rank}(x)=h\iff x\in W^N_h$ for all $0\leq h\leq d$. For each $x\in W^N\backslash\{r\}$, there is a unique $1\leq h\leq d$ and $(y,j)\in\mathcal{A}_h$ such that $x\in D_{(y,j)}$. Using this unique $(y,j)$, for each $i\in I$, define the accessibility relation by
   \begin{equation}\label{eq:accessibility_relation_of_n}
    R_i^N(x):=\begin{cases}
      D_{(r,i)} & x=r\\
      D_{(y,j)} & x\neq r,\, i=j,\, \text{rank}(x)<d\\
      D_{(x,i)} & x\neq r,\, i\neq j,\, \text{rank}(x)< d\\
      D_{(y,j)} &\text{rank}(x)=d
    \end{cases}
   \end{equation}
   So, at the root $r$, there is an $i$-arrow from $r$ to the states in the cell $D_{(r,i)}$. If $\text{rank}(x)<d$ and $i=j$, $R_i^N(x)$ forms a cell within $D(y,j)$. If $\text{rank}(x)<d$ and $i\neq j$, there is an $i$-arrow from $x$ to the states in the cell $D_{(x,i)}$. If $\text{rank}(x)=d$, $R_i^N(x)$ forms a cell within $D_{(y,j)}$. Note that $R_i^N$ is KD45. In fact, since $R_i$ is serial, the right hand side of (\ref{eq:accessibility_relation_of_n}) is nonempty in all the cases, so $R_i^N$ is serial. Also, it is easy to see from the definition of $R_i^N$ that $xR_i^N y$ implies $R_i^N(x)=R_i^N(y)$ for all $x,y\in W^N$, so $R_i^N$ is transitive and Euclidean.
   \item Define the valuation function by
   \begin{equation*}
    V^N(p)=\ell^{-1}[V(p)]
   \end{equation*}
   for all $p\in\mathbf{Prop}$.
   That is, the valuations are derived from the original states.
  \end{itemize}
\end{definition}

  \begin{figure}[htbp]
    \centering
    \begin{tikzpicture}[
  >=Stealth,
  thick,
  dot/.style={circle, fill=black, inner sep=1.4pt},
  core/.style={draw, rounded corners, inner sep=4pt},
  tuple/.style={font=\tiny, inner sep=1pt},
  every node/.style={font=\small}
]

\node[dot,label=above:$w$] (w) at (0,4.2) {};
\node[dot,label=below:$w_1$] (w1) at (-1.6,2.4) {};
\node[dot,label=above:$w_2$] (w2) at (1.4,2.4) {};
\node[dot,label=below:$w_3$] (w3) at (1.4,0.8) {};

\draw[->] (w) -- node[left] {$a,b$} (w1);
\draw[->] (w) -- node[right] {$b$} (w2);
\draw[<->] (w1) -- node[above] {$b$} (w2);
\draw[->] (w2) -- node[right] {$a$} (w3);

\draw[->] (w1) edge[loop left] node[left] {$a,b$} ();
\draw[->] (w2) edge[loop right] node[right] {$b$} ();
\draw[->] (w3) edge[loop right] node[right] {$a,b$} ();

\node at (0,0) {$M,w$};

\begin{scope}[xshift=-10cm,yshift=-6cm]
\node[dot,label=above:$r$] (r) at (10.0,4.5) {};

\node[dot] (raw1) at (7.0,2.9) {};
\node[tuple,above left=3pt and 0pt of raw1] {$(r,a,w_1)$};

\node[dot] (rbw1) at (10.0,2.9) {};
\node[tuple,above left=3pt of rbw1] {$(r,b,w_1)$};

\node[dot] (rbw2) at (13.0,2.9) {};
\node[tuple,above right=2pt and 0pt of rbw2] {$(r,b,w_2)$};

\draw[->] (r) -- node[above left] {$a$} (raw1);
\draw[->] (r) -- node[left] {$b$} (rbw1);
\draw[->] (r) -- node[above right] {$b$} (rbw2);

\draw[->] (raw1) edge[loop left] node[left] {$a$} ();
\draw[->] (rbw1) edge[loop left] node[left] {$b$} ();
\draw[->] (rbw2) edge[loop right] node[right] {$b$} ();
\draw[<->] (rbw1) -- node[above] {$b$} (rbw2);

\node[dot] (rabw1) at (6.0,1.0) {};
\node[tuple,below left=3pt and 0pt of rabw1] {$(r,a,w_1,b,w_1)$};

\node[dot] (rabw2) at (8.0,1.0) {};
\node[tuple,below=4pt and 0pt of rabw2] {$(r,a,w_1,b,w_2)$};

\node[dot] (rbaw1) at (10.0,1.0) {};
\node[tuple,below=2pt of rbaw1] {$(r,b,w_1,a,w_1)$};

\node[dot] (rbaw3) at (13.0,1.0) {};
\node[tuple,below=2pt of rbaw3] {$(r,b,w_2,a,w_3)$};

\draw[->] (raw1) -- node[left] {$b$} (rabw1);
\draw[->] (raw1) -- node[right] {$b$} (rabw2);
\draw[->] (rbw1) -- node[left] {$a$} (rbaw1);
\draw[->] (rbw2) -- node[right] {$a$} (rbaw3);

\draw[->] (rabw1) edge[loop left] node[left] {$a,b$} ();
\draw[->] (rabw2) edge[loop right] node[right] {$a,b$} ();
\draw[<->] (rabw1) -- node[above=3pt] {$a,b$} (rabw2);

\draw[->] (rbaw1) edge[loop right] node[right] {$a,b$} ();
\draw[->] (rbaw3) edge[loop right] node[right] {$a,b$} ();

\node[core,fit=(raw1),label={right=:$D_{(r,a)}$}] {};
\node[core,fit=(rbw1)(rbw2),label={[yshift=-0.3em]below:$D_{(r,b)}$}] {};
\node[core,fit=(rabw1)(rabw2),label={[yshift=-0.3em]below:$D_{(r,a,w_1,b)}$}] {};
\node[core,fit=(rbaw1),label={[yshift=-0.3em]below:$D_{(r,b,w_1,a)}$}] {};
\node[core,fit=(rbaw3),label={[yshift=-0.3em]below:$D_{(r,b,w_2,a)}$}] {};

\node[right] at (15.3,4.5) {rank 0};
\node[right] at (15.3,2.9) {rank 1};
\node[right] at (15.3,1.0) {rank 2};

\node at (10.0,0) {$N,r$};
\end{scope}
\end{tikzpicture}
\caption{An example of $M,w$ and $N,r$ in Definition~\ref{def:m_w_and_n_r_unravelling}. Here, $M,w$ is a KD45 model with $I=\{a,b\}$ and $d=2$ intended. $N,r$ is the corresponding KD45 unravelling.}
\label{fig:example_of_mw_and_nr}
  \end{figure}

\begin{lemma}
    Let $\varphi$, $d$, $M,w$, and $N,r$ as in Definition~\ref{def:m_w_and_n_r_unravelling}. Then, for all $0\leq h\leq d$, $x\in W^N_h$, and formulas $\psi$ with $md(\psi)\leq d-h$, we have
    \begin{equation*}
    N,x\models\psi\iff M,\ell(x)\models\psi
    \end{equation*}
     In particular, $N,r\models \varphi\iff M,\ell(r)\models \varphi$.
  
  \end{lemma}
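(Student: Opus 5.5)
We argue by induction on the structure of $\psi$. The induction hypothesis is the full statement: it is quantified over all $0\leq h\leq d$, all $x\in W^N_h$, and all $\psi$ with $md(\psi)\leq d-h$. The point of carrying all ranks along is that the successors of $x$ in $N$ may have rank $h$ or rank $h+1$. Either way, the subformula under a box has modal depth at most $d-h-1$, so the hypothesis applies to it at both ranks.

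\textbf{Boolean cases.} For $\psi=p$, the claim is immediate from $V^N(p)=\ell^{-1}[V(p)]$. For negation and conjunction, it follows directly from the induction hypothesis, since taking subformulas does not raise the modal depth.

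\textbf{Case $\psi=\Box_i\chi$.} Here $md(\psi)\geq 1$, so $h\leq d-1$, and hence $\text{rank}(x)<d$; the fourth clause of (\ref{eq:accessibility_relation_of_n}) therefore never applies. The key step is the following:
\begin{center}
$\ell$ maps $R_i^N(x)$ \emph{onto} $R_i(\ell(x))$, and every element of $R_i^N(x)$ has rank $h$ or $h+1$.
\end{center}
Once this holds, the case is easy. We have $md(\chi)\leq d-h-1$, which is at most $d-h$ and at most $d-(h+1)$. So for every $z\in R_i^N(x)$ the hypothesis gives $N,z\models\chi$ iff $M,\ell(z)\models\chi$. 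Ontoness then yields $N,x\models\Box_i\chi$ iff $M,\ell(x)\models\Box_i\chi$. We only need that $\ell$ is onto, not that it is injective.

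The key step is checked clause by clause, and I expect this to be the only real work.
\begin{itemize}
\item If $x=r$, then $R_i^N(r)=D_{(r,i)}$. These states have rank $1$, and their $\ell$-images are exactly $R_i(w)=R_i(\ell(r))$.
\item If $x\in D_{(y,j)}$ with $i\neq j$, then $(y,j)\in\mathcal{A}_h$. So $(x,i)\in\mathcal{A}_{h+1}$, and the cell $D_{(x,i)}$ exists (here $h+1\leq d$). Its states have rank $h+1$, and their images are $R_i(\ell(x))$ by definition.
\item If $x\in D_{(y,j)}$ with $i=j$, then $R_i^N(x)=D_{(y,i)}$. This cell has rank $h$, and its image is $R_i(\ell(y))$. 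Since $x\in D_{(y,i)}$, we have $\ell(x)\in R_i(\ell(y))$. Because $R_i$ is transitive and Euclidean, $\ell(y)R_i\ell(x)$ implies $R_i(\ell(x))=R_i(\ell(y))$.
\end{itemize}

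This third clause, where the $i$-successors stay in the same rank, is where the KD45 frame conditions on $M$ are genuinely used. It is also why the induction must range over all ranks simultaneously rather than descend rank by rank. The final statement about $\varphi$ is the special case $h=0$, $x=r$, $\psi=\varphi$.
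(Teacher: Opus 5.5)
Your proposal is correct and follows the paper's own argument. The paper also inducts on $\psi$ over all ranks at once, and its box case rests on the same two facts: $\ell[R_i^N(x)]=R_i(\ell(x))$ for $\text{rank}(x)<d$, and successors have rank at most $\text{rank}(x)+1$. You additionally check the first fact clause by clause, using transitivity and Euclideanness of $R_i$ in the $i=j$ case; the paper only asserts it.
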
 
  \begin{proof}
    We prove the claim by induction on $\psi$.
    
    First note that we have the equation
    \begin{equation}\label{eq:accessibility_relation_correspondence_between_m_n}
      \ell[R_i^N(x)]=R_i(\ell(x))
    \end{equation}
    for all $i\in I$, $0\leq h<d$ and $x\in W^N_h$. Moreover, whenever $xR_i^Nz$, we have
    \begin{equation}\label{eq:rank_z_leq_rank_x_plus1}
      \text{rank}(z)\leq\text{rank}(x)+1.
    \end{equation}
     In fact, if $x=r$, a move by $i$ from $x$ increases the rank of $x$ by one. If $x\neq r$, a move by $i$ does not change the rank of $x$ whereas a move by $j\neq i$ increases the rank by at most one.

     Now, we prove the equivalence. If $\psi=p$, we have $N,x\models\psi$ iff $x\in V^N(p)$ iff $\ell(x)\in V(p)$ iff $M,\ell(x)\models \psi$ by the definition of $N$. The Boolean cases are immediate. 
     
     It remains to consider the $\psi=\Box_i\chi$ case. Since $md(\Box_i\chi)=md(\psi)\leq d-h$, we have $md(\chi)\leq d-h-1$.

     First suppose that $N,x\models\Box_i\chi$. Let $v\in R_i(\ell(x))$. By (\ref{eq:accessibility_relation_correspondence_between_m_n}), there is some $z\in R_i^N(x)$ such that $\ell(z)=v$. Thus, we have $md(\chi)\leq d-(h+1)\leq d-\text{rank}(z)$ by (\ref{eq:rank_z_leq_rank_x_plus1}) and $\text{rank}(x)=h$. Since $N,z\models\chi$, applying the inductive hypothesis to $\chi$ and $z$, yields $M,v=M,\ell(z)\models\chi$. As $v\in R_i(\ell(x))$ was arbitrary, we obtain $M,\ell(x)\models\Box_i\chi$.

     Conversely, suppose that $M,\ell(x)\models\Box_i\chi$. Let $z\in R_i^N(x)$. By (\ref{eq:accessibility_relation_correspondence_between_m_n}), $\ell(z)\in R_i(\ell(x))$ and hence $M,\ell(z)\models\chi$. Again, (\ref{eq:rank_z_leq_rank_x_plus1}) implies $md(\chi)\leq d-\text{rank}(z)$, so the inductive hypothesis gives $N,z\models\chi$. Since $z\in R_i^N(x)$ was arbitrary, $N,x\models\Box_i\chi$.

     This completes the induction.  
  \end{proof}

For a formula $\varphi$, let $\varphi^\Delta$ be the formula obtained from $\varphi$ by replacing $\Box_i$ for all $i\in I$ with the single modality $\Box$. The following lemma is used to show that $\text{rank}(x)<d$ for the specific $x$ in the proof of Lemmas~\ref{lem:nonexistence_of_0k1-validity_multi_kd45} and \ref{lem:nonexistence_of_01k0-validity_multi_kd45}.
\begin{lemma}\label{lem:diagonal_model_equivalence}
  Let $\varphi$, $d$, $M,w$, and $N,r$ as in Definition~\ref{def:m_w_and_n_r_unravelling}.
\begin{enumerate} 
  \item Let $P=(W^P,R^P,V^P)$ be any single-agent KD45 model and define the multi-agent KD45 model $P^I=(W^P,\{R_i^P\}_{i\in I}, V^P)$ by $R_i^P=R^P$ for all $i\in I$. Then,
  \begin{equation*}
    P,u\models\psi^\Delta\iff P^I,u\models\psi
  \end{equation*}
  for all formulas $\psi$.
  \item Suppose that $\varphi^\Delta$ is valid in single-agent KD45. Then, for all $x\in N$ with $\operatorname{rank}(x)=d$ and all $n\geq 0$, we have $N|^n\varphi,x\models\varphi$.  
\end{enumerate}  
\end{lemma}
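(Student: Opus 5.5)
Part 1 is a routine induction on $\psi$. The atomic case holds because $P$ and $P^I$ share the valuation $V^P$, and the Boolean cases are immediate. For $\psi=\Box_i\chi$ we have $\psi^\Delta=\Box\chi^\Delta$. Since $R_i^P=R^P$, the $i$-successors of $u$ in $P^I$ are exactly the successors of $u$ in $P$, so the inductive hypothesis for $\chi$ transfers the equivalence. Note that $P^I$ is KD45 because each $R_i^P$ is the KD45 relation $R^P$.

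For part 2, fix $x\in N$ with $\operatorname{rank}(x)=d$, say $x\in D_{(y,j)}$. The first step is to inspect (\ref{eq:accessibility_relation_of_n}). For every $z\in D_{(y,j)}$ we have $\operatorname{rank}(z)=d$, so the last case applies and $R_i^N(z)=D_{(y,j)}$ for all $i\in I$. Hence the cell $C:=D_{(y,j)}$ is closed under every $R_i^N$. On $C$, all agents share the single relation $C\times C$, which is the universal relation on the nonempty set $C$.

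The second step is to track the iterated updates on $C$. We have $R_i^N|^n\varphi=R_i^N\cap(W^N\times A_n)$ with $A_n=\bigcap_{m<n}T_m$, as in the proof of Lemma~\ref{lem:nonexistence_of_0k1-validity}. Restricted to $C$, every agent therefore still carries the same relation $C\times (C\cap A_n)$. Because $C$ is closed under all relations, truth at points of $C$ depends only on the generated submodel on $C$. Let $P_n$ be the single-agent model on $C$ with relation $C\times(C\cap A_n)$ and the valuation inherited from $N$. Then $(N|^n\varphi)\restriction C$ is exactly $P_n^I$ in the sense of part 1.

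We would like to use $P_n$ directly, but it may fail to be serial if $C\cap A_n=\varnothing$. This is the main obstacle: the hypothesis only guarantees validity of $\varphi^\Delta$ over KD45 models. To handle it, we argue by induction on $n$ that $C\subseteq T_n$ for all $n$, which gives $C\cap A_n=C\neq\varnothing$.

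\textbf{Base case.} $P_0$ is the universal relation on $C$, which is KD45, so $\varphi^\Delta$ holds at every point of $P_0$. By part 1 and the generated submodel observation, $N,z\models\varphi$ for all $z\in C$, so $C\subseteq T_0$.

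\textbf{Inductive step.} If $C\subseteq T_m$ for all $m<n$, then $C\cap A_n=C$. Thus $P_n$ is again the universal relation on $C$, which is KD45. Validity of $\varphi^\Delta$ together with part 1 gives $N|^n\varphi,z\models\varphi$ for all $z\in C$, that is, $C\subseteq T_n$.

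In fact the model restricted to $C$ never changes, so $N|^n\varphi,x\models\varphi$ for every $n\geq 0$. This is the claim of part 2.
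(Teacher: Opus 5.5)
Your proof is correct and follows essentially the same route as the paper. Part (1) goes by induction on $\psi$. For part (2), the key observation is that the rank-$d$ cell $D_{(y,j)}$ is closed under every $R_i^N$, with all agents sharing the universal relation on it. Validity of $\varphi^\Delta$ together with part (1) then makes $\varphi$ true throughout the cell, so no arrow out of the cell is ever removed. Your explicit induction that $D_{(y,j)}\subseteq T_n$, which keeps $D_{(y,j)}\cap A_n$ nonempty and so preserves seriality, spells out in detail the step the paper covers with ``repeating the same argument''.
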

\begin{proof}
(1) The statement follows from induction on $\psi$. 

(2) Take any $x\in N$ with $\text{rank}(x)=d$, and let $D_{(y,j)}$ be the unique cell such that $x\in D_{(y,j)}$.
Let $N^\Delta_{D_{(y,j)}}$ be the single-agent model whose domain is $D_{(y,j)}$, whose accessibility relation is $D_{(y,j)}\times D_{(y,j)}$, and whose valuation is the restriction of $V^N$ to $D_{(y,j)}$. By the definition of $N$ and $\text{rank}(x)=d$, we have $R_i^N(z)=D_{(y,j)}$ for all $z\in D_{(y,j)}$ and $i\in I$. Hence, by (1) we have $N,z\models\psi$ iff $N^\Delta_{D_{(y,j)}},z\models\psi^\Delta$ for all $z\in D_{(y,j)}$ and formulas $\psi$.

Since $N^\Delta_{D_{(y,j)}}$ is a single-agent KD45 model and $\varphi^\Delta$ is valid in single-agent KD45, we have $N^\Delta_{D_{(y,j)}},z\models\varphi^\Delta$ for all $z\in D_{(y,j)}$. Therefore, $N,z\models\varphi$ for all $z\in D_{(y,j)}$.

It follows that the first update by $\varphi$ removes no arrow whose source lies in $D_{(y,j)}$, so the pointed model generated by every $z\in D_{(y,j)}$ remains unchanged after the update. Repeating the same argument, we have $N|^n\varphi,z\models\varphi$ for all $n\geq 0$ and $z\in D_{(y,j)}$. In particular, $N|^n\varphi,x\models\varphi$ for every $n\geq 0$.
\end{proof}
For each $n\geq 0$, let $N_n:=N|^n\varphi$. For each $y\in N$, let
\begin{equation*}
  f(y):=\min\{n\geq 0\colon N_n,y\models\lnot\varphi\}
\end{equation*}
be the the first time point at which $\varphi$ becomes false at $y$. If no such time point exits, we let $f(y)=\infty$. 
\begin{lemma}[Nonexistence of non-trivially $0^k1$-validity in KD45]
\label{lem:nonexistence_of_0k1-validity_multi_kd45}
In multi-agent KD45: For each $k\geq 2$, there is no non-trivially
$0^k1$-valid formula.
\end{lemma}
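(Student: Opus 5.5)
Suppose toward a contradiction that $\varphi$ is non-trivially $0^k1$-valid in multi-agent KD45 for some $k\geq 2$. First I will dispose of the degenerate cases. If $md(\varphi)=0$, announcements change nothing, so the pattern $0^k1$ cannot be realized. Next, if $\varphi^\Delta$ is valid in single-agent KD45, I expect to derive a contradiction from the unravelling. Otherwise, by the single-agent results (Lemma~\ref{lem:nonexistence_of_0k1-validity} for single-agent KD45, which is covered by the collapse Lemma~\ref{lem:00-validity_collapse}) and Lemma~\ref{lem:diagonal_model_equivalence}(1), $\varphi$ is $00$-valid in single-agent KD45 and hence $0^\omega$-valid there. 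Then the diagonal models $P^I$ never realize $0^k1$, and I will use this where needed.

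For the main step, take a finite KD45 model $M,w$ realizing $0^k1$, which exists by the finite model property. Pass to the unravelling $N,r$ of Definition~\ref{def:m_w_and_n_r_unravelling}. By the preceding lemma, $N,r\models\lnot\varphi$. Since $N$ is again a finite KD45 model, $0^k1$-validity forces $N_1,r\models\lnot\varphi,\ldots,N_{k-1},r\models\lnot\varphi$ and $N_k,r\models\varphi$. The idea, mirroring the S5 proof of Lemma~\ref{lem:nonexistence_of_0k1-validity}, is to locate a state $x$ where $\varphi$ is false at a later stage, and whose generated submodel in $N_n$ is itself a KD45 model. Applying $0^k1$-validity at $x$ then produces a new flip later in time, and iterating this contradicts finiteness.

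The function $f$ is the tool for this. The truth at $r$ changes between $N_{k-1}$ and $N_k$, so some $i$-successor $y$ of $r$ must have changed its membership in $\llbracket\varphi\rrbracket_{N_{k-1}}$. I will pick $y$ with $f(y)$ minimal among the witnesses. The delicate point is that $N_n$ need not be serial. I will use the tree-like structure of $N$ to isolate a cell $D_{(x,i)}$, with $x$ a state where $\varphi$ is currently false, such that all arrows out of the relevant part still hit $\varphi$-states. Restricting to that part then gives a KD45 model whose $m$-th update agrees with $N_{n+m}$ there, exactly as in the S5 identity $(M\restriction A_n)|^m\varphi,v\models\psi\iff M|^{n+m}\varphi,v\models\psi$.

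The cells below $x$ must be non-degenerate, and this requires $\operatorname{rank}(x)<d$. I will get this from Lemma~\ref{lem:diagonal_model_equivalence}(2): at rank $d$, $\varphi$ is either always true (in the valid-$\varphi^\Delta$ case) or behaves as in the single-agent setting, so it cannot flip from $0$ to $1$ after $k\geq2$ zeros. Hence every flip happens at rank $<d$, where agents $j\neq i$ have fresh private cells that are unaffected by the deletions, and this preserves seriality.

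The main obstacle is exactly this seriality and submodel bookkeeping: showing that the chosen restriction stays KD45 and that a strict new change occurs at a later time. I would first try to prove directly that the minimal-$f$ witness yields the inclusion $\llbracket\varphi\rrbracket$ strictly decreasing along an infinite sequence of stages. Since $N$ is finite, this is impossible, which gives the contradiction.
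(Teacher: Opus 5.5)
You correctly identify the obstacle: KD45 is not closed under restriction, so the S5 identity for $(M\restriction A_n)|^m\varphi$ is unavailable. But the proposal never overcomes this obstacle, and the extremal choice you suggest cannot. Every state in $A_{k-1}\setminus A_k$ has $f=k-1$, so ``minimal $f$ among the witnesses'' selects nothing. Moreover, the state responsible for the flip at $r$ need not be an immediate successor of $r$; the deleted arrow can lie deep in the generated submodel.

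More importantly, your choice gives no control over what lies below $x$. A private cell $D_{(x,i)}$ with $i\neq j$ may contain states where $\varphi$ fails at some stage. Then the $i$-arrows from $x$ are cut, and $(N_n)_x$ is not serial. Your sentence that these cells are ``unaffected by the deletions'' is exactly the claim that needs proof. The planned iteration toward infinitely many strict decreases would have to re-establish it at every stage. It would also have to handle the case where $x$'s own cell $D_{(y,j)}\cap A_n$ becomes empty. (Also, Lemma~\ref{lem:nonexistence_of_0k1-validity} concerns S5; what you need for single-agent KD45 is Lemma~\ref{lem:00-validity_collapse}.)

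The missing idea is to choose $x$ of \emph{maximal rank} among all states with $f(x)<\infty$. Then every state of larger rank satisfies $\varphi$ at all stages, so nothing below $x$ is ever cut. Three consequences follow:
\begin{itemize}
\item $x\neq r$: otherwise no arrow is ever deleted, and $\varphi$ stays false at $r$.
\item $\operatorname{rank}(x)<d$ by Lemma~\ref{lem:diagonal_model_equivalence}(2). This applies because $\varphi^\Delta$ is in fact valid: it is $0^k1$-valid, and via $00$-validity and Lemma~\ref{lem:00-validity_collapse} also $0^\omega$-valid, so $\lnot\varphi^\Delta$ is unsatisfiable. Your two-branch case split is therefore really one case.
\item With $q=f(x)$, you get $x\in D_{(y,j)}\cap A_q=R_j^{N_q}(x)$, so $(N_q)_x$ is KD45.
\end{itemize}

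After this, no iteration is needed. Applying $0^k1$-validity to $(N_q)_x$ gives $N_{q+1},x\models\lnot\varphi$ and $N_{q+k},x\models\varphi$. Now split on $D_{(y,j)}\cap A_{q+1}$:
\begin{itemize}
\item If it is nonempty, $(N_{q+1})_x$ is again KD45. Applying $0^k1$-validity at time $q+1$ yields $N_{q+k},x\models\lnot\varphi$, because $k\geq 2$.
\item If it is empty, the generated submodel at $x$ is frozen from time $q+1$ on, so $\varphi$ stays false at $x$.
\end{itemize}
Either way this contradicts $N_{q+k},x\models\varphi$.
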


\begin{proof}
Fix $k\geq 2$. It is enough to prove the stronger statement that if $\varphi$ is $0^k1$-valid, then $\varphi$ is valid since validity contradicts $0^k1$-satisfiability. Suppose that $\varphi$ is $0^k1$-valid and assume toward a contradiction that $\varphi$ is not valid. Furthermore, let $I$ be the set of agents that appear in $\varphi$. Hereafter, we work with the set $I$ since the truth of $\varphi$ does not depend on the agents outside $I$. By the finite model property, there is a finite KD45 model $M=(W,\{R_i\}_{i\in I}, V)$ and $w\in M$ such that $M,w\models\lnot\varphi$. To derive a contradiction, consider the model $N,r$ obtained from $M,w$ as in Definition~\ref{def:m_w_and_n_r_unravelling}. Choose a state $x$ with maximal rank among the states $y$ such that $f(y)<\infty$. Let $q=f(x)$. Also, let $N_n=N|^n\varphi$ for $n\geq 0$, and let $A_0=W^N$ and $A_n=\bigcap_{\ell=0}^{n-1}\llbracket \varphi\rrbracket_{N_\ell}$ for $n\geq 1$.

First note that $x$ cannot be $r$. In fact, if $x=r$, every successor of $r$ other than $r$ would satisfy $\varphi$ forever by maximality of $x$ hence the model $N$ would never change after the first announcement, contradicting $0^k1$-validity. 
We next show that $\text{rank}(x)<d$. Assume toward a contradiction that $\text{rank}(x)=d$. Since $\varphi$ is $0^k1$-valid,  $\varphi^\Delta$ is also $0^k1$-valid by Lemma~\ref{lem:diagonal_model_equivalence} (the truth equivalence is preserved through updates). In particular, the single-agent formula $\varphi^\Delta$ is $00$-valid by $k\geq 2$ hence $0^\omega$-valid by Lemma~\ref{lem:00-validity_collapse}. Thus, $\varphi^\Delta$ must be valid. Therefore, $N_n,x\models\varphi$ for all $n\geq 0$ by Lemma~\ref{lem:diagonal_model_equivalence}. However, this contradicts $f(x)<\infty$ so that we have $\text{rank}(x)<d$.

Let $D_{(y,j)}$ be such that $x\in D_{(y,j)}$. 
Now, since $q$ is the first time at which $\varphi$ is false at $x$, we have $x\in A_q$. Thus, we get $D_{(y,j)}\cap A_q\neq\varnothing$ hence $R_j^{N_q}(x)=D_{(y,j)}\cap A_q\neq \varnothing$, so $R_j^{(N_q)_x}$ is serial. For $i\neq j$, $R_i(x)=D_{(x,i)}$ by definition, and since all the states in $D_{(x,i)}$ has rank greater than $\text{rank}(x)$, they all satisfy $\varphi$ forever. Thus, $R_i^{N_q}(x)=D_{(x,i)}\neq\varnothing$ hence $R_i^{(N_q)_x}$ is serial. Furthermore, transitivity and Euclideanness are preserved through updates. Thus, the generated submodel $(N_q)_x$ is again KD45.

Apply the $0^k1$-validity to $(N_q)_x$. Since $N_q,x\models\lnot\varphi$, we have $N_{q+m},x\models\lnot\varphi$ for all $0\leq m\leq k-1$ and $N_{q+k},x\models\varphi$. There are two cases.

Suppose that $D_{(y,j)}\cap A_{q+1}\neq\varnothing$. Then, $(N_{q+1})_x$ is again KD45. Since $N_{q+1},x\models\lnot\varphi$, applying $0^k1$-validity to $N_{q+1},x$ yields $N_{q+k},x\models\lnot\varphi$, a contradiction.

Suppose that $D_{(y,j)}\cap A_{q+1}=\varnothing$. Then, the $j$-successor set of $x$ is empty from time $q+1$ onwards. Also, every $i$-successor $(i\neq j)$ of $x$ satisfies $\varphi$ by the maximality of $x$. Thus, the generated submodel at $x$ remains unchanged from $N_{q+1}$ onwards. In particular, the truth value of $\varphi$ at $x$ stays false from time $q+1$ onwards. This contradicts $N_{q+k},x\models\varphi$, which follows from $0^k 1$-validity.

Therefore, for each $k\geq 2$, there is no non-trivially $0^k1$-valid formula.
\end{proof}

\begin{lemma}[Nonexistence of non-trivially $01^k0$-validity in KD45]
\label{lem:nonexistence_of_01k0-validity_multi_kd45}
In multi-agent KD45: for each $k\geq 1$, there is no
non-trivially $01^k0$-valid formula.
\end{lemma}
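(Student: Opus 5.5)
Our plan is to mirror the proof of Lemma~\ref{lem:nonexistence_of_0k1-validity_multi_kd45}, using the unravelling of Definition~\ref{def:m_w_and_n_r_unravelling} to restore seriality at the relevant point. Rather than proving a stronger statement, we argue by contradiction. Suppose $\varphi$ is non-trivially $01^k0$-valid. By satisfiability and the finite model property there is a finite KD45 model $M,w$ with $M,w\models\lnot\varphi$. Let $I$ be the agents occurring in $\varphi$ and build $N,r$. Then $N,r\models\lnot\varphi$ by the truth-preservation lemma for $N$.

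The first step is to handle the top-rank states via Lemma~\ref{lem:diagonal_model_equivalence}. Since $\varphi$ is $01^k0$-valid, so is $\varphi^\Delta$ in single-agent KD45. By Lemma~\ref{lem:nonexistence_of_01k0-validity}, $\varphi^\Delta$ is not $01^k0$-satisfiable in single-agent KD45. Hence $\varphi^\Delta$ can never be false there, because a false point would realise the pattern $01^k0$ by validity. So $\varphi^\Delta$ is valid. Lemma~\ref{lem:diagonal_model_equivalence}(2) then shows that every $x$ of rank $d$ satisfies $\varphi$ at all stages. (If $d=0$ the formula is propositional and trivially cannot be $01$-valid, so we assume $d\ge1$.)

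Next we choose the point to work at. Let $x$ be a state of maximal rank among those with $f'(x)<\infty$, where $f'$ records some time at which $\varphi$ is false at $x$; $r$ is such a state. The previous step gives $\operatorname{rank}(x)<d$. All $i$-successors of $x$ with $i\neq j$ lie in $D_{(x,i)}$, have higher rank, and so satisfy $\varphi$ forever; these arrows are never deleted. If $x$ has rank $>0$, let $D_{(y,j)}$ be its cell. The only arrows from $x$ that can disappear are $j$-arrows into $D_{(y,j)}$. At the root, every successor cell has rank $1$.

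We then split into cases. Suppose $x=r$. By maximality all successors of $r$ satisfy $\varphi$ forever, so $N$ never changes locally at $r$. The truth value at $r$ is then constant, contradicting the pattern $01$ at $r$. Suppose instead $x\neq r$, and let $q$ be the first time with $N_q,x\models\lnot\varphi$. While $D_{(y,j)}\cap A_n\neq\varnothing$, the generated submodel $(N_n)_x$ is KD45, exactly as in the proof of Lemma~\ref{lem:nonexistence_of_0k1-validity_multi_kd45}. Applying $01^k0$-validity at time $q$ gives $\varphi$ true at $x$ for $n=q+1,\dots,q+k$ and false at $q+k+1$.

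\begin{itemize}
\item If $D_{(y,j)}\cap A_{q+1}=\varnothing$, the $j$-arrows from $x$ are gone after time $q+1$. The generated submodel at $x$ is then frozen, so the truth of $\varphi$ at $x$ is constant from $q+1$ on. This contradicts the $1\to0$ transition at $q+k+1$.
\item Otherwise the model stays KD45 at $x$ whenever $\varphi$ is false there, and we iterate validity. This produces the pattern $0(1^k0)^\omega$ at $x$ for as long as seriality persists. Finiteness of $N$ forces the decreasing sets $A_n$ to stabilise, after which the model is constant and the truth value at $x$ must be constant. This contradicts infinitely many switches. If seriality is lost at some stage, we fall back to the first case at the corresponding false time $q'$.
\end{itemize}

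The main obstacle is this last bookkeeping. We must check that seriality of $R_j$ at $x$ holds at each false time $q'$ where we reapply validity. We must also check that losing it immediately after such a time freezes the truth value. One should confirm that the freeze really occurs while $\varphi$ is still false at $x$, and not during a true block.
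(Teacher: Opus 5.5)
The steps up to and including the application of $01^k0$-validity at the first false time $q$ are sound. The second bullet, however, has a genuine gap exactly where you suspected, and your proposed fallback cannot close it.

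Let $q'$ be any false time of $x$ at which $(N_{q'})_x$ is KD45. Then $x$ is true at $q'+1,\dots,q'+k$ and false at $q'+k+1$. In the generated submodel at $x$, the only arrows that can ever be deleted are $j$-arrows into $D_{(y,j)}$. So the switch from $N_{q'+k}$ to $N_{q'+k+1}$ forces $D_{(y,j)}\cap A_{q'+k}\supsetneq D_{(y,j)}\cap A_{q'+k+1}$. Two consequences follow:
\begin{enumerate}
\item Seriality at $x$ survives through time $q'+k$. Your first bullet, which needs $D_{(y,j)}\cap A_{q+1}=\varnothing$, is therefore vacuous.
\item The only way to lose seriality is $D_{(y,j)}\cap A_{q'+k+1}=\varnothing$, that is, exactly at a \emph{false} time.
\end{enumerate}
In that case $x$ is frozen at $0$ from $q'+k+1$ on and follows $\cdots 01^k0^\omega$. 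This contradicts nothing. Validity cannot be applied at $q'+k+1$, because $(N_{q'+k+1})_x$ is not serial. At $q'$, validity only constrains times up to $q'+k+1$. So there is no ``first case'' to fall back to, and no contradiction is obtainable at $x$ alone.

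The missing idea is to pass to another state of the same cell whose first false time is later. Take any $z\in D_{(y,j)}\cap A_{q'+k}$; it has $f(z)=q'+k$. Moreover $(N_{q'+k})_z$ is KD45: we have $z\in D_{(y,j)}\cap A_{q'+k}=R_j^{N_{q'+k}}(z)$, and all its other successors have higher rank, so they are never falsified. Validity at $z$ then demands a switch from true to false between $q'+2k$ and $q'+2k+1$. But $z$ is frozen from $q'+k+1$ onward, since its $j$-cell is empty, and $q'+2k\geq q'+k+1$ because $k\geq 1$. This is a contradiction.

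The paper packages this from the outset by choosing $x$ maximal for the lexicographic order on $(\operatorname{rank}(x),f(x))$, rather than by rank alone as in the $0^k1$ case. Then the very first switch, $N_{q+k},x\models\varphi$ and $N_{q+k+1},x\models\lnot\varphi$, already produces $z\in D_{(y,j)}$ with $f(z)=q+k>q=f(x)$. This contradicts maximality immediately, so neither your iteration nor the stabilisation argument for the $A_n$ is needed.
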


\begin{proof}
  Fix $k\geq 1$. It is enough to prove the stronger result that if $\varphi$ is $01^k0$-valid, then $\varphi$ is valid. Suppose that $\varphi$ is $01^k0$-valid and assume toward a contradiction that $\varphi$ is not valid. Furthermore, let $I$ be the set of agents that appear in $\varphi$.  By the finite model property, there is a finite KD45 model $M=(W,\{R_i\}_{i\in I}, V)$ and $w\in M$ such that $M,w\models\lnot\varphi$. To derive a contradiction, consider the finite KD45 model $N,r$ obtained from $M,w$. 
  
  Define the lexicographic order $\leq^*$ by
  \begin{align*}
  (\text{rank}(x),f(x))\leq ^* (\text{rank}(y),f(y))\iff
   \text{rank}(x)<\text{rank}(y) \lor(\text{rank}(x)=\text{rank}(y)\land f(x)\leq f(y))
  \end{align*}
  for all $x,y\in N$ with $f(x),f(y)<\infty$. 
  Let $x\in N$ be such that $(\text{rank}(x), f(x))$ is maximal with respect to $\leq ^*$. This exists because $N$ is finite. Let $q=f(x)$. Then, $N_q,x\models\lnot\varphi$. Also, let $N_n=N|^n\varphi$ for $n\geq 0$, and let $A_0=W^N$ and $A_n=\bigcap_{\ell=0}^{n-1}\llbracket \varphi\rrbracket_{N_\ell}$ for $n\geq 1$. 

  Note that $x$ cannot be $r$. In fact, if $x=r$, we have $N,r\models\lnot\varphi$ by $M,w\models\lnot\varphi$, so $N_1,r\models\varphi$ by $01^k0$-validity. However, the truth of $\varphi$ at $r$ cannot change through updates since all the states other than $r$ satisfy $\varphi$ by the maximality of $x$. Thus, $x\neq r$. 

  Furthermore, $\text{rank}(x)<d$. In fact, if $\text{rank}(x)=d$, the single-agent formula $\varphi^\Delta$ must be $01^k0$-valid by $01^k0$-validity of $\varphi$ and Lemma~\ref{lem:diagonal_model_equivalence}. Thus, $\varphi^\Delta$ must be valid since there is no non-trivially $01^k0$-valid formula in single-agent KD45 by Lemma~\ref{lem:nonexistence_of_01k0-validity}. Then, Lemma~\ref{lem:diagonal_model_equivalence} implies that $N|^n,x\models\varphi$ for all $n\geq 0$, contradicting $f(x)<\infty$.

  Let $D_{(y,j)}$ such that $x\in D_{(y,j)}$. By the same reasoning as Lemma~\ref{lem:nonexistence_of_0k1-validity_multi_kd45}, $(N_q)_x$ is again KD45. Since $\varphi$ is $01^k0$-valid, we have $N_{q+m},x\models\varphi$ for all $1\leq m\leq k$ and $N_{q+k+1},x\models\lnot\varphi$. In particular, $N_{q+k},x\models\varphi$ and $N_{q+k+1},x\models\lnot\varphi$. Since $R_i^{N}(x)$ for $i\neq j$ never changes from $q+k$ onwards, we have $D_{(y,j)}\cap A_{q+k}\supsetneq D_{(y,j)}\cap A_{q+k+1}$. Take $z\in D_{(y,j)}\cap A_{q+k}\backslash D_{(y,j)}\cap A_{q+k+1}$. Then, $f(z)=q+k>q=f(x)$ but this contradicts the maximality of $x$ since $x$ and $z$ have the same rank.

  Therefore, for each $k\geq 1$, there is no non-trivially $01^k0$-valid formula.
\end{proof}

\subsection{Existence lemmas}
\begin{lemma}[Existence of non-trivially $0^k$-valid but not $0^{k+1}$-valid formula in KD45]\label{lem:0k_but_not_0kplus1_multi_kd45}
  In multi-agent KD45: There is a formula that is non-trivially $0^k$-valid but not $0^{k+1}$-valid for all $k\geq 2$.
\end{lemma}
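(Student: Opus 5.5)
The plan is to exploit the one mechanism that separates multi-agent KD45 from the single-agent case, namely the loss of seriality. In KD45, $\Box_a\bot$ is false at every state. Announcing a formula can nonetheless delete all $a$-arrows out of a cell, and after that $\Box_a\bot$ becomes true. With two agents $a\neq b$ (available since $|G|\geq 2$), I would chain such deletions so that each announcement pushes the "emptying" one step closer to the evaluation point. The target is a family $\varphi_k$ and a pointed model $M_k,w$ realizing the truth pattern $0^k1$. That model witnesses failure of $0^{k+1}$-validity and also shows $0^k$-satisfiability. The real content is $0^k$-validity over \emph{all} KD45 models.

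\textbf{Step 1 (the witness model).} Build a finite KD45 model with alternating $a$- and $b$-cells $C_0,C_1,\dots,C_k$. The point $w$ is the root, $w$ sees $C_1$ by $a$, states of $C_j$ see $C_{j+1}$ by the alternate agent, and within a cell the agent is an equivalence. Propositional markers $q_0,\dots,q_k$ label the layers. The formula has the shape
\[
\varphi_k:=\Box_a\bot\vee\Box_b\bot\vee\delta_k .
\]
Here $\delta_k$ is a Boolean combination of the markers and of $\Diamond_a,\Diamond_b$ applied to them. It is chosen so that initially $\varphi_k$ fails in $C_0,\dots,C_{k-1}$ and holds only in the last cell $C_k$ via $\delta_k$, but $\delta_k$ becomes false there after one announcement. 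Then $\Box_b\bot$ (say) becomes true in $C_{k-1}$ at stage $2$, this propagates back one layer per announcement, and $\Box_a\bot$ reaches $w$ exactly at stage $k$. Checking this pattern is routine bookkeeping with $R_i|^n\varphi=R_i\cap(W\times A_n)$ as in Lemma~\ref{lem:nonexistence_of_0k1-validity}.

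\textbf{Step 2 ($0^k$-validity).} Take an arbitrary KD45 model with $M,w\models\lnot\varphi_k$, and show $M|^n\varphi_k,w\models\lnot\varphi_k$ for $n<k$. By Lemma~\ref{lem:modal agreement lemma}, modal atoms $\Box_i\cdot$ and $\Diamond_i\cdot$ are constant on $i$-cells. So truth of $\Box_i\bot$ at $w$ after $n$ updates is decided by whether the whole $i$-cell of $w$ has been removed from $A_n$. I would prove by induction on $n$ an invariant of the form: ``if $\Box_i\bot$ becomes true at $w$ at stage $n$, then there is a chain of length $\geq k-n$ of alternating cells below $w$ whose markers satisfy the pattern forced by $\delta_k$.'' The seeding clause $\delta_k$ is designed so that such chains can only be started by markers at depth $k$. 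This bounds the speed of propagation to one layer per step.

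\textbf{Main obstacle.} The hard part is Step 2. An arbitrary model may contain shortcuts, such as cells reachable by both agents or states where $\delta_k$ is true in unintended ways, which could let the deletion reach $w$ in fewer than $k$ steps. I would first try to make $\delta_k$ rigid by tying each layer's marker to the agent that must be used to reach it. For example, $q_j$ is required to see $q_{j+1}$ only via $\Diamond_{a}$ or $\Diamond_b$ according to the parity of $j$, and the layer markers are mutually exclusive. Then a shortcut would contradict the marker constraints.

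If a direct invariant is too messy, the fallback is the KD45 unravelling of Definition~\ref{def:m_w_and_n_r_unravelling} together with its truth-preservation lemma. The first $k$ truth values are bounded by the modal depth, so it suffices to check $0^k$-validity on tree-like models. On those models the rank-by-rank propagation argument of Lemma~\ref{lem:nonexistence_of_0k1-validity_multi_kd45} applies.
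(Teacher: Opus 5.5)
\textbf{There is a genuine gap here, and the mechanism you describe in Step~1 is inconsistent with the update semantics.} Since $R_i|^n\varphi=R_i\cap(W\times A_n)$ with $A_n$ decreasing, a state where $\varphi_k$ is false at stage $0$ loses all incoming arrows from stage $1$ on. You stipulate that $\varphi_k$ fails initially on $C_0,\dots,C_{k-1}$, and the $a$-successors of $w$ form $C_1$, which is among these cells because $k\ge 2$. Hence $R_a|\varphi_k(w)=\varnothing$, so $\Box_a\bot$, and therefore $\varphi_k$, already hold at $w$ at stage $1$. The pattern at $w$ is $01$, not $0^k1$. Likewise every $C_j$ with $j\le k-2$ has an empty successor set at stage $1$.

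This cannot be fixed by tuning. Arrows are never restored, so once a disjunct $\Box_i\bot$ of your $\varphi_k$ becomes true at a state $x$ at stage $n$, $\varphi_k$ holds at $x$ forever. So $x$ can drop out of a predecessor's view only at a stage $\le n$, and emptiness cannot pass from $x$'s layer to the layer above one announcement later. Indeed, for $\Box_a\bot$ to first hold at $w$ at stage $k$, some $a$-successor of $w$ must satisfy $\varphi_k$ at stages $0,\dots,k-2$ and fail it at stage $k-1$. With your shape, that forces $\delta_k$ itself to flip from true to false there. So the intermediate layers would have to be initially \emph{true} and die one per stage from the bottom up, the opposite of what you specify. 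You give no $\delta_k$ that would enforce such a schedule.

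Even with a corrected design, Step~2, which you rightly call the real content, is not carried out: the invariant is only schematic, and the fallback does not close it. The unravelling of Definition~\ref{def:m_w_and_n_r_unravelling} preserves at the root only formulas of modal depth at most $d=md(\varphi)$. The first $k$ truth values are those of $[\uparrow\varphi]^n\varphi$ for $n<k$, whose translation into $\mathcal{L}$ can have depth up to $(n+1)d$. And even on tree-like models, the argument of Lemma~\ref{lem:nonexistence_of_0k1-validity_multi_kd45} is an extremal-state argument for nonexistence, not a way to verify that a particular formula is $0^k$-valid.

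\textbf{The missing idea is to let initial seriality pin down the falsity condition.}
\begin{enumerate}
\item The paper puts $\Box_a\bot$ into every falsity clause $B_j$ with $j\ge 1$. So in a KD45 model, initial falsity forces the depth-two clause $B_0$.
\item The conjunct $\Box_a(A\wedge\bigwedge_j\Diamond_b C_j)$ of $B_0$ makes all $a$-successors of $w$ false too. Thus $\Box_a\bot$ holds at $w$ from stage $1$ on, as a permanent flag.
\item The $k$-step delay is then a counter inside the single $b$-cell of $w$, not a chain of cells. The chosen $C_m$-successor $s_m$ of $w$ can satisfy $B_m$ only once no $C_j$ with $j<m$ is $b$-visible from it. Euclideanness passes $w$'s surviving $b$-successors on to $s_m$, so $s_m$ survives up to stage $m$.
\item Hence at each stage $n\le k-1$, the least $C$-index $b$-visible from $w$ is at most $n$, and $w$ satisfies the corresponding $B_m$.
\end{enumerate}
This works directly in every KD45 model, with no unravelling.
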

\begin{proof}
  Fix $k\geq 2$ and let $P_0,P_1,\ldots,P_{k-1}$ be proposition letters. Let
  \begin{equation*}
    A:=P_0\land\bigwedge_{1\leq r\leq k-1}\lnot P_r
  \end{equation*}
  and for $1\leq j\leq k-1$,
  \begin{equation*}
    C_j:=P_j\land\bigwedge_{\substack{0\leq r\leq k-1\\r\neq j}}\lnot P_r.  
  \end{equation*}
So, the formula $A$ says that only $P_0$ holds among the proposition letters $P_0,P_1,\ldots,P_{k-1}$, and $C_j$ says that only $P_j$ holds. Let
\begin{equation*}
  B_0:=A\land\Diamond_a\top\land\bigwedge_{j=1}^{k-1}\Diamond_b C_j\land\Box_a(A\land\bigwedge_{j=1}^{k-1}\Diamond_b C_j)
\end{equation*}
and for $1\leq j\leq k-1$,
\begin{equation*}
  B_j:=(A\lor C_j)\land\Box_a\bot\land\bigwedge_{m=1}^{j-1}\Box_b\lnot C_m\land\Diamond_b C_j.
\end{equation*}
Finally let $\varphi_k:=\lnot (B_0\lor B_1\lor\cdots\lor B_{k-1})$.

We show that $\varphi_k$ is non-trivially $0^k$-valid but not $0^{k+1}$-valid.

\smallskip\noindent
\textbf{$\varphi_k$ is $0^k$-valid}
Let $M$ be any KD45 model and suppose $M,w\models\lnot\varphi_k$. Then $M,w\models B_0\lor B_1\lor\cdots\lor B_{k-1}$. Our goal is to show $M|^n\varphi_k,w\models\lnot\varphi_k$ for all $1\leq n\leq k-1$. That is,
\begin{equation*}
  M|^n\varphi_k,w\models B_0\lor B_1\lor\cdots\lor B_{k-1}
\end{equation*}
for all $1\leq n\leq k-1$.

We first show $M|^n\varphi_k,w\models\Box_a\bot$ for all $n\geq 1$. Since $M$ is serial, $\Box_a\bot$ is false at $M,w$, so we have $M,w\models B_0$. So, take any $v\in W$ with $wR_a v$. Then, we have $M,v\models A\land \bigwedge_{j=1}^{k-1}\Diamond_b C_j$ by $M,w\models\Box_a(A\land\bigwedge_{j=1}^{k-1}\Diamond_b C_j$). Also, $M,v\models\Diamond_a\top\land\Box_a(A\land\bigwedge_{j=1}^{k-1}\Diamond_b C_j)$ since $M$ is transitive and Euclidean. Thus, we get $M,v\models B_0$. This means that all the $a$-successor arrows from $w$ will be deleted after an update so that we have $M|^n\varphi_k,w\models\Box_a\bot$ for all $n\geq 1$.

Next, for each $1\leq j\leq k-1$, choose an $s_j\in W$ such that $wR_b s_j$ and $M,s_j\models C_j$ (this is possible by $M,w\models B_0$). 
\begin{claim*}
  $w(R_b|^n\varphi_k)s_m$ for all $1\leq n\leq k-1$ and $n\leq m\leq k-1$.
\end{claim*}
\begin{claimproof}
  We show the claim by induction on $n$. 
  
  For $n=1$, fix $1\leq m\leq k-1$. We already have $wR_b s_m$. Also, $M,s_m\models\varphi_k$. In fact, $M,s_m\models\lnot B_0$ because $A$ is false at $M,s_m$ (recall that $M,s_m\models C_m$ and that $A$ says ``only $P_0$ is true'' while $C_m$ says ``only $P_m$ is true''), and $M,s_m\models\lnot B_m$ because $\Box_a\bot$ is false at $M,s_m$. Thus, $w(R_b|\varphi_k)s_m$.

  Assume $w(R_b|^n\varphi_k)s_m$ for some $1\leq n<k-1$ and for all $n\leq m\leq k-1$. Fix $n+1\leq m\leq k-1$. To show $w(R_b|^{n+1}\varphi_k)s_m$, it is enough to show $M|^n\varphi_k,s_m\models\varphi_k$.

  First, we have $M|^n\varphi_k,s_m\models \lnot B_0$ since $M|^n\varphi_k,s_m\models C_m$ and hence $M|^n\varphi_k,s_m\models\lnot A$.

  Second, $M|^n\varphi_k,s_m\models\lnot B_j$ for all $j\neq m$. In fact, $B_j$ requires $A\lor C_j$ but $A$ is impossible by the previous argument while $M|^n\varphi_k,s_m\models C_m$ holds by construction, which is incompatible with $C_j$.

  Finally, $M|^n\varphi_k,s_m\models \lnot B_m$. In fact, since $w(R_b|^n\varphi_k)s_m$ and $w(R_b|^n\varphi_k)s_n$ by the inductive hypothesis, we get $s_m(R_b|^n\varphi_k)s_n$ by Euclideanness. Thus, together with $M,s_n\models C_n$, we have $M|^n\varphi_k,s_m\models\Diamond_b C_n$, namely $M|^n\varphi_k,s_m\not\models\Box_b\lnot C_n$. Thus, $M|^n\varphi_k,s_m\models\lnot B_m$ (note that $1\leq n\leq m-1$).
  
  Therefore, $M|^n\varphi_k,s_m\models\varphi_k$.
\end{claimproof}
Finally, we show $M|^n\varphi_k,w\models\lnot\varphi_k$ for all $1\leq n\leq k-1$. Fix $1\leq n\leq k-1$. By the claim above, we have $w(R_b|^n\varphi_k)s_n$ so there is an $s_j$ such that $w(R_b|^n\varphi_k)s_j$ with $1\leq j\leq n$. Let $m$ be the least such index.

Now, we show $M|^n\varphi_k,w\models B_m$ in particular. First, $M|^n\varphi_k,w\models A$ because $M,w\models B_0$ and valuations never change. Also, $M|^n\varphi_k,w\models\Box_a\bot$ as already shown. Furthermore, $M|^n\varphi_k,w\models\bigwedge_{j=1}^{m-1}\Box_b\lnot C_j\land \Diamond_b C_m$ by minimality of $m$. Thus, $M|^n\varphi_k,w\models B_m$.

Therefore, $M|^n\varphi_k,w\models\lnot\varphi_k$ for all $1\leq n\leq k-1$. Together with the initial assumption that $M,w\models\lnot\varphi_k$, we conclude that $\varphi_k$ is $0^k$-valid.

\smallskip\noindent
\textbf{$\varphi_k$ is not $0^{k+1}$-valid.}
Define the KD45 model $M=(W, \{R_i\}_{i\in G},V)$ by $W=\{w,s_1,\ldots,s_{k-1}\}$, $R_a=W\times \{w\}$, $R_b=W\times \{s_1,\ldots,s_{k-1}\}$, $R_c=W\times W$ for all $c\in G\backslash\{a,b\}$, and finally $V(P_0)=\{w\}$ and $V(P_j)=\{s_j\}$ for $1\leq j\leq k-1$. We then have $M,w\models A$ and $M,s_j\models C_j$ for $1\leq j\leq k-1$.

We check the truth of $\varphi_k$ at each state in the updated models (see Figure~\ref{fig:kd45_countermodel_0k_but_not_0_k+1}).

For the initial model $M$, we have $M,w\models\lnot\varphi_k$ by $M,w\models B_0$. Also, $M,s_j\models\varphi_k$ for all $1\leq j\leq k-1$ since $B_0$ is false by $M,s_j\not\models A$ and $B_j$ is false by $M,s_j\not\models\bigwedge_{m=1}^{j-1}\Box_b\lnot C_m$.

For $M|\varphi_k$, we have $M|\varphi_k,w\models\lnot\varphi_k$ and $M|\varphi_k,s_1\models\lnot\varphi_k$ because $B_1$ holds at $M|\varphi_k,w$ and $M|\varphi_k,s_1$. Also, $M|\varphi_k,s_j\models\varphi_k$ for all $2\leq j\leq k-1$.

For $M|^2\varphi_k$, we have $M|^2\varphi_k,w\models\lnot\varphi_k$ and $M|^2\varphi_k,s_2\models\lnot\varphi_k$ because $B_2$ holds at $M|^2\varphi_k,w$ and $M|^2\varphi_k,s_2$. Also, $M|^2\varphi_k,s_j\models\varphi_k$ for all $3\leq j\leq k-1$.

Continuing this reasoning, we have $M|^n\varphi_k,w\models\lnot\varphi_k$ for all $0\leq n\leq k-1$.

On the other hand, we have $M|^k\varphi_k,w\models\varphi_k$ because $\Diamond_b C_j$ is false for all $1\leq j\leq k-1$. Thus, we have the truth pattern $0^k1^\omega$, so $\varphi_k$ is not $0^{k+1}$-valid.

\end{proof}

\begin{figure}[htbp]
  \centering
  \hspace*{-0.7cm}%
  \begin{tikzpicture}[
      >=Stealth,
      world/.style={circle, fill=black, inner sep=1.45pt},
      acc/.style={->, thick, shorten >=2pt, shorten <=2pt},
      bacc/.style={<->, thick, shorten >=1pt, shorten <=1pt},
      lab/.style={font=\small},
      every loop/.style={looseness=7, min distance=8mm},
      scale=0.65,
      transform shape
    ]


    \begin{scope}[shift={(0,0)}]
      \node[world] (w0) at (2.4,2.7) {};
      \node[world] (s10) at (0,0) {};
      \node[world] (s20) at (1.6,0) {};
      \node (dots0) at (3.2,0) {$\cdots$};
      \node[world] (sk0) at (4.8,0) {};

      \node[lab] at (1.8,3) {$w$};
      \node[lab] at (2.4,3.7) {$\neg\varphi_k$};

      \node[lab] at (0,-1.3) {$s_1$};
      \node[lab] at (1.6,-1.3) {$s_2$};
      \node[lab] at (4.8,-1.3) {$s_{k-1}$};

      \node[lab] at (0,-1.6) {$\varphi_k$};
      \node[lab] at (1.6,-1.6) {$\varphi_k$};
      \node[lab] at (4.8,-1.6) {$\varphi_k$};

      \path[acc]
        (w0) edge[loop above] node[lab, right] {$a$} ()
        (s10) edge[loop below] node[lab, below] {$b$} ()
        (s20) edge[loop below] node[lab, below] {$b$} ()
        (sk0) edge[loop below] node[lab, below] {$b$} ();

      \draw[acc] (s10) to[bend left=24] node[lab, left] {$a$} (w0);
      \draw[acc] (w0) to[bend left=24] node[lab, right] {$b$} (s10);

      \draw[acc] (s20) to[bend left=19] node[lab, left] {$a$} (w0);
      \draw[acc] (w0) to[bend left=19] node[lab, right] {$b$} (s20);

      \draw[acc] (sk0) to[bend left=24] node[lab, right] {$a$} (w0);
      \draw[acc] (w0) to[bend left=24] node[lab, right] {$b$} (sk0);

      \draw[bacc] ($(s10)+(0.18,0)$) -- ($(s20)+(-0.18,0)$);
      \draw[bacc] ($(s20)+(0.18,0)$) -- ($(dots0)+(-0.35,0)$);
      \draw[bacc] ($(dots0)+(0.35,0)$) -- ($(sk0)+(-0.18,0)$);

      \node[font=\large] at (2.4,-1.9) {$M$};
    \end{scope}

    \node[font=\Large] at (7.20,1.10) {$\overset{\varphi_k}{\Longrightarrow}$};

    \begin{scope}[shift={(9.2,0)}]
      \node[world] (w1) at (2.4,2.7) {};
      \node[world] (s11) at (0,0) {};
      \node[world] (s21) at (1.6,0) {};
      \node (dots1) at (3.2,0) {$\cdots$};
      \node[world] (sk1) at (4.8,0) {};

      \node[lab] at (2.4,3.03) {$w$};
      \node[lab] at (2.4,3.40) {$\neg\varphi_k$};

      \node[lab] at (0,-1.3) {$s_1$};
      \node[lab] at (1.6,-1.3) {$s_2$};
      \node[lab] at (4.8,-1.3) {$s_{k-1}$};

      \node[lab] at (0,-1.6) {$\neg\varphi_k$};
      \node[lab] at (1.6,-1.6) {$\varphi_k$};
      \node[lab] at (4.8,-1.6) {$\varphi_k$};

      \path[acc]
        (s11) edge[loop below] node[lab, below] {$b$} ()
        (s21) edge[loop below] node[lab, below] {$b$} ()
        (sk1) edge[loop below] node[lab, below] {$b$} ();

      \draw[acc] (w1) -- node[lab, left] {$b$} (s11);
      \draw[acc] (w1) -- node[lab, left] {$b$} (s21);
      \draw[acc] (w1) -- node[lab, right] {$b$} (sk1);

      \draw[bacc] ($(s11)+(0.18,0)$) -- ($(s21)+(-0.18,0)$);
      \draw[bacc] ($(s21)+(0.18,0)$) -- ($(dots1)+(-0.35,0)$);
      \draw[bacc] ($(dots1)+(0.35,0)$) -- ($(sk1)+(-0.18,0)$);

      \node[font=\large] at (2.4,-1.9) {$M\mid\varphi_k$};
    \end{scope}

    \node[font=\Large] at (16.05,1.10) {$\overset{\varphi_k}{\Longrightarrow}$};

    \begin{scope}[shift={(18.0,0)}]
      \node[world] (w2) at (2.65,2.7) {};
      \node[world] (s12) at (0,0) {};
      \node[world] (s22) at (1.6,0) {};
      \node[world] (s32) at (3.0,0) {};
      \node (dots2) at (4.5,0) {$\cdots$};
      \node[world] (sk2) at (6.0,0) {};

      \node[lab] at (2.65,3.03) {$w$};
      \node[lab] at (2.65,3.40) {$\neg\varphi_k$};

      \node[lab] at (0,-1.3) {$s_1$};
      \node[lab] at (1.6,-1.3) {$s_2$};
      \node[lab] at (3.0,-1.3) {$s_3$};
      \node[lab] at (6.0,-1.3) {$s_{k-1}$};

      \node[lab] at (1.6,-1.6) {$\neg\varphi_k$};
      \node[lab] at (3.0,-1.6) {$\varphi_k$};
      \node[lab] at (6.0,-1.6) {$\varphi_k$};

      \path[acc]
        (s22) edge[loop below] node[lab, below] {$b$} ()
        (s32) edge[loop below] node[lab, below] {$b$} ()
        (sk2) edge[loop below] node[lab, below] {$b$} ();

      \draw[acc] (w2) -- node[lab, left] {$b$} (s22);
      \draw[acc] (w2) -- node[lab, right] {$b$} (s32);
      \draw[acc] (w2) -- node[lab, right] {$b$} (sk2);

      \draw[acc] ($(s12)+(0.18,0)$) -- ($(s22)+(-0.18,0)$);
      \draw[bacc] ($(s22)+(0.18,0)$) -- ($(s32)+(-0.18,0)$);
      \draw[bacc] ($(s32)+(0.18,0)$) -- ($(dots2)+(-0.35,0)$);
      \draw[bacc] ($(dots2)+(0.35,0)$) -- ($(sk2)+(-0.18,0)$);

      \node[font=\large] at (2.65,-1.9) {$M\mid^{2}\varphi_k$};
    \end{scope}


    \node[font=\Large] at (0.55,-5.35) {$\overset{\varphi_k}{\Longrightarrow}$};
    \node[font=\large] at (2.05,-5.35) {$\cdots$};
    \node[font=\Large] at (3.55,-5.35) {$\overset{\varphi_k}{\Longrightarrow}$};

    \begin{scope}[shift={(5.1,-6.2)}]
      \node[world] (wk1) at (2.4,2.7) {};
      \node[world] (t1) at (0,0) {};
      \node[world] (t2) at (1,0) {};
      \node (td1) at (2.4,0) {$\cdots$};
      \node[world] (tkm2) at (3.4,0) {};
      \node[world] (tkm1) at (4.8,0) {};

      \node[lab] at (2.4,3.0) {$w$};
      \node[lab] at (2.4,3.3) {$\neg\varphi_k$};

      \node[lab] at (0,-1.3) {$s_1$};
      \node[lab] at (1,-1.3) {$s_2$};
      \node[lab] at (3.4,-1.3) {$s_{k-2}$};
      \node[lab] at (4.8,-1.3) {$s_{k-1}$};

      \node[lab] at (4.8,-1.6) {$\neg\varphi_k$};

      \path[acc]
        (tkm1) edge[loop below] node[lab, below] {$b$} ();

      \draw[acc] (wk1) -- node[lab, right] {$b$} (tkm1);

      \draw[acc] ($(tkm2)+(0.35,0)$) -- ($(tkm1)+(-0.18,0)$);

      \node[font=\large] at (2.4,-1.9) {$M\mid^{k-1}\varphi_k$};
    \end{scope}

    \node[font=\Large] at (12.35,-5.35) {$\overset{\varphi_k}{\Longrightarrow}$};

    \begin{scope}[shift={(14.25,-6.2)}]
      \node[world] (wk) at (2.4,2.7) {};
      \node[world] (u1) at (0,0) {};
      \node (ud) at (2.4,0) {$\cdots$};
      \node[world] (uk) at (4.8,0) {};

      \node[lab] at (2.4,3.0) {$w$};
      \node[lab] at (2.4,3.3) {$\varphi_k$};

      \node[lab] at (0,-0.38) {$s_1$};
      \node[lab] at (4.8,-0.38) {$s_{k-1}$};

      \node[lab] at (4.8,-0.78) {$\varphi_k$};

      \node[font=\large] at (2.4,-1.45) {$M\mid^{k}\varphi_k$};
    \end{scope}

    \node[font=\Large] at (21.35,-5.35) {$\overset{\varphi_k}{\Longrightarrow}$};
    \node[font=\large] at (22.85,-5.35) {$\cdots$};

  \end{tikzpicture}
  \caption{$\varphi_k$ is not $0^{k+1}$-valid in multi-agent KD45 (Lemma~\ref{lem:0k_but_not_0kplus1_multi_kd45}). Transitive $b$-arrows at the bottom are omitted.}
  \label{fig:kd45_countermodel_0k_but_not_0_k+1}
\end{figure}

In the following two lemmas, we use the notion of the \emph{type} of a state.
\begin{lemma}[Existence of non-trivially $01^k$-valid but not $01^{k+1}$-valid formula]\label{lem:01k-valid_but_not_01kplus1-valid}
  In multi-agent K45, KD45, and S5: For every $k\geq 1$, there is a formula that is non-trivially $01^k$-valid but not $01^{k+1}$-valid.
\end{lemma}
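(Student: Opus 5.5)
The plan is to adapt the ``staircase'' construction from the proof of Lemma~\ref{lem:0k_but_not_0kplus1_multi_kd45} to the pattern $01^k$. Since Lemmas~\ref{lem:nonexistence_of_01k0-validity} and \ref{lem:nonexistence_of_01k0-validity_multi_kd45} forbid a \emph{non-trivially} $01^k0$-valid formula, I will not look for $01^k0$-validity. Instead I want a formula $\psi_k$ with two properties. First, every state where $\psi_k$ is false sees $\psi_k$ true for at least $k$ rounds. Second, one concrete model realizes the pattern $01^k0$ at some point. A single agent $a$ should suffice for all three frame classes, with every other agent interpreted by the universal relation in the countermodel. Working with one agent lets me use Lemma~\ref{lem:modal agreement lemma} throughout.

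\textbf{Construction.} Fix letters $P_0,\ldots,P_k$. As in Lemma~\ref{lem:0k_but_not_0kplus1_multi_kd45}, let the \emph{type} $t(u)=j$ mean that exactly $P_j$ holds among these letters at $u$. The idea is a timer read at the actual state $w$, of type $0$, whose $a$-successors carry types $1,\ldots,k$. I define $\psi_k$ as a disjunction of pieces $E_0\lor E_1\lor\cdots$. For a state of type $j\geq 1$, the piece $E_j$ makes $\psi_k$ true exactly as long as some successor of a smaller type $m<j$ is still visible, in the style of $\Box_b\lnot C_m$ and $\Diamond_b C_j$. Consequently, in a cluster containing all types, one type is deleted per announcement, in increasing order. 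For a state of type $0$, the piece $E_0$ says $\Diamond_a(\text{type}\geq 1)\land\Box_a(\text{type}\neq 0)$. This piece is false initially when $w$ sees its own type-$0$ copy, since $\psi_k$ must be false at $w$. It becomes true once type-$0$ states are cut. It stays true while some positive type survives. It becomes false again once all positive types have been deleted, which in the model happens after $k$ rounds.

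\textbf{Verification.} \emph{Countermodel:} I will use the S5 cluster $W=\{w,s_1,\ldots,s_k\}$ with $t(w)=0$, $t(s_j)=j$, and all relations universal; this model is also KD45 and K45. I will compute $\llbracket\psi_k\rrbracket$ round by round, as in Figure~\ref{fig:kd45_countermodel_0k_but_not_0_k+1}, and check that $w$ follows $01^k0$. This gives satisfiability and shows $\psi_k$ is not $01^{k+1}$-valid. \emph{Validity of $01^k$:} take an arbitrary model with $M,w\models\lnot\psi_k$. By Lemma~\ref{lem:modal agreement lemma}, every $a$-successor $u$ of $w$ agrees with $w$ on all modal atoms. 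So whether $u\models\psi_k$ depends only on $t(u)$ together with these shared atoms. This should make the deleted successors exactly those of type $0$ in the first round, and then those of the lowest surviving positive type in each subsequent round. An induction on $n\leq k$, like the Claim in Lemma~\ref{lem:0k_but_not_0kplus1_multi_kd45}, then shows that $w$ retains some positive-type successor through round $k$, so $M|^n\psi_k,w\models\psi_k$ for $1\leq n\leq k$.

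\textbf{Main obstacle.} The hard step is $01^k$-validity over \emph{arbitrary} models. The first way this can fail is degenerate cases where $\lnot\psi_k$ holds for a reason other than the intended one, such as empty successor sets in K45, states of no type, or $w$ having a positive type. I would first add disjuncts, for example $\Box_a\bot$, ``no type'', or ``type $\ne 0$ with no appropriate successors'', that make $\psi_k$ true in these cases. Then I would check that in each remaining case $\lnot\psi_k$ forces the intended configuration. The second way this can fail is a successor cluster in which types are missing, so that it runs out of positive types earlier than after $k$ rounds. To prevent this, I would strengthen the type-$0$ clause of $\lnot\psi_k$ to demand $\bigwedge_{j=1}^k\Diamond_a(\text{type }j)$, mirroring $B_0$ in Lemma~\ref{lem:0k_but_not_0kplus1_multi_kd45}. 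Then any state falsifying $\psi_k$ has all $k$ positive types available, and the timer cannot run out before round $k$.
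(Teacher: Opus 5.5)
Your construction has a genuine gap at the point where you place the final $0$. In your countermodel, $w$ becomes false at time $k+1$ \emph{because} it has lost all of its $a$-successors, i.e.\ $R_a|^{k+1}\psi_k(w)=\varnothing$. Since $\psi_k$ mentions only agent $a$, its truth at a point without $a$-successors is fixed by the valuation alone. So $\psi_k$ is also false in the one-point K45 model $(\{w\},R_a=\varnothing)$ with $w$ of type $0$. That model is a fixed point of the update, so $\psi_k$ stays false there forever and is not even $01$-valid in K45. Your proposed remedy, adding $\Box_a\bot$ as a disjunct of $\psi_k$, cures exactly this case. But then $w$ is true at time $k+1$ in your own countermodel, and the pattern becomes $01^\omega$. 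No tuning of the clauses helps: with a single agent, a final $0$ triggered by an empty successor set is incompatible with $01$-validity over K45. The paper avoids this with an anchor type $b$ for which $B_k$ has no clause, so $b$-states are never deleted and successor sets never become empty. The root's final $0$ is triggered instead by the \emph{nonempty} exact successor set $X_{k+1}=\{b,a_{k+1}\}$. One more update turns this into $\{b\}$, so the root is true again: case (b) gives $01^\omega$, and the S5 cluster gives $01^k01^\omega$.

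Second, consider the positive-type clause as you state it (``true exactly as long as some smaller type is visible''). A type-$j$ state that sees no smaller type is then false, and it stays false as long as it has any successor, since deletions never restore a smaller type. You cannot push this into the ``degenerate cases'': the generated submodel of $M|^n\psi_k$ at $s_n$ is the S5 cluster $\{s_n,\dots,s_k\}$ with $\psi_k$ false at $s_n$. So $01^k$-validity requires $s_n$ to be true at times $n+1,\dots,n+k$. Under your clause, $s_n$ then sees only larger types and is false for $n<k$; your own countermodel thus refutes $01$-validity once $k\ge 2$. Your repair (``make type $\neq 0$ with no appropriate successors true'') would freeze the staircase if it covered this configuration, because $s_n$ must be false at time $n$ to be deleted. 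What is needed is that the state's own type appear in the falsity condition. Writing $\tau_j$ for ``type $j$'', the falsity clause should be $\tau_j\land\Diamond_a\tau_j\land\bigwedge_{m<j}\Box_a\lnot\tau_m$. Then the update deletes type $j$ and flips the state to true for good. The paper's clauses $\chi_{a_i}\land E_{X_i}$, which fix the exact successor-type set including $a_i$, do precisely this. With that change, and with $\tau_0\land\Box_a\bot$ as an additional \emph{falsity} clause (the opposite of your suggestion), your staircase does yield a correct example in KD45 and S5, where initial models are serial. For K45 you still need the anchor.
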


\begin{proof}

  Fix $k\ge 1$. Let $A_k:=\{r,b,a_0,a_1,\dots,a_{k+1}\}$. For each $t\in A_k$, take a new propositional letter $P_t$, and define the formula
  \begin{equation*}
    \chi_t:=P_t\wedge\bigwedge_{s\in A_k\setminus\{t\}}\neg P_s.
  \end{equation*}
  Then, define the $A_k$-\emph{type}  of state $w$ in a model $M$, denoted $\text{type}_{A_k}\,(w)$ or simply $\text{type}\,(w)$, to be the unique $t\in A_k$ such that $M,w\models\chi_t$ (if such a unique $t$ exists). So, $\chi_t$ says that the $A_k$-type of the current state is $t$.
   For $X\subseteq A_k$, define
  \begin{equation*}
    E_X:=\Box\bigvee_{t\in X}\chi_t\ \wedge\ \bigwedge_{t\in X}\Diamond\chi_t.
  \end{equation*}
  $E_X$ says that the current successor set contains exactly the types in $X$. For $0\le i\le k+1$, let $X_i:=\{b,a_i,a_{i+1},\dots,a_{k+1}\}$, $X_{k+2}:=\{b\}$, and $Y_0:=X_0\cup\{r\}$. Now define
  \begin{equation*}
    B_k:= \bigl(\chi_r\wedge(E_{Y_0}\vee E_{X_{k+1}})\bigr) \vee (\chi_{a_0}\land E_{Y_0}) \lor\bigvee_{i=1}^{k+1}\bigl(\chi_{a_i}\wedge E_{X_i}\bigr),
  \end{equation*} and finally $\varphi_k:=\neg B_k$.
  $B_k$ says that either (a) the current state has type $r$ and the successor types are exactly in $Y_0$, (b) the current state has type $r$ and the successor types are exactly in $X_{k+1}$, (c) the current state has type $a_0$ and the successor types are exactly in $Y_0$, or (d) the current state has type $a_i$ and the successor types are exactly in $X_i$ for some $1\leq i \leq k+1$.

\smallskip\noindent\textbf{$\varphi_k$ is $01^k$-valid}
  We prove that $\varphi_k$ is $01^k$-valid. Let $M=(W,\{R_i\}_{i\in G},V)$ be any K45 model and suppose $M,w\models\neg\varphi_k$, i.e., $M,w\models B_k$.  Note that if $wR_iv$, then $R_i(w)=R_i(v)$ for all $i\in G$ and $w,v\in W$ by Lemma~\ref{lem:modal agreement lemma}, so the successor types of $w$ agree with those of $v$.

  \smallskip\noindent(a) If $\text{type}\,(w)=r$ and $M,w\models E_{Y_0}$, $\varphi_k$ is false only at $r$ and $a_0$ among the types in $Y_0$ (see the model $M$ in Figure~\ref{fig:type_transition_for_a__01k_but_not_01kplus1}). Thus, after the announcement of $\varphi_k$, only the types $r$ and $a_0$ are deleted, changing the successor type $Y_0$ to $X_1$. Also, $M|\varphi_k,w\models\varphi_k$ since $\text{type}\,(w)=r$ but $w$ satisfies neither $E_{Y_0}$ nor $E_{X_{k+1}}$. Repeating this, the successor type eventually becomes $X_{k+2}=\{b\}$ and stabilizes there. In this process, we have the truth pattern $01^k01^\omega$ at $w$.

  \smallskip\noindent
  (b) If $\text{type}\,(w)=r$ and $M,w\models E_{X_{k+1}}$, we have the truth pattern $01^\omega$ as in Figure~\ref{fig:type_transition_for_b_01k_but_not_01kplus1}.

  \smallskip\noindent
  (c) If $\text{type}\,(w)=a_0$ and $M,w\models E_{Y_0}$, we have the truth pattern $01^\omega$ as in Figure~\ref{fig:type_transition_for_c_01k_but_not_01kplus1} (with some modification to the figure).

  \smallskip\noindent
  (d) If $\text{type}\,(w)=a_i$ and $M,w\models E_{X_i}$ for some $0\leq i\leq k+1$, we have the truth pattern $01^\omega$ as in Figure~\ref{fig:type_transition_for_c_01k_but_not_01kplus1}.

  Therefore, $\varphi_k$ is indeed $01^k$-valid.

  \smallskip\noindent\textbf{$\varphi_k$ is not $01^{k+1}$-valid}
  Next, we show that $\varphi_k$ is not $01^{k+1}$-valid. Define the S5 model $M=(W,R, V)$ by $W=\{w_r, w_{b},w_{a_0},\ldots,w_{a_{k+1}}\}$, $R=W\times W$, and $V(P_t)=\{w_t\}$ for each $t\in A_k$. Then, the truth of $\varphi_k$ at $w_r$ follows $01^k01^\omega$ as in Figure~\ref{fig:type_transition_for_a__01k_but_not_01kplus1}, so $\varphi_k$ is not $01^{k+1}$-valid.

  Thus, there is a formula that is non-trivially $01^k$-valid but not $01^{k+1}$-valid.
\end{proof}

\begin{figure}[htbp]
  \centering
  \hspace*{-0.7cm}%
  \begin{subfigure}[t]{\linewidth}
  \begin{tikzpicture}[
      >=Stealth,
      world/.style={circle, fill=black, inner sep=1.25pt},
      edge/.style={->, thick, shorten >=2pt, shorten <=2pt},
      capsule/.style={draw, rounded corners=7pt, inner xsep=13pt, inner ysep=6pt},
      tlabel/.style={inner sep=1pt},
      vlabel/.style={inner sep=1pt},
      xlab/.style={font=\small},
      scale=0.65,
      transform shape
    ]


    \begin{scope}[shift={(0,0)}]
      \CaseAState{r0}{(-0.95,0.25)}{r}{\neg\varphi_k}
      \CaseAState{b0}{(0,0.25)}{b}{\varphi_k}
      \CaseAState{a00}{(0.95,0.25)}{a_0}{\neg\varphi_k}
      \CaseAState{a10}{(1.90,0.25)}{a_1}{\varphi_k}
      \node (dots0) at (2.65,0.25) {$\cdots$};
      \CaseAState{akp10}{(3.60,0.25)}{a_{k+1}}{\varphi_k}
      \CaseARoot{w0}{(1.78,-1.45)}{r}{\neg\varphi_k}
      \node[capsule, fit=(r0)(b0)(a00)(a10)(dots0)(akp10)] (cap0) {};
      \node[xlab, left=2pt of cap0] {$Y_0$};
      \draw[edge] (w0) -- (r0);
      \draw[edge] (w0) -- (b0);
      \draw[edge] (w0) -- (a00);
      \draw[edge] (w0) -- (a10);
      \draw[edge] (w0) -- (akp10);
      \node[font=\large] at (1.78,-2.55) {$M$};
    \end{scope}

    \node[font=\Large] at (5.45,-0.15) {$\overset{\varphi_k}{\Longrightarrow}$};

    \begin{scope}[shift={(6.55,0)}]
      \CaseAState{b1}{(0,0.25)}{b}{\varphi_k}
      \CaseAState{a11}{(0.95,0.25)}{a_1}{\neg\varphi_k}
      \CaseAState{a21}{(1.90,0.25)}{a_2}{\varphi_k}
      \node (dots1) at (2.65,0.25) {$\cdots$};
      \CaseAState{akp11}{(3.60,0.25)}{a_{k+1}}{\varphi_k}
      \CaseARoot{w1}{(1.78,-1.45)}{r}{\varphi_k}
      \node[capsule, fit=(b1)(a11)(a21)(dots1)(akp11)] (cap1) {};
      \node[xlab, left=2pt of cap1] {$X_1$};
      \draw[edge] (w1) -- (b1);
      \draw[edge] (w1) -- (a11);
      \draw[edge] (w1) -- (a21);
      \draw[edge] (w1) -- (akp11);
      \node[font=\large] at (1.78,-2.55) {$M\mid\varphi_k$};
    \end{scope}

    \node[font=\Large] at (11.95,-0.15) {$\overset{\varphi_k}{\Longrightarrow}$};
    \node[font=\large] at (13.10,-0.15) {$\cdots$};
    \node[font=\Large] at (14.20,-0.15) {$\overset{\varphi_k}{\Longrightarrow}$};

    \begin{scope}[shift={(15.35,0)}]
      \CaseAState{bk}{(0,0.25)}{b}{\varphi_k}
      \CaseAState{ak}{(1.10,0.25)}{a_k}{\neg\varphi_k}
      \CaseAState{akp1k}{(2.20,0.25)}{a_{k+1}}{\varphi_k}
      \CaseARoot{wk}{(1.10,-1.45)}{r}{\varphi_k}
      \node[capsule, fit=(bk)(ak)(akp1k)] (capk) {};
      \node[xlab, left=2pt of capk] {$X_k$};
      \draw[edge] (wk) -- (bk);
      \draw[edge] (wk) -- (ak);
      \draw[edge] (wk) -- (akp1k);
      \node[font=\large] at (1.10,-2.55) {$M\mid^{k}\varphi_k$};
    \end{scope}


    \node[font=\Large] at (-0.35,-5.35) {$\overset{\varphi_k}{\Longrightarrow}$};

    \begin{scope}[shift={(2.05,-5.20)}]
      \CaseAState{bkp1}{(0,0)}{b}{\varphi_k}
      \CaseAState{akp1}{(1.55,0)}{a_{k+1}}{\neg\varphi_k}
      \CaseARoot{wkp1}{(0.78,-1.45)}{r}{\neg\varphi_k}
      \node[capsule, fit=(bkp1)(akp1)] (capkp1) {};
      \node[xlab, left=2pt of capkp1] {$X_{k+1}$};
      \draw[edge] (wkp1) -- (bkp1);
      \draw[edge] (wkp1) -- (akp1);
      \node[font=\large] at (0.78,-2.55) {$M\mid^{k+1}\varphi_k$};
    \end{scope}

    \node[font=\Large] at (5.20,-5.35) {$\overset{\varphi_k}{\Longrightarrow}$};

    \begin{scope}[shift={(8.00,-5.20)}]
      \CaseAState{bkp2}{(0,0)}{b}{\varphi_k}
      \CaseARoot{wkp2}{(0,-1.45)}{r}{\varphi_k}
      \node[capsule, fit=(bkp2)] (capkp2) {};
      \node[xlab, left=2pt of capkp2] {$X_{k+2}$};
      \draw[edge] (wkp2) -- (bkp2);
      \node[font=\large] at (0,-2.55) {$M\mid^{k+2}\varphi_k$};
    \end{scope}

    \node[font=\Large] at (10.15,-5.35) {$\overset{\varphi_k}{\Longrightarrow}$};

    \begin{scope}[shift={(13.55,-5.20)}]
      \CaseAState{bkp3}{(0,0)}{b}{\varphi_k}
      \CaseARoot{wkp3}{(0,-1.45)}{r}{\varphi_k}
      \node[capsule, fit=(bkp3)] (capkp3) {};
      \node[xlab, left=2pt of capkp3] {$X_{k+2}$};
      \draw[edge] (wkp3) -- (bkp3);
      \node[font=\large] at (0,-2.55) {$M\mid^{k+3}\varphi_k$};
    \end{scope}

    \node[font=\Large] at (15.95,-5.35) {$\overset{\varphi_k}{\Longrightarrow}$};
    \node[font=\large] at (17.25,-5.35) {$\cdots$};

  \end{tikzpicture}
  \caption{The case with $\text{type} (w)=r$ and $M,w\models E_{Y_0}$}
  \label{fig:type_transition_for_a__01k_but_not_01kplus1}
\end{subfigure}
\medskip

\begin{subfigure}[t]{\linewidth}
  \centering

  \begin{tikzpicture}[
      >=Stealth,
      world/.style={circle, fill=black, inner sep=1.25pt},
      edge/.style={->, thick, shorten >=2pt, shorten <=2pt},
      capsule/.style={draw, rounded corners=7pt, inner xsep=13pt, inner ysep=6pt},
      tlabel/.style={inner sep=1pt},
      vlabel/.style={inner sep=1pt},
      xlab/.style={font=\small},
      scale=0.65,
      transform shape
    ]

    \begin{scope}[shift={(0,0)}]
      \CaseAState{bb0}{(0,0)}{b}{\varphi_k}
      \CaseAState{bakp10}{(1.55,0)}{a_{k+1}}{\neg\varphi_k}
      \CaseARoot{bw0}{(0.78,-1.45)}{r}{\neg\varphi_k}
      \node[capsule, fit=(bb0)(bakp10)] (bcap0) {};
      \node[xlab, left=2pt of bcap0] {$X_{k+1}$};
      \draw[edge] (bw0) -- (bb0);
      \draw[edge] (bw0) -- (bakp10);
      \node[font=\large] at (0.78,-2.55) {$M$};
    \end{scope}

    \node[font=\Large] at (3.25,-0.15) {$\overset{\varphi_k}{\Longrightarrow}$};

    \begin{scope}[shift={(6.00,0)}]
      \CaseAState{bb1}{(0,0)}{b}{\varphi_k}
      \CaseARoot{bw1}{(0,-1.45)}{r}{\varphi_k}
      \node[capsule, fit=(bb1)] (bcap1) {};
      \node[xlab, left=2pt of bcap1] {$X_{k+2}$};
      \draw[edge] (bw1) -- (bb1);
      \node[font=\large] at (0,-2.55) {$M\mid\varphi_k$};
    \end{scope}

    \node[font=\Large] at (7.35,-0.15) {$\overset{\varphi_k}{\Longrightarrow}$};

    \begin{scope}[shift={(10.0,0)}]
      \CaseAState{bb2}{(0,0)}{b}{\varphi_k}
      \CaseARoot{bw2}{(0,-1.45)}{r}{\varphi_k}
      \node[capsule, fit=(bb2)] (bcap2) {};
      \node[xlab, left=2pt of bcap2] {$X_{k+2}$};
      \draw[edge] (bw2) -- (bb2);
      \node[font=\large] at (0,-2.55) {$M\mid^{2}\varphi_k$};
    \end{scope}

    \node[font=\Large] at (11.45,-0.15) {$\overset{\varphi_k}{\Longrightarrow}$};
    \node[font=\large] at (12.70,-0.15) {$\cdots$};
  \end{tikzpicture}
  \caption{The case with $\text{type}\,(w)=r$ and $M,w\models E_{X_{k+1}}$}
  \label{fig:type_transition_for_b_01k_but_not_01kplus1}
\end{subfigure}
\medskip
\begin{subfigure}[t]{\linewidth}
  \centering
  \hspace*{-2.0cm}%

  \begin{tikzpicture}[
      >=Stealth,
      world/.style={circle, fill=black, inner sep=1.25pt},
      edge/.style={->, thick, shorten >=2pt, shorten <=2pt},
      capsule/.style={draw, rounded corners=7pt, inner xsep=13pt, inner ysep=6pt},
      tlabel/.style={inner sep=1pt},
      vlabel/.style={inner sep=1pt},
      xlab/.style={font=\small},
      scale=0.55,
      transform shape
    ]

    \begin{scope}[shift={(0,0)}]
      \CaseAState{cb0}{(0,0.3)}{b}{\varphi_k}
      \CaseAState{cai0}{(0.95,0.3)}{a_i}{\neg\varphi_k}
      \CaseAState{caip10}{(2.00,0.3)}{a_{i+1}}{\varphi_k}
      \node (cdots0) at (2.75,0.22) {$\cdots$};
      \CaseAState{cakp10}{(3.70,0.3)}{a_{k+1}}{\varphi_k}
      \CaseARoot{cw0}{(1.90,-1.45)}{a_i}{\neg\varphi_k}
      \node[capsule, fit=(cb0)(cai0)(caip10)(cdots0)(cakp10)] (ccap0) {};
      \node[xlab, left=2pt of ccap0] {$X_i$};
      \draw[edge] (cw0) -- (cb0);
      \draw[edge] (cw0) -- (cai0);
      \draw[edge] (cw0) -- (caip10);
      \draw[edge] (cw0) -- (cakp10);
      \node[font=\large] at (1.90,-2.55) {$M$};
    \end{scope}

    \node[font=\Large] at (5.55,-0.15) {$\overset{\varphi_k}{\Longrightarrow}$};

    \begin{scope}[shift={(6.85,0)}]
      \CaseAState{cb1}{(0,0.3)}{b}{\varphi_k}
      \CaseAState{caip11}{(1.05,0.3)}{a_{i+1}}{\neg\varphi_k}
      \CaseAState{caip21}{(2.10,0.3)}{a_{i+2}}{\varphi_k}
      \node (cdots1) at (2.85,0.22) {$\cdots$};
      \CaseAState{cakp11}{(3.80,0.3)}{a_{k+1}}{\varphi_k}
      \CaseARoot{cw1}{(1.95,-1.45)}{a_i}{\varphi_k}
      \node[capsule, fit=(cb1)(caip11)(caip21)(cdots1)(cakp11)] (ccap1) {};
      \node[xlab, left=2pt of ccap1, yshift=6pt] {$X_{i+1}$};
      \draw[edge] (cw1) -- (cb1);
      \draw[edge] (cw1) -- (caip11);
      \draw[edge] (cw1) -- (caip21);
      \draw[edge] (cw1) -- (cakp11);
      \node[font=\large] at (1.95,-2.55) {$M\mid\varphi_k$};
    \end{scope}

    \node[font=\Large] at (12.60,-0.15) {$\overset{\varphi_k}{\Longrightarrow}$};
    \node[font=\large] at (13.85,-0.15) {$\cdots$};
    \node[font=\Large] at (15.10,-0.15) {$\overset{\varphi_k}{\Longrightarrow}$};

    \begin{scope}[shift={(16.85,0)}]
      \CaseAState{cbfin}{(0,0)}{b}{\varphi_k}
      \CaseARoot{cwfin}{(0,-1.45)}{a_i}{\varphi_k}
      \node[capsule, fit=(cbfin)] (ccapfin) {};
      \node[xlab, left=2pt of ccapfin,yshift=7pt] {$X_{k+2}$};
      \draw[edge] (cwfin) -- (cbfin);
      \node[font=\large] at (0,-2.55) {$M\mid^{k+2-i}\varphi_k$};
    \end{scope}

    \node[font=\Large] at (19.20,-0.15) {$\overset{\varphi_k}{\Longrightarrow}$};
    \node[font=\large] at (20.45,-0.15) {$\cdots$};

  \end{tikzpicture}
  \caption{The case with $\text{type}\,(w)=a_0$ and $M,w\models E_{Y_0}$, or (d) the case with $\text{type}\,(w)=a_i$ and $M,w\models E_{X_i}$ for some $1\leq i\leq k+1$. Here when $i=0$, the $X_i$ in the left model is replaced with $Y_0$ and there is a dot with labels $r$ and $\lnot\varphi_k$ inside the box.}
  \label{fig:type_transition_for_c_01k_but_not_01kplus1}
\end{subfigure}
\caption{$\varphi_k$ is non-trivially $01^k$-valid but not $01^{k+1}$-valid in multi-agent K45, KD45, and S5 (Lemma~\ref{lem:01k-valid_but_not_01kplus1-valid}).}
\label{fig: 0_k_but_not_0_k+1_kd45}

\end{figure}

\begin{lemma}[Existence of non-trivially $0^k$-valid but not $0^{k+1}$-valid formula in S5]\label{lem:0k-valid_but_not_0_plus1-validity}
  In multi-agent S5: For every $k\geq 2$, there is a formula that is non-trivially $0^k$-valid but not $0^{k+1}$-valid.
\end{lemma}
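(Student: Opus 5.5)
The plan is to adapt the construction of Lemma~\ref{lem:0k_but_not_0kplus1_multi_kd45} to S5. The only place that proof used a non-reflexive feature is the conjunct $\Box_a\bot$ in $B_j$. In S5, $\Box_a\bot$ can never hold in an initial model, but it can hold after one announcement. If every $a$-successor of $w$ (including $w$ itself, by reflexivity) falsifies $\varphi_k$, then all $a$-arrows out of $w$ are deleted. So I keep $A$, $C_j$, $B_1,\dots,B_{k-1}$ and $\varphi_k:=\lnot(B_0\lor\cdots\lor B_{k-1})$ exactly as in Lemma~\ref{lem:0k_but_not_0kplus1_multi_kd45}. I keep $B_0$ as well; its conjunct $\Diamond_a\top$ is now automatic, and by reflexivity $\Box_a(A\land\bigwedge_j\Diamond_b C_j)$ also constrains $w$ itself.

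\textbf{$0^k$-validity.} Let $M$ be an S5 model with $M,w\models\lnot\varphi_k$. Each $B_j$ with $j\geq1$ contains $\Box_a\bot$, which is false in a reflexive model, so $M,w\models B_0$.
\begin{enumerate}
\item Every $a$-successor $v$ of $w$ satisfies $A\land\bigwedge_j\Diamond_b C_j$. Since $R_a$ is an equivalence relation, $v$ also satisfies $\Box_a(A\land\bigwedge_j\Diamond_b C_j)$, hence $M,v\models B_0$. So every $a$-arrow out of $w$ is deleted at the first update, and $M|^n\varphi_k,w\models\Box_a\bot$ for all $n\geq1$.
\item Choose $s_j\in R_b(w)$ with $M,s_j\models C_j$, and reprove the Claim that $w(R_b|^n\varphi_k)s_m$ for $n\leq m\leq k-1$. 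Because $R_b$ is originally an equivalence relation, $s_mR_bs_n$ holds in $M$. This arrow survives to stage $n$ because $s_n$ satisfies $\varphi_k$ at all earlier stages, by the inductive hypothesis. Hence $\Diamond_b C_n$ holds at $s_m$ in $M|^n\varphi_k$, which rules out $B_m$ there. The formulas $B_0$ and $B_j$ for $j\neq m$ are ruled out by the valuation, as before.
\item The final minimal-index argument then gives $M|^n\varphi_k,w\models B_m$ for $1\le n\le k-1$, exactly as in Lemma~\ref{lem:0k_but_not_0kplus1_multi_kd45}.
\end{enumerate}
Throughout I would only use that the original relations are equivalence relations and that arrow deletion preserves transitivity and Euclideanness. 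This lets me avoid appealing to Lemma~\ref{lem:modal agreement lemma} in the updated, non-reflexive models.

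\textbf{Satisfiability and failure of $0^{k+1}$-validity.} Take $W=\{w,s_1,\dots,s_{k-1}\}$. Let $R_a$ be the identity relation (its classes are the singletons), let $R_b=W\times W$, and let every other agent's relation be $W\times W$. Set $V(P_0)=\{w\}$ and $V(P_j)=\{s_j\}$. This is an S5 model with $M,w\models B_0$. Each $s_j$ satisfies $\varphi_k$ initially, because $\Box_a\bot$ fails there. I then trace the updates as in Figure~\ref{fig:kd45_countermodel_0k_but_not_0_k+1}.
\begin{itemize}
\item At stage $1$, $w$ has lost its $a$-loop and $b$-loop. The state $s_1$ now satisfies $\Box_a\bot$? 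No: $s_1$ keeps its $a$-loop, since $s_1$ satisfied $\varphi_k$.
\end{itemize}
This is exactly the point to check. In the KD45 proof the $s_j$ had no $a$-arrow to themselves, and $B_1$ became true at $s_1$; here $s_1$ falsifies $\varphi_k$ at stage $1$ only if it satisfies some $B_j$. If its $a$-loop survives, $B_1$ fails at $s_1$. Then $s_1$ stays true, $w$ keeps seeing $s_1$ forever, and $w$ is stuck satisfying $B_1$. That would give pattern $0^\omega$, which is fine for validity but breaks the counterexample.

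\textbf{Main obstacle.} The hard part is arranging that the $s_j$ are deleted one at a time, so that $w$ eventually satisfies no $B_j$. What I would try first is to give each $s_j$ an $a$-class $\{s_j,t_j\}$, where $t_j$ is a fresh auxiliary state of a type that is false from the start. For example, $t_j$ can be made a copy satisfying $B_0$ whose $b$-class sees all of the $C_m$. Then after one update every $s_j$ satisfies $\Box_a\bot$ as well as $C_j$. From there the stage-by-stage falsification of $s_1,s_2,\dots$ proceeds exactly as in the KD45 trace. At stage $k$ no $\Diamond_b C_j$ holds at $w$, so $w$ satisfies $\varphi_k$, giving the pattern $0^k1^\omega$.

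Two things must then be rechecked.
\begin{enumerate}
\item The auxiliary states must not alter the $0^k$-validity argument. This should be automatic, since that argument holds for every S5 model.
\item The $s_j$ must not all fail simultaneously at stage $1$. This is guaranteed by the conjunct $\bigwedge_{m<j}\Box_b\lnot C_m$ in $B_j$, because $s_j$ still sees $s_1,\dots,s_{j-1}$ via $b$.
\end{enumerate}
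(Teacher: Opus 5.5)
\textbf{There is a genuine gap: the formula you keep cannot witness the failure of $0^{k+1}$-validity in S5.} Your $0^k$-validity argument is fine; it is even immediate, since S5 models are KD45 models. The auxiliary-state repair does not help.

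The obstruction is reflexivity. In $M|^n\varphi_k$ the loop $sR_as$ survives iff $s\in T_0\cap\cdots\cap T_{n-1}$, where $T_l=\llbracket\varphi_k\rrbracket_{M|^l\varphi_k}$. So $\Box_a\bot$ can hold at $s$ only if $s$ was already false at an earlier stage.

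Now let $s$ be any $C_j$-state of an S5 model. The valuation rules out $B_0$ and every $B_i$ with $i\neq j$. $B_j$ fails at stage $0$ because $\Box_a\bot$ fails there. Hence $s\in T_0$, its loop survives, and $B_j$ fails again at stage $1$. By induction, $s$ satisfies $\varphi_k$ at every stage.

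Consequently, if $M,w\models B_0$, the $b$-arrow from $w$ to its original $C_1$-successor is never deleted. Meanwhile $A\land\Box_a\bot$ holds at $w$ from stage $1$ on. So $M|^n\varphi_k,w\models B_1$ for every $n\geq 1$. Your $\varphi_k$ is therefore $0^\omega$-valid in S5, and no model whatsoever yields the pattern $0^k1$.

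Your concrete repair also fails on its own terms, in two ways:
\begin{itemize}
\item $t_j$ cannot satisfy $B_0$, and in fact cannot be false at stage $0$ at all. At stage $0$ only $B_0$ is available, and $t_jR_as_j$ together with $s_j\models\lnot A$ refutes $\Box_a A$ at $t_j$.
\item Even if the arrow to $t_j$ were removed, $s_j$ would keep its own $a$-loop. So $\Box_a\bot$ never becomes true at $s_j$.
\end{itemize}

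The missing idea is a falsity condition that a state can meet while it still sees itself. Then being false is what removes its own loop, rather than a prerequisite for it.

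The paper abandons the $\Box_a\bot$ device and uses a single-modality type formula. It takes types $a_0,\dots,a_k$, lets $E_X$ say that the successor types are exactly $X$, and sets $X_i=\{a_i,\dots,a_k\}$ and $X_{k+1}=\varnothing$. A state of type $a_i$ is false iff it satisfies $E_{X_\ell}$ for $\ell$ in a prescribed range: $0\leq\ell\leq k-1$ for $a_0$, $1\leq\ell\leq k$ for $a_1$, and $i\leq\ell\leq k+1$ for $i\geq 2$.

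In an initial S5 model, reflexivity forces a false state of type $a_i$ to satisfy $E_{X_i}$. At each stage, the states of the least type still in the successor set become false, while the higher types stay true. The newly false states delete all arrows into themselves, their own loops included.

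On the full S5 model over $\{w_{a_0},\dots,w_{a_k}\}$, the successor-type set thus shrinks $X_0\to X_1\to\cdots\to X_k\to\varnothing$. This gives $w_{a_0}$ the pattern $0^k1^\omega$. The construction works even with a single agent, whereas yours needs two.
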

\begin{proof}
  Fix $k \geq 2$ and let $A_k := \{a_0,a_1,\ldots,a_k\}$. For each $t\in A_k$, take a new propositional letter $P_t$, and define $\chi_t := P_t \wedge \bigwedge_{s\in A_k\setminus\{t\}}\neg P_s$. For each $X\subseteq A_k$, define $E_X := \Box \bigvee_{t\in X}\chi_t \wedge \bigwedge_{t\in X}\Diamond\chi_t$ with the convention that $E_\emptyset := \Box\bot$. For $0\leq i\leq k$, let $X_i := \{a_i,a_{i+1},\ldots,a_k\}$ and $X_{k+1}:=\emptyset$. Define the formula
  \begin{equation*} B_k := \left( \chi_{a_0}\wedge \bigvee_{\ell=0}^{k-1} E_{X_\ell} \right) \vee \left(\chi_{a_1}\land\bigvee_{\ell=1}^k E_{X_\ell}\right)\lor\bigvee_{i=2}^{k} \left( \chi_{a_i}\wedge \bigvee_{\ell=i}^{k+1} E_{X_\ell} \right)
  \end{equation*} and finally let $\varphi_k := \neg B_k$.

\smallskip\noindent\textbf{$\varphi_k$ is $0^k$-valid}
  We first show that $\varphi_k$ is $0^k$-valid. Let $M$ be any S5 model and suppose that $M,w\models\neg\varphi_k$, i.e. $M,w\models B_k$. Note that for all $0\leq i\leq k$, if $\text{type}(w)=a_i$, then we actually have
  \begin{equation*}
    M,w\models\chi_{a_i}\land E_{X_i}
  \end{equation*}
   since the successor type set of $w$ must contain $a_i$ by reflexivity.

  (a) If $\text{type}(w)=a_0$ and $M,w\models E_{X_0}$, $\varphi_k$ is false exactly at $w$ and $a_0$ in $M$ so after the announcement of $\varphi_k$, the successor type set of $w$ becomes $X_1$ as in Figure~\ref{fig:0k-valid_but_not_0_kplus1-valid_a}. Repeating this, we have the truth pattern $0^k1^\omega$-valid.

  (b) If $\text{type}(w)=a_1$ and $M,w\models E_{X_1}$, we have the truth pattern $0^k1^\omega$ as in Figure~\ref{fig:0k-valid_but_not_0_kplus1-valid_b}.

  (c) If $\text{type}(w)=a_i$ and $M,w\models E_{X_\ell}$ for some $2\leq i\leq k$ and $i\leq \ell\leq k+1$, we have the truth pattern $0^\omega$ as in Figure~\ref{fig:0k-valid_but_not_0_kplus1-valid_c}.

  Thus, $\varphi_k$ is $0^k$-valid.

\smallskip\noindent\textbf{$\varphi_k$ is not $0^{k+1}$-valid}
  It remains to show that $\varphi_k$ is $0^k$-satisfiable but not $0^{k+1}$-valid. Take the S5 model $M=(W,R,V)$ such that $W=\{w_{a_0},w_{a_1},\ldots,w_{a_k}\}$, $R=W\times W$, and $V(P_{a_i})=\{w_{a_i}\}$ for all $0\leq i\leq k$. Then, the announcements proceed as in case (a) so that $\varphi_k$ is $0^k$-satisfiable but not $0^{k+1}$-valid.

  Thus, $\varphi_k$ is non-trivially $0^k$-valid but not $0^{k+1}$-valid in multi-agent S5.  
\end{proof}

\begin{figure}[htbp]
  \centering

  \begin{subfigure}[t]{\linewidth}
    \hspace*{0.7cm}%
  \begin{tikzpicture}[
      >=Stealth,
      world/.style={circle, fill=black, inner sep=1.25pt},
      edge/.style={->, thick, shorten >=2pt, shorten <=2pt},
      capsule/.style={draw, rounded corners=7pt, inner xsep=13pt, inner ysep=6pt},
      tlabel/.style={inner sep=1pt},
      vlabel/.style={inner sep=1pt},
      xlab/.style={font=\small},
      scale=0.61,
      transform shape
    ]


    \begin{scope}[shift={(0,0)}]
      \CaseAState{za00}{(0,0.25)}{a_0}{\neg\varphi_k}
      \CaseAState{za10}{(0.95,0.25)}{a_1}{\varphi_k}
      \node (zdots0) at (1.75,0.25) {$\cdots$};
      \CaseAState{zak0}{(2.65,0.25)}{a_k}{\varphi_k}
      \CaseARoot{zw0}{(1.32,-1.45)}{a_0}{\neg\varphi_k}
      \node[capsule, fit=(za00)(za10)(zdots0)(zak0)] (zcap0) {};
      \node[xlab, left=2pt of zcap0] {$X_0$};
      \draw[edge] (zw0) -- (za00);
      \draw[edge] (zw0) -- (za10);
      \draw[edge] (zw0) -- (zak0);
      \node[font=\large] at (1.32,-2.55) {$M$};
    \end{scope}

    \node[font=\Large] at (4.25,-0.15) {$\overset{\varphi_k}{\Longrightarrow}$};

    \begin{scope}[shift={(5.25,0)}]
      \CaseAState{za11}{(0,0.25)}{a_1}{\neg\varphi_k}
      \CaseAState{za21}{(0.95,0.25)}{a_2}{\varphi_k}
      \node (zdots1) at (1.75,0.25) {$\cdots$};
      \CaseAState{zak1}{(2.65,0.25)}{a_k}{\varphi_k}
      \CaseARoot{zw1}{(1.32,-1.45)}{a_0}{\neg\varphi_k}
      \node[capsule, fit=(za11)(za21)(zdots1)(zak1)] (zcap1) {};
      \node[xlab, left=2pt of zcap1] {$X_1$};
      \draw[edge] (zw1) -- (za11);
      \draw[edge] (zw1) -- (za21);
      \draw[edge] (zw1) -- (zak1);
      \node[font=\large] at (1.32,-2.55) {$M\mid\varphi_k$};
    \end{scope}

    \node[font=\Large] at (9.55,-0.15) {$\overset{\varphi_k}{\Longrightarrow}$};
    \node[font=\large] at (10.65,-0.15) {$\cdots$};
    \node[font=\Large] at (11.75,-0.15) {$\overset{\varphi_k}{\Longrightarrow}$};

    \begin{scope}[shift={(12.85,0)}]
      \CaseAState{zakm1}{(0,0.25)}{a_{k-1}}{\neg\varphi_k}
      \CaseAState{zakm0}{(1.55,0.25)}{a_k}{\varphi_k}
      \CaseARoot{zwkm1}{(0.78,-1.45)}{a_0}{\neg\varphi_k}
      \node[capsule, fit=(zakm1)(zakm0)] (zcapkm1) {};
      \node[xlab, left=2pt of zcapkm1] {$X_{k-1}$};
      \draw[edge] (zwkm1) -- (zakm1);
      \draw[edge] (zwkm1) -- (zakm0);
      \node[font=\large] at (0.78,-2.55) {$M\mid^{k-1}\varphi_k$};
    \end{scope}


    \node[font=\Large] at (-0.35,-5.35) {$\overset{\varphi_k}{\Longrightarrow}$};

    \begin{scope}[shift={(2.05,-5.20)}]
      \CaseAState{zak}{(0,0)}{a_k}{\neg\varphi_k}
      \CaseARoot{zwk}{(0,-1.45)}{a_0}{\varphi_k}
      \node[capsule, fit=(zak)] (zcapk) {};
      \node[xlab, left=2pt of zcapk] {$X_k$};
      \draw[edge] (zwk) -- (zak);
      \node[font=\large] at (0,-2.55) {$M\mid^{k}\varphi_k$};
    \end{scope}

    \node[font=\Large] at (5.20,-5.35) {$\overset{\varphi_k}{\Longrightarrow}$};

    \begin{scope}[shift={(8.00,-5.20)}]
      \node[xlab] at (0,0) {$X_{k+1}(=\emptyset)$};
      \CaseARoot{zwkp1}{(0,-1.45)}{a_0}{\varphi_k}
      \node[font=\large] at (0,-2.55) {$M\mid^{k+1}\varphi_k$};
    \end{scope}

    \node[font=\Large] at (10.15,-5.35) {$\overset{\varphi_k}{\Longrightarrow}$};
    \node[font=\large] at (11.45,-5.35) {$\cdots$};

  \end{tikzpicture}
  \caption{The case with $\text{type}(w)=a_0$ and $M,w\models E_{X_0}$}
  \label{fig:0k-valid_but_not_0_kplus1-valid_a}
\end{subfigure}

\medskip

\begin{subfigure}[t]{\linewidth}
  \centering
  \hspace*{-0.7cm}%
  \begin{tikzpicture}[
      >=Stealth,
      world/.style={circle, fill=black, inner sep=1.25pt},
      edge/.style={->, thick, shorten >=2pt, shorten <=2pt},
      capsule/.style={draw, rounded corners=7pt, inner xsep=13pt, inner ysep=6pt},
      tlabel/.style={inner sep=1pt},
      vlabel/.style={inner sep=1pt},
      xlab/.style={font=\small},
      scale=0.61,
      transform shape
    ]


    \begin{scope}[shift={(0,0)}]
      \CaseAState{ba10}{(0,0.25)}{a_1}{\neg\varphi_k}
      \CaseAState{ba20}{(0.95,0.25)}{a_2}{\varphi_k}
      \node (bdots0) at (1.75,0.25) {$\cdots$};
      \CaseAState{bak0}{(2.65,0.25)}{a_k}{\varphi_k}
      \CaseARoot{bw0}{(1.32,-1.45)}{a_1}{\neg\varphi_k}
      \node[capsule, fit=(ba10)(ba20)(bdots0)(bak0)] (bcap0) {};
      \node[xlab, left=2pt of bcap0] {$X_1$};
      \draw[edge] (bw0) -- (ba10);
      \draw[edge] (bw0) -- (ba20);
      \draw[edge] (bw0) -- (bak0);
      \node[font=\large] at (1.32,-2.55) {$M$};
    \end{scope}

    \node[font=\Large] at (4.25,-0.15) {$\overset{\varphi_k}{\Longrightarrow}$};

    \begin{scope}[shift={(5.25,0)}]
      \CaseAState{ba21}{(0,0.25)}{a_2}{\neg\varphi_k}
      \CaseAState{ba31}{(0.95,0.25)}{a_3}{\varphi_k}
      \node (bdots1) at (1.75,0.25) {$\cdots$};
      \CaseAState{bak1}{(2.65,0.25)}{a_k}{\varphi_k}
      \CaseARoot{bw1}{(1.32,-1.45)}{a_1}{\neg\varphi_k}
      \node[capsule, fit=(ba21)(ba31)(bdots1)(bak1)] (bcap1) {};
      \node[xlab, left=2pt of bcap1] {$X_2$};
      \draw[edge] (bw1) -- (ba21);
      \draw[edge] (bw1) -- (ba31);
      \draw[edge] (bw1) -- (bak1);
      \node[font=\large] at (1.32,-2.55) {$M\mid\varphi_k$};
    \end{scope}

    \node[font=\Large] at (9.55,-0.15) {$\overset{\varphi_k}{\Longrightarrow}$};
    \node[font=\large] at (10.65,-0.15) {$\cdots$};
    \node[font=\Large] at (11.75,-0.15) {$\overset{\varphi_k}{\Longrightarrow}$};

    \begin{scope}[shift={(12.85,0)}]
      \CaseAState{bakm1}{(0,0.25)}{a_{k-1}}{\neg\varphi_k}
      \CaseAState{bakm0}{(1.55,0.25)}{a_k}{\varphi_k}
      \CaseARoot{bwkm2}{(0.78,-1.45)}{a_1}{\neg\varphi_k}
      \node[capsule, fit=(bakm1)(bakm0)] (bcapkm1) {};
      \node[xlab, left=2pt of bcapkm1] {$X_{k-1}$};
      \draw[edge] (bwkm2) -- (bakm1);
      \draw[edge] (bwkm2) -- (bakm0);
      \node[font=\large] at (0.78,-2.55) {$M\mid^{k-2}\varphi_k$};
    \end{scope}


    \node[font=\Large] at (-0.35,-5.35) {$\overset{\varphi_k}{\Longrightarrow}$};

    \begin{scope}[shift={(2.05,-5.20)}]
      \CaseAState{bak}{(0,0)}{a_k}{\neg\varphi_k}
      \CaseARoot{bwk}{(0,-1.45)}{a_1}{\neg\varphi_k}
      \node[capsule, fit=(bak)] (bcapk) {};
      \node[xlab, left=2pt of bcapk] {$X_k$};
      \draw[edge] (bwk) -- (bak);
      \node[font=\large] at (0,-2.55) {$M\mid^{k-1}\varphi_k$};
    \end{scope}

    \node[font=\Large] at (5.20,-5.35) {$\overset{\varphi_k}{\Longrightarrow}$};

    \begin{scope}[shift={(8.00,-5.20)}]
      \node[xlab] at (0,0) {$X_{k+1}(=\emptyset)$};
      \CaseARoot{bwkp1}{(0,-1.45)}{a_1}{\varphi_k}
      \node[font=\large] at (0,-2.55) {$M\mid^{k}\varphi_k$};
    \end{scope}

    \node[font=\Large] at (10.15,-5.35) {$\overset{\varphi_k}{\Longrightarrow}$};
    \node[font=\large] at (11.45,-5.35) {$\cdots$};

  \end{tikzpicture}
  \caption{The case with $\text{type}(w)=a_1$ and $M,w\models E_{X_1}$}
  \label{fig:0k-valid_but_not_0_kplus1-valid_b}
\end{subfigure}
\medskip
\begin{subfigure}[t]{\linewidth}
  \centering
  \hspace*{-0.7cm}%
  \begin{tikzpicture}[
      >=Stealth,
      world/.style={circle, fill=black, inner sep=1.25pt},
      edge/.style={->, thick, shorten >=2pt, shorten <=2pt},
      capsule/.style={draw, rounded corners=7pt, inner xsep=13pt, inner ysep=6pt},
      tlabel/.style={inner sep=1pt},
      vlabel/.style={inner sep=1pt},
      xlab/.style={font=\small},
      scale=0.61,
      transform shape
    ]


    \begin{scope}[shift={(0,0)}]
      \CaseAState{cai0}{(0,0.25)}{a_i}{\neg\varphi_k}
      \CaseAState{caip10}{(1.15,0.25)}{a_{i+1}}{\varphi_k}
      \node (cdots0) at (2.05,0.25) {$\cdots$};
      \CaseAState{cak0}{(2.95,0.25)}{a_k}{\varphi_k}
      \CaseARoot{cw0}{(1.48,-1.45)}{a_i}{\neg\varphi_k}
      \node[capsule, fit=(cai0)(caip10)(cdots0)(cak0)] (ccap0) {};
      \node[xlab, left=2pt of ccap0] {$X_i$};
      \draw[edge] (cw0) -- (cai0);
      \draw[edge] (cw0) -- (caip10);
      \draw[edge] (cw0) -- (cak0);
      \node[font=\large] at (1.48,-2.55) {$M$};
    \end{scope}

    \node[font=\Large] at (4.55,-0.15) {$\overset{\varphi_k}{\Longrightarrow}$};

    \begin{scope}[shift={(5.65,0)}]
      \CaseAState{caip11}{(0,0.25)}{a_{i+1}}{\neg\varphi_k}
      \CaseAState{caip21}{(1.30,0.25)}{a_{i+2}}{\varphi_k}
      \node (cdots1) at (2.20,0.25) {$\cdots$};
      \CaseAState{cak1}{(3.10,0.25)}{a_k}{\varphi_k}
      \CaseARoot{cw1}{(1.55,-1.45)}{a_i}{\neg\varphi_k}
      \node[capsule, fit=(caip11)(caip21)(cdots1)(cak1)] (ccap1) {};
      \node[xlab, left=2pt of ccap1] {$X_{i+1}$};
      \draw[edge] (cw1) -- (caip11);
      \draw[edge] (cw1) -- (caip21);
      \draw[edge] (cw1) -- (cak1);
      \node[font=\large] at (1.55,-2.55) {$M\mid\varphi_k$};
    \end{scope}

    \node[font=\Large] at (10.35,-0.15) {$\overset{\varphi_k}{\Longrightarrow}$};
    \node[font=\large] at (11.45,-0.15) {$\cdots$};
    \node[font=\Large] at (12.55,-0.15) {$\overset{\varphi_k}{\Longrightarrow}$};

    \begin{scope}[shift={(13.80,0)}]
      \CaseAState{cakfinal}{(0,0.25)}{a_k}{\neg\varphi_k}
      \CaseARoot{cwfinal}{(0,-1.45)}{a_i}{\neg\varphi_k}
      \node[capsule, fit=(cakfinal)] (ccapk) {};
      \node[xlab, left=2pt of ccapk] {$X_k$};
      \draw[edge] (cwfinal) -- (cakfinal);
      \node[font=\large] at (0,-2.55) {$M\mid^{k-i}\varphi_k$};
    \end{scope}


    \node[font=\Large] at (-0.35,-5.35) {$\overset{\varphi_k}{\Longrightarrow}$};

    \begin{scope}[shift={(2.15,-5.20)}]
      \node[xlab] at (0,0) {$X_{k+1}(=\emptyset)$};
      \CaseARoot{cwempty}{(0,-1.45)}{a_i}{\neg\varphi_k}
      \node[font=\large] at (0,-2.55) {$M\mid^{k-i+1}\varphi_k$};
    \end{scope}

    \node[font=\Large] at (4.45,-5.35) {$\overset{\varphi_k}{\Longrightarrow}$};
    \node[font=\large] at (5.75,-5.35) {$\cdots$};

  \end{tikzpicture}
  \caption{The case with $\text{type}(w)=a_i$ and $M,w\models E_{X_\ell}$ for some $2\leq i\leq k$ and $i\leq \ell\leq k+1$.}
  \label{fig:0k-valid_but_not_0_kplus1-valid_c}
\end{subfigure}
\caption{$\varphi_k$ is non-trivially $0^k$-valid but not $0^{k+1}$-valid in multi-agent S5 (Lemma~\ref{lem:0k-valid_but_not_0_plus1-validity}).}
\label{fig:Existence_of_0_k-valid_but_not_0_k+1-valid_s5}
\end{figure}

\begin{lemma}[Existence of $101$-validity]\label{lem:101-validity}
  In multi-agent KD45 and S5: There is a non-trivially $101$-valid formula.
\end{lemma}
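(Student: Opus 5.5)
We take the witness already used in the proof of Lemma~\ref{lem:10k-validity_collapse}, namely
\begin{equation*}
  \varphi:=(p\wedge\Diamond_a p\wedge\Diamond_a\neg p)\vee\Box_a\bot ,
\end{equation*}
and show that it is both $101$-valid and $101$-satisfiable in multi-agent KD45 and in S5. For satisfiability we reuse the S5 (hence KD45) model from that proof: $W=\{w,u,v\}$, $R_i=W\times W$, $V(p)=\{w,u\}$. As computed there, $M,w\models\varphi$ and $M|\varphi,w\models\neg\varphi$. We only add the check that $M|^2\varphi,w\models\varphi$. In $M|\varphi$ both $w$ and $u$ falsify $\varphi$, while $v$ satisfies $\varphi$ but is never an $R_a$-target after the first update. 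So $T_0\cap T_1=\varnothing$, all $a$-arrows are gone in $M|^2\varphi$, and $\Box_a\bot$ holds at $w$.

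For validity, fix a KD45 model $M$ with $M,w\models\varphi$. Seriality rules out $\Box_a\bot$ at $w$, so $M,w\models p\wedge\Diamond_a p\wedge\Diamond_a\neg p$. The second digit, $M|\varphi,w\models\neg\varphi$, was already established in the proof of Lemma~\ref{lem:10k-validity_collapse}. For the third digit we must show $M|^2\varphi,w\models\Box_a\bot$, i.e.\ that no $a$-successor of $w$ in $M|\varphi$ satisfies $\varphi$ in $M|\varphi$. Take $u\in R_a|\varphi(w)$. Since $M|\varphi$ is K45, Lemma~\ref{lem:modal agreement lemma} lets $a$-modal formulas transfer from $w$ to $u$. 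Every $a$-successor of $w$ in $M|\varphi$ satisfies $p$, so $M|\varphi,w\models\Box_a p$, and hence $M|\varphi,u\models\Box_a p$, which refutes $\Diamond_a\neg p$ at $u$. Also $R_a|\varphi(w)\neq\varnothing$ (it contains the original $p$-successors of $w$, which satisfy $\varphi$), so $M|\varphi,w\models\Diamond_a\top$, and transferring this gives $M|\varphi,u\not\models\Box_a\bot$. Hence $M|\varphi,u\not\models\varphi$, so $u$ loses its incoming arrow in the second update. Therefore $R_a|^2\varphi(w)=\varnothing$ and $M|^2\varphi,w\models\varphi$.

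The main subtlety is the non-emptiness of $R_a|\varphi(w)$, which underlies both the refutation of $\Box_a\bot$ at $w$ in $M|\varphi$ and its transfer to $u$. Seriality is lost after the update, so we cannot invoke it. Instead we argue directly: $M,w\models\Diamond_a p$ gives some $a$-successor $s$ with $p$; by Lemma~\ref{lem:modal agreement lemma} in $M$, the state $s$ satisfies $\Diamond_a p\wedge\Diamond_a\neg p$, hence $\varphi$, so the arrow from $w$ to $s$ survives. Since the argument uses only agent $a$, KD45 and S5 are handled uniformly, with S5 models being in particular KD45.
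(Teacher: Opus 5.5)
Your proof is correct and matches the paper's own argument, which is given in the footnote to Lemma~\ref{lem:10k-validity_collapse}: the same witness $(p\wedge\Diamond_a p\wedge\Diamond_a\neg p)\vee\Box_a\bot$, the same three-state S5 model, and the same use of the K45 property of $M|\varphi$ to transfer $\Box_a p$ and $\Diamond_a\top$ from $w$ to each surviving $a$-successor. One slip is harmless: in $M|\varphi$ the state $v$ does not satisfy $\varphi$, since it has $\neg p$ and the nonempty successor set $\{w,u\}$, but this does not matter because $v\notin T_0$, so $T_0\cap T_1=\varnothing$ still holds.
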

\begin{proof}
See the footnote in Lemma~\ref{lem:10k-validity_collapse}.
\end{proof}

\section{Discussion}\label{sec:Discussions}
\paragraph*{Interpretation of our results}
In this paper, we proved the following result:
\begin{enumerate}[label=\textup{(\arabic*)}]
    \item In multi-agent K45:
          \begin{equation*}
            S=[0^\omega]\sqcup \bigsqcup_{k\geq 1}[01^k]\sqcup[01^\omega]\sqcup[10^\omega]\sqcup[1^\omega].
          \end{equation*}
    \item In single-agent KD45:
          \begin{equation*}
            S=[0^\omega]\sqcup \bigsqcup_{k\geq 1}[01^k]\sqcup[01^\omega]\sqcup[10]\sqcup[10^\omega]\sqcup[101^\omega ]\sqcup[1^\omega].
          \end{equation*}
    \item In multi-agent KD45 with $|G|\geq 2$ and multi-agent S5:
          \begin{equation*}
            S=[0^\omega]\sqcup\bigsqcup_{k\geq 2}[0^k]\sqcup\bigsqcup_{k\geq 1}[01^k]\sqcup[01^\omega]\sqcup[10]\sqcup[10^\omega]\sqcup[101^\omega]\sqcup[1^\omega].
          \end{equation*}
  \end{enumerate}

These classifications have several implications for the four notions of success, self-refutation, true lies, and impossible lies, as well as on the properties of iterated announcements in general. In this analysis, we restrict our attention mainly to KD45 and S5. 

First, every successful formula remains true forever when initially true. On the other hand, not every impossible lie remains false forever when initially false. So, successful formulas are \emph{stable} while impossible lies are not stable in general.

Second, although not every self-refuting formula becomes false forever when initially true, if it becomes false again, it remains false forever. On the other hand, not every true lie becomes true forever when initially false, and this applies no matter how many times a formula stays true when initially false.

These facts reflect the asymmetry between truthful and false announcements. Roughly speaking, truthful announcements are stable while false announcements are fragile (i.e., truths created by lying are fragile \footnote{Here, ``lying'' is just used as an alternative for ``false announcement''. However, lying that $\varphi$ typically requires the speaker to believe $\lnot\varphi$ and sometimes requires the intention to deceive the listener. Note that neither PAL or BPAL model speakers, per se. For details, see \citep{vanDitmarschEtAl2012,vanDitmarsch2014,SakamaCaminadaHerzig2015,LiVanEijck2022}.}). A more fine-grained perspective would be that truthful announcements are like the win-loss record of a \emph{best-of-three match} for one player when 1 and 0 represent win and loss, respectively: Note that when $\sigma\in S$ starts with 1, the truth stabilizes at 0 once two 0s appear in $\sigma$, and the truth stabilizes at 1 once two 1s appear in $\sigma$. This asymmetry is rooted in the monotonicity of (believed) public announcements, where agents delete false possibilities forever.

The splitting of $[10]$ in K45 into $[10],[10^\omega],[101^\omega]$ in KD45 and S5 occurs because K45 is closed under updates while KD45 and S5 are not closed due to a possible seriality failure. Recall that the formula $\varphi:=(p\land\Diamond p\land\Diamond \lnot p)\lor\Box\bot$ in Lemmas~\ref{lem:10k-validity_collapse} and \ref{lem:101-validity} is non-trivially $101^\omega$-valid. The proof in KD45 went as follows: If $\varphi$ is true, only $p\land\Diamond p\land\Diamond \lnot p$ is true by initial seriality. So, an agent believes $p\land\Diamond p\land\Diamond\lnot p$ after the first announcement. In particular, he believes $p$, which makes $\Diamond\lnot p$ false and hence $p\land\Diamond p\land\Diamond\lnot p$ false. Therefore, $\varphi$ is false after the first announcement. However, the agents's belief becomes inconsistent after the second announcement, making $\Box\bot$ true. This makes $\varphi$ true after the second announcement onwards.  

Another point worth noting is that not all impossible lies are stable in multi-agent KD45 with $|G|\geq 2$ and S5, while all impossible lies are stable in multi-agent K45 and single-agent KD45. We could make three comparisons, (1) multi-agent KD45 vs single-agent KD45, (2) single-agent S5 and single-agent KD45, and (3) multi-agent KD45 vs multi-agent K45. The comparison suggests two distinct sources of instability of impossible lies: factivity already suffices in single-agent S5, whereas in KD45 instability requires the combination of seriality and interaction between multiple agents.   
\paragraph*{PAL and BPAL}
As for logics of public announcements, BPAL is considered to have several advantages over PAL at least in terms of generality and uniformity. As mentioned in the introduction, in BPAL, we can still consider the truth of $\varphi$ at $w$ after an update even if $\varphi$ was initially false while this is impossible in PAL. In particular, the updated model $M_{p}$, for example, must become empty or undefined when $p$ was initially false at all states in $M$. Of course, the definition of $M,w\models[!\varphi]\psi$ as $M,w\models\varphi\Rightarrow M_\varphi,w\models\psi$ allows the evaluation of $[!\varphi]\psi$ at any state in the initial model but this does not solve the problem. 

Although PAL is typically intended for S5, \cite{hollidayIcard2010} required $M,w\models\Diamond\varphi\land\varphi$ as a precondition for public announcements to deal with both KD45 and S5, saying, ``Since we are also working with KD45, we additionally require that $\varphi$ be true at an accessible point, so that $M_\varphi$ is a quasi-partition provided $M$ is.''\footnote{Here, quasi-partition refers to serial, transitive, and Euclidean models.} However, requiring $\Diamond\varphi$ only allows announcements that an agent considers possible and hence cannot deal with announcements that are unexpected or contradict the agents' beliefs.
\paragraph*{Future work}
A natural future direction is to add the common knowledge operator $C_G$ to $\mathcal{L}_{\text{BPAL}}$ since public announcements are inherently connected with common knowledge. We could also consider the classification results for transfinite iterated announcements, in which a formula is announced transfinitely many times through ordinals. 

\backmatter
\bmhead{Acknowledgements}
I thank Ryo Kashima and Koki Okura for their comments and feedback during seminars.

The author used GPT-5.5 Pro and GPT-5.5 for reasoning, coding, drawing figures, and proofreading but the entire manuscript was written by the author, who takes full responsibility for the final content. 
\section*{Statements and Declarations}
\bmhead{Funding}
This research was supported by the Science Tokyo Support Program for Doctoral Students, funded by the Universities for International Research Excellence.
\bmhead{Competing interests}
The author has no competing interests to declare.

\bibliography{bibtex_for_classification_of_sigma_validity}
\end{document}